\documentclass[a4paper,11pt]{article}

\usepackage[utf8]{inputenc}
\usepackage{bm,framed}

\usepackage[twoside,a4paper,margin=1in]{geometry}
\usepackage{microtype}
\usepackage{graphicx,url,amsmath,amsthm,amsfonts,amssymb,bbm,bm,enumitem}
\usepackage{hyperref}
\usepackage{mathrsfs}
\usepackage[dvips]{epsfig}
\usepackage{thm-restate}
\usepackage{float}
\usepackage{color}

\theoremstyle{plain}
\newtheorem{theorem}{Theorem}[section]
\newtheorem{proposition}[theorem]{Proposition}
\newtheorem{conjecture}[theorem]{Conjecture}
\newtheorem{corollary}[theorem]{Corollary}
\newtheorem{claim}[theorem]{Claim}
\newtheorem{lemma}[theorem]{Lemma}
\newtheorem{remark}[theorem]{Remark}

\definecolor{blueblack}{rgb}{0,0,.7}
\newcommand{\emphdef}[1]{%
  \textcolor{blueblack}{%
    \textbf{\emph{#1}}%
  }%
}

\newcommand{\eps}{\varepsilon}
\newcommand{\floor}[1]{\lfloor #1\rfloor}
\newcommand{\poly}{\text{poly}}

\newcommand{\tw}{\textup{tw}}
\newcommand{\Diag}{\textup{Diag}}
\newcommand{\REGULARCSP}{\textup{REGCSP}}
\newcommand{\CSP}{\textup{CSP}}
\newcommand{\GT}{\textup{GT}}
\newcommand{\MCone}{\textup{MC1}}
\newcommand{\MCtwo}{\textup{MC2}}
\newcommand{\MC}{\textup{MC}}
\newcommand{\CG}{\textup{CG}}
\newcommand{\out}{\textup{out}}
\newcommand{\surf}{\mathscr{S}}
\newcommand{\Input}{\textsc{Input}}
\newcommand{\Output}{\textsc{Output}}

\begin{document}

\title{Almost Tight Lower Bounds for Hard Cutting Problems in Embedded Graphs}


\author{Vincent Cohen-Addad\thanks{Sorbonne Universit\'es, UPMC Univ Paris 06, CNRS, LIP6, Paris, France, \url{vcohenad@gmail.com}} \and
Éric Colin de Verdière\thanks{LIGM, CNRS, Univ Gustave Eiffel, Marne-la-Vall\'ee, France, \url{eric.colindeverdiere@u-pem.fr}} \and
Dániel Marx\thanks{CISPA Helmholtz Center for
Information Security, Germany, \url{marx@cispa.de}} \and
Arnaud de Mesmay\thanks{LIGM, CNRS, Univ Gustave Eiffel, Marne-la-Vall\'ee,France, \url{ademesmay@gmail.com}}}

\maketitle
\begin{abstract}
We prove essentially tight lower bounds, conditionally to the Exponential Time Hypothesis, for two fundamental but seemingly very different cutting problems on surface-embedded graphs: the \textsc{Shortest Cut Graph} problem and the \textsc{Multiway Cut} problem.

A cut graph of a graph~$G$ embedded on a surface~$\surf$ is a subgraph of~$G$ whose removal from~$\surf$ leaves a disk.  We consider the problem of deciding whether an unweighted graph embedded on a surface of genus~$g$ has a cut graph of length at most a given value.  We prove a time lower bound for this problem of $n^{\Omega(g/\log g)}$ conditionally to the ETH. In other words, the first $n^{O(g)}$-time algorithm by Erickson and Har-Peled [SoCG 2002, Discr.\ Comput.\ Geom.\ 2004] is essentially optimal.  We also prove that the problem is W[1]-hard when parameterized by the genus, answering a 17-year old question of these authors.

A multiway cut of an undirected graph~$G$ with $t$ distinguished vertices, called \emph{terminals}, is a set of edges whose removal disconnects all pairs of terminals.  We consider the problem of deciding whether an unweighted graph~$G$ has a multiway cut of weight at most a given value.  We prove a time lower bound for this problem of $n^{\Omega(\sqrt{gt + g^2+t}/\log(g+t))}$, conditionally to the ETH, for any choice of the genus~$g\ge0$ of the graph and the number of terminals~$t\ge4$.  In other words, the algorithm by the second author [Algorithmica 2017] (for the more general multicut problem) is essentially optimal; this extends the lower bound by the third author [ICALP 2012] (for the planar case).

Reductions to planar problems usually involve a grid-like structure.  The main novel idea for our results is to understand what structures instead of grids are needed if we want to exploit optimally a certain value~$g$ of the genus.
\end{abstract}
\section{Introduction}\label{S:intro}

During the past decade, there has been a flurry of works investigating the complexity of solving exactly optimization problems on planar graphs, leading to what was coined as the ``square root phenomenon'' by the third author~\cite{m-srppg-13}: many problems turn out to be easier on planar graphs, and the improvement compared to the general case is captured exactly by a square root.
 For instance, problems solvable in time $2^{O(n)}$ in general graphs can be solved in time $2^{O(\sqrt{n})}$ in planar graphs, and similarly, in a parameterized setting, FPT problems admitting $2^{O(k)}n^{O(1)}$-time algorithms or W[1]-hard problems admitting $n^{O(k)}$-time algorithms can often be sped up to $2^{\widetilde{O}(\sqrt{k})}n^{O(1)}$ and $n^{\widetilde{O}(\sqrt{k})}$, respectively, when restricted to planar graphs. 
 We have many examples where matching upper bounds (algorithms) and lower bounds (complexity reductions) show that indeed  the best possible running time for the problems has this form. On the side of upper bounds, the improvement often stems from the fact that planar graphs have planar separators (and thus treewidth) of size $O(\sqrt{n})$, and the theory of bidimensionality provides an elegant framework for a similar speedup in the parameterized setting for some problems~\cite{dfht-spabgg-05}. However, in many cases these algorithms rely on highly problem-specific arguments \cite{c-mpbgg-17,mpp-spast-18,km-spktc-12,MarxP15,DBLP:conf/soda/ChitnisHM14,DBLP:conf/fsttcs/LokshtanovSW12, DBLP:conf/focs/FominLMPPS16}.
  The lower bounds are conditional to the Exponential Time Hypothesis (ETH) of Impagliazzo, Paturi, and Zane~\cite{IPZ98} and follow from careful reductions from problems displaying this phenomenon, e.g., \textsc{Planar 3-Coloring}, \textsc{$k$-Clique}, or \textsc{Grid Tiling}. We refer to the recent book~\cite{cfklm-pa-15} for precise results along these lines.

While the theme of generalizing algorithms from planar graphs to surface-embedded graphs has attracted a lot of attention, and has flourished into an established field mixing algorithmic and topological techniques (see~\cite{c-ctgs-18}), the same cannot be said at all of the lower bounds. Actually, up to our knowledge, there are very few works explicitly establishing algorithmic lower bounds based on the genus of the surfaces on which a graph is embedded, or even just hardness results when parameterized by the genus---the only ones we are aware of are the exhaustive treatise~\cite{mp-eyawk-14} of the third author and Pilipczuk on \textsc{Subgraph Isomorphism}, where some of the hardness results feature the genus of the graph, the lower bounds of Curticapean and the third author~\cite{cm-tclbcp-16} on the problem of counting perfect matchings, and the work of Chen, Kanj, Perkovi\'c, Sedgwick, and Xia~\cite{ckpsx-gcccg-07}.

In this work, we address this surprising gap by providing lower bounds conditioned on the ETH for two fundamental yet seemingly very different cutting problems on surface-embedded graphs: the \textsc{Shortest Cut Graph} problem and the \textsc{Multiway Cut} problem. In both cases, our lower bounds match the best known algorithms up to a logarithmic factor in the exponent. We believe that the tools that we develop in this paper could pave the way towards establishing lower bounds for other problems on surface-embedded graphs.

\subsection{The shortest cut graph problem}

A \emph{cut graph} of an edge-weighted graph~$G$ cellularly embedded on a surface~$\surf$ is a subgraph of~$G$ that has a unique face, which is a disk (note that such a cut graph only exists when $G$ is cellularly embedded).  Computing a shortest cut graph is a fundamental problem in algorithm design, as it is often easier to work with a planar graph than with a graph embedded on a surface of positive genus, since the
large toolbox that has been designed for planar graphs becomes available. Furthermore, making
a graph planar is useful for various purposes in computer graphics and mesh
processing, see, e.g.,~\cite{whds-reti-04}. Be it for a practical or a theoretical goal, a natural measure of the distortion induced by the cutting step is the length of the topological decomposition.

Thus, the last decade has witnessed a lot of effort on how to obtain efficient algorithms for the problems
of computing short topological decompositions, see for example the survey~\cite{c-ctgs-18}. For the shortest cut graph problem, Erickson and Har-Peled~\cite{eh-ocsd-04} showed that the problem is NP-hard when the genus is considered part of the input and gave an exact algorithm running
in time $n^{O(g)}$, where $n$ is the size of
the input graph and $g$ the genus of the surface, together with an $O(\log^2 g)$-approximation running in time
$O(g^2 n \log n)$. The first and fourth authors~\cite{cm-fptas-15} gave a $(1+\eps)$-approximation algorithm running
in time $f(\eps,g)n^3$, where $f$ is some explicit computable function.
Whether it is possible to improve upon the exact algorithm of Erickson and Har-Peled by
designing an FPT algorithm for the problem, namely an exact algorithm running in time $f(g) n^{O(1)}$, has been raised by these authors~\cite[Conclusion]{eh-ocsd-04} and has remained an open question over the last 17 years.

In this paper, we solve this question by proving that the result of Erickson and Har-Peled cannot be significantly improved.  We indeed show a lower bound of $n^{\Omega(g/\log g)}$ (for the associated decision problem, even in the unweighted case) assuming the Exponential Time Hypothesis (ETH) of Impagliazzo, Paturi, and Zane~\cite{IPZ98} (see Definition~\ref{conj:ETH}), and also prove that the problem is W[1]-hard.  More formally, we consider the following decision problem:\\
\begin{framed}
\textsc{Shortest Cut Graph}:\\
  \Input: An unweighted graph~$G$ with $n$ vertices and edges cellularly embedded on an orientable surface of genus~$g$, and an integer~$\lambda$.\\
  \Output: Whether $G$ admits a cut graph of length at most~$\lambda$.
\end{framed}
  
 \begin{theorem}\label{T:maincutgraph}
   \begin{enumerate}
      \item The \textsc{Shortest Cut Graph} problem is W[1]-hard when parameterized by~$g$.
      \item Assuming the ETH, there exists a universal constant $\alpha_{\CG}>0$ such that for any fixed integer $g\geq 0$, there is no algorithm solving all the \textsc{Shortest Cut Graph} instances of genus at most $g$ in time $O(n^{\alpha_{\CG} \cdot (g+1)/\log (g+2)})$. 
  \end{enumerate}
 \end{theorem}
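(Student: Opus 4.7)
The plan is to prove both parts of the theorem by a single polynomial-time reduction from a problem whose W[1]-hardness and ETH-based lower bound are well understood, namely \textsc{Partitioned Subgraph Isomorphism} (PSI) or a suitably ``regular'' flavour of CSP with $k$ variables and $O(k)$ constraints in the spirit of the \REGULARCSP{} macro introduced in this paper. Both are W[1]-hard when parameterized by the constraint/edge count, and Marx's classical result shows that under ETH neither admits an $f(k)\cdot n^{o(k/\log k)}$-time algorithm. If we can transform such an instance of parameter $k$ into a \textsc{Shortest Cut Graph} instance on an orientable surface of genus $g=\Theta(k)$ in polynomial time, this lower bound transfers directly to the claimed $n^{\Omega(g/\log g)}$ lower bound, and the same reduction yields W[1]-hardness with parameter $g$.

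The core topological idea exploits the fact that a cut graph of a genus-$g$ orientable surface must carry the full first homology of the surface and thus contains at least $2g$ independent non-contractible cycles. This suggests a surface built from one handle per variable plus additional handles per constraint, each handle forcing a localized ``choice'' by the cut graph. For each variable $i$ I would attach a \emph{variable gadget} consisting of a handle equipped with a graph whose cheapest cut is parameterized by the value $v_i$ selected for variable $i$ from its domain. For each constraint relating variables $i,j$ I would glue a corridor carrying a \emph{constraint gadget} between the two variable gadgets, designed so that the cut graph can close its loops at low cost through this corridor exactly when the committed pair $(v_i,v_j)$ is an allowed pair of the constraint. Setting the budget $\lambda$ to the exact cost of a canonical honest cut graph, any assignment satisfying the source instance yields a cut graph of length at most $\lambda$.

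The novelty advertised in the abstract --- using structures other than grids to exploit the genus --- should enter in designing these gadgets. Standard grid-tiling reductions use one row and one column of the grid per variable, consuming a parameter of order $k^2$ to encode $k$ choices; instead, here each handle must individually encode one variable, so that the genus scales linearly in $k$ rather than quadratically, which is precisely what is required to match the $n^{O(g)}$ upper bound of Erickson and Har-Peled. Because the input is unweighted, any graduated lengths inside the gadgets would be simulated by edge subdivisions; the resulting polynomial blow-up in the instance size does not affect the asymptotic ETH bound.

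The main obstacle, as usual in cut-graph problems, is the converse direction: arguing that any cut graph meeting the budget $\lambda$ must in fact decode to a valid assignment of the source instance. A cut graph is a purely topological object whose only intrinsic requirement is that its complement is a disk, which leaves significant freedom in how its cycles may route through the gadgets and between them. I would approach this by extracting from any short cut graph a canonical system of $2g$ generators of the first homology via standard topological surgery, localizing each generator to the handle on which its homology class lives, and using precise length accounting, both inside each gadget and along the corridors, to recover well-defined values $v_i$ and to verify that all constraint gadgets were traversed in a ``satisfying'' manner. Ruling out homologically exotic cut graphs that cheat the budget is where the bulk of the technical work should lie, and is where I expect the novel non-grid structure announced in the introduction to be essential.
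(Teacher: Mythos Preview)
Your high-level strategy is sound and matches the paper's: reduce from a CSP-flavoured problem with $k$ variables/constraints whose primal graph has linear treewidth (so that Marx's $n^{\Omega(k/\log k)}$ bound applies), build a surface of genus $\Theta(k)$, and encode the instance so that a short cut graph exists exactly when the source instance is satisfiable. But your proposal is a plan, not a proof: the two places where all the content lies---the gadget design and the converse argument---are precisely the places you leave open, and your sketches for both differ substantially from what the paper actually does and would be hard to complete as stated.

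Concretely, the paper does not put a separate ``variable gadget'' on each handle and ``constraint gadgets'' on corridors, nor does it analyse an arbitrary short cut graph by extracting a canonical homology basis and localising generators. Instead it passes through an intermediate problem, \textsc{4-Regular Graph Tiling} on a bipartite $4$-regular graph~$\Gamma$ with $k$ vertices (this is the paper's replacement for \textsc{Grid Tiling}), and builds the surface~$\surf$ directly from~$\Gamma$ (one four-holed sphere per vertex, one tube per edge), giving genus $k+1$. The single gadget used everywhere is the \emph{dual} of Marx's cross gadget from his \textsc{Planar Multiway Cut} hardness proof: a planar graph on a disk whose minimum ``dual multiway cuts'' of cost~$D_1$ are in bijection with the allowed pairs $(i,j)\in S$. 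One such dual cross gadget is placed at every vertex of a cut graph $\Gamma'$ of~$\surf$ (obtained from $\Gamma$ by adding a loop on each non-tree edge), and adjacent gadgets are glued along sides.

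The converse is then not a homology-basis extraction at all. The paper observes the elementary fact that a simple closed curve crossing a reduced cut graph exactly once must be non-contractible; this forces the restriction of any cut graph to each dual cross gadget to separate the four ``terminal faces'', i.e.\ to contain a dual multiway cut. Exact cost accounting ($(2k+1)D_1$ total over $2k+1$ gadgets, each of minimum cost $D_1$) then pins every restriction to an optimal dual multiway cut representing some $(i,j)\in S$, and the same non-contractibility argument forces neighbouring gadgets to agree, recovering a tiling. Your proposed localisation-by-surgery route would have to contend with cut graphs that thread several gadgets at once and do not decompose into one loop per handle; the paper sidesteps this entirely by arguing \emph{per gadget} via the dual multiway cut structure rather than \emph{per homology class}.
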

In the second item, the strange-looking additive constants are just here to ensure that the theorem is still correct for the values $g=0$ and $g=1$.

\begin{remark}\label{R:constants}
Let us observe that it is sufficient to prove Theorem~\ref{T:maincutgraph} in the case where the genus $g$ is assumed to be at least some constant $\bar g$.  Indeed, assume we did so.  \textsc{Shortest Cut Graph} admits a trivial, unconditional, lower bound of $\Omega(n)$, as any algorithm has to read the input. Therefore, if we take $\alpha_\CG$ small enough so that $\alpha_\CG \cdot (g+1)/\log (g+2)<1$ for any $g\le \bar g$, then this trivial lower bound finishes the proof for the remaining constant number of values. The exact same remark will apply to the proofs of most theorems establishing lower bounds in the paper: it will suffice to prove them for the case when the parameter is larger than a certain constant, the remaining cases being trivial.
\end{remark}

\subsection{The multiway cut problem}

The second result of our paper concerns the \textsc{Multiway Cut} problem (also known as the \textsc{Multiterminal Cut} problem).  Given an edge-weighted graph~$G$ together with a subset~$T$ of $t$ vertices called terminals, a multiway cut is a set of edges whose removal disconnects all pairs of terminals.  Computing a minimum-weight multiway cut is a classic problem that generalizes the minimum $s-t$ cut problem and some closely related variants have been actively studied since as early as 1969~\cite{Hu63}. On general graphs, while the problem is polynomial-time solvable for $t=2$, it becomes NP-hard for any fixed $t\geq 3$, see~\cite{djpsy-cmc-94}. In the case of planar graphs, it remains NP-hard if $t$ is arbitrarily large, but can be solved in time $2^{O(t)}n^{O(\sqrt{t})}$, where $n$ is the number of vertices and edges of the graph~\cite{km-spktc-12}, and a lower bound of $n^{\Omega(\sqrt{t})}$ was proved (conditionally on the ETH) by the third author~\cite{m-tlbpmc-12}.  A generalization to higher-genus graphs was recently obtained by the second author~\cite{c-mpbgg-17} who devised an algorithm running in time\footnote{Note that \cite{c-mpbgg-17} states the exponent slightly imprecisely in the form $O(\sqrt{gt+g^2})$, which is not correct for, e.g., $g=0$. The current form (or, more precisely, an upper bound of $f(g,t)\cdot n^{c\sqrt{gt+g^2+t}}$ for some $c>0$) is correct for every $g\ge 0$ and $t\ge 1$.} $f(g,t)\cdot n^{O(\sqrt{gt + g^2+t})}$ 
in graphs of genus $g$, for some function~$f$.  If one allows some approximation, this can be significantly improved: three of the authors recently provided a $(1+\varepsilon)$-approximation algorithm running in time $f(\varepsilon,g,t) \cdot n \log n$~\cite{ccm-nlasm-18}.  The latter two results are actually valid for the more general \textsc{Multicut} problem, in which one looks for a minimum-weight set of edges whose removal disconnects some specified pairs of terminals (but not necessarily all of them, as opposed to \textsc{Multiway cut}).

We prove a lower bound of 
$n^{\Omega(\sqrt{gt + g^2+t}/\log(g+t))}$ for the associated decision problem, even in the unweighted case, which almost matches the aforementioned best known upper bound, and generalizes the lower bound of the third author~\cite{m-tlbpmc-12} for the planar case.  Actually, we prove a lower bound that holds for any value of the integers $g$ and $t$ as long as $t\geq 4$.
  More formally, we consider the following decision problem:\\
  \begin{framed}
  \textsc{Multiway Cut}:\\
  \Input: An unweighted graph~$G$ with $n$ vertices and edges, a set~$T$ of vertices, and an integer~$\lambda$.\\
  \Output: Whether there exists a multiway cut of~$(G,T)$ of value at most~$\lambda$.
\end{framed}
  
\begin{theorem}\label{T:mainmulticut}
  Assuming the ETH, there exists a universal constant $\alpha_{\MC}>0$ such that for any fixed choice of integers $g\geq0$ and $t\geq 4$, there is no algorithm that decides all the \textsc{Multiway Cut} instances $(G,T,\lambda)$ for which $G$ is embeddable on the orientable surface of genus~$g$ and $|T| \leq t$, in time $O(n^{\alpha_{\MC} \sqrt{gt + g^2+t}/\log(g+t)})$.
\end{theorem}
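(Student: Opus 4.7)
The plan is to follow the standard recipe for ETH lower bounds that originates from the third author's work on planar \textsc{Multiway Cut}: reduce from a hardness-of-grid-tiling style intermediate problem, but replace the ``grid'' by a combinatorial structure that is natural on a surface of genus~$g$ and uses exactly~$t$ terminals in the resulting cut instance. Concretely, I would fix a parameter $k \approx \sqrt{gt+g^2+t}/\log(g+t)$ and aim to produce from a $k$-\textsc{Clique} (or equivalently \textsc{Partitioned Subgraph Isomorphism}) instance an equivalent \textsc{Multiway Cut} instance $(G,T,\lambda)$ with $|V(G)|=\poly(\text{domain size})$, with $G$ embeddable on a surface of genus~$g$, and with $|T|\le t$. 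Since $k$-\textsc{Clique} requires time $n^{\Omega(k/\log k)}$ under ETH, and $k^2 \asymp (gt+g^2+t)/\log^2(g+t)$, absorbing the polynomial blow-up of the reduction and the $\log$ factor into the exponent yields exactly the bound of Theorem~\ref{T:mainmulticut}.

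The heart of the argument is designing the genus-$g$ analogue of the $\sqrt{t}\times\sqrt{t}$ grid-tiling template used in the planar lower bound. The idea is to use the $\REGULARCSP$/$\GT$ machinery set up earlier in the paper, packaging a $k$-\textsc{Clique} instance as a constraint satisfaction problem whose ``interaction graph'' is a structure~$H$ that (i) admits a cellular embedding on the orientable surface of genus~$g$, (ii) has the right number $t$ of ``boundary'' or ``terminal'' slots, and (iii) has $\Theta(k^2)$ internal vertices, so that~$H$ uses the handles optimally: each handle of the surface should allow $\Theta(k)$ long-range CSP constraints that would be impossible in the plane, matching the intuition that roughly $\sqrt{g^2+gt+t}$ is the right order of magnitude for the ``dimension'' of such a structure. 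For the two extremes one should recover the familiar pictures: for $g=0$ this degenerates into the square grid used for the planar lower bound, while for bounded~$t$ it degenerates into a genus-$g$ surface with $\Theta(g)$ handles each carrying $\Theta(g)$ independent constraints.

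The reduction proper is then local and combinatorial. Each CSP variable is realized by a ``choice gadget'' whose minimum multiway cut encodes the chosen value; each constraint becomes a small piece of graph between two choice gadgets whose optimum cut value increases unless the incident variable assignments are compatible. These gadgets can be designed planarly, essentially by re-using (and slightly strengthening) the gadget family of the third author's planar lower bound; they are then glued together along the embedding of~$H$ so that the global graph is cellularly embedded on the genus-$g$ surface and uses exactly one terminal per boundary/slot of~$H$, for a total of at most $t$ terminals. Correctness amounts to the standard claim that a cut of value exactly $\lambda$ exists iff a tiling/assignment of the CSP exists, with slack arguments ensuring that no ``cheating'' cut can beat the budget.

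The main obstacle will be step~(ii) above: constructing the surface-embedded interaction structure~$H$ and proving a matching ETH lower bound for CSP or \textsc{Grid Tiling} on it. One must be simultaneously frugal about terminals (there are only~$t$ of them, not $\Theta(k^2)$), and flexible about how constraints route across handles so that the embedded graph really does have genus~$g$ and not more. I would expect this to be handled by a routing/partitioning lemma that, given $g$ and $t$, outputs a genus-$g$ cellular embedding whose faces correspond to CSP cells and whose boundary pattern realizes the terminals, with $\Theta(\sqrt{gt+g^2+t})$ being provably tight by an Euler-characteristic count. Once this structural lemma is in place, plugging in the planar gadget library and the standard ETH reduction from $k$-\textsc{Clique} (with its $\log k$ loss) should give Theorem~\ref{T:mainmulticut}; as in Remark~\ref{R:constants}, the small-$(g,t)$ regime is absorbed into the universal constant.
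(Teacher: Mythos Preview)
Your outline captures the broad shape of the argument---reduce from a hard CSP via cross gadgets glued along a structure $H$ embedded on the genus-$g$ surface---but two of its load-bearing steps are missing the concrete ideas that the paper supplies. First, the structure $H$: you propose to certify that $H$ has the right ``dimension'' $\Theta(\sqrt{gt+g^2+t})$ via a routing lemma and an Euler-characteristic count. Euler's formula controls genus, not treewidth; to invoke Theorem~\ref{T:beattreewidth} you need $H$ to have treewidth $\Theta(\sqrt{gt+t})$, and this has to come from elsewhere. The paper obtains it by starting from a four-regular \emph{expander} on $\Theta(g)$ vertices (treewidth $\Theta(g)$, genus at most $g$) and blowing each vertex up into a $\delta\times\delta$ grid with $\delta\approx\sqrt{t/g}$, then proving via a separator argument (Lemma~\ref{L:ght}) that the treewidth grows to $\Omega(\delta g)=\Omega(\sqrt{gt})$ while the surface is unchanged. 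Expansion, not Euler characteristic, is the engine. (Relatedly, the $\log$ loss comes from Theorem~\ref{T:beattreewidth}, i.e.\ from $|D|^{\Omega(\tw/\log\tw)}$ for CSPs, not from a $k/\log k$ bound for $k$-\textsc{Clique}; with your choice of $k$ the bookkeeping would lose an extra logarithm.)

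Second, the paper does not give a single construction interpolating all $(g,t)$; it proves two separate propositions and takes the stronger one. The blown-up-expander reduction (Proposition~\ref{P:multicut2}) places terminals at every gadget corner, for a total of $12|V(H^\delta)|=\Theta(t)$ terminals, and only applies when $t\ge48(g+1)$. For $t<48(g+1)$---in particular for $t=4$---that terminal budget is impossible, and the paper uses a genuinely different idea (Proposition~\ref{P:multicut1}): build the cross gadgets over a four-regular bipartite expander with $\Theta(g)$ vertices, and then \emph{identify} all $UL$ corners across all gadgets into a single vertex (and likewise $UR,DL,DR$). This identification is precisely what pushes the genus up to $\Theta(g)$ while leaving exactly four terminals. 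Your ``one terminal per boundary/slot of $H$'' picture does not cover this regime, and the claimed degenerate picture (``$\Theta(g)$ handles each carrying $\Theta(g)$ independent constraints'') overcounts by a factor of~$g$. The proof of Theorem~\ref{T:mainmulticut} itself is then just a two-line case split on whether $t\lessgtr48(g+1)$, invoking Proposition~\ref{P:multicut1} or Proposition~\ref{P:multicut2} accordingly.
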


  Since \textsc{Multicut} is a generalization of \textsc{Multiway Cut}, the lower bound also holds for \textsc{Multicut}.  Note that taking $g=0$ in this theorem yields lower bounds for the \textsc{Planar Multiway Cut} problem, and recovers, up to a logarithmic factor in the exponent, the lower bounds obtained by the third author~\cite{m-tlbpmc-12} for that problem. In the opposite regime, we also prove W[1]-hardness with respect to the genus for instances with four terminals, see Proposition~\ref{P:multicut1}. We remark that $t=2$ corresponds to the minimum cut problem, which is polynomial-time solvable, so a lower bound on~$t$ is necessary. While the last remaining case, for $t=3$, is known to be NP-hard~\cite{djpsy-cmc-94}, our techniques do not seem to encompass it, and we leave its parameterized complexity with respect to the genus as an open problem.

Parameterized lower bounds in the literature often have the form ``assuming the ETH, for any function $f$, there is no $f(k)n^{o(h(k))}$ algorithm to solve problem \textsc{X}'', where $h$ is some specific dependency on the parameter. The lower bounds that we prove in Theorems~\ref{T:maincutgraph} and~\ref{T:mainmulticut} are instead of the form ``assuming the ETH, there exists a universal constant $\alpha$ such that for any fixed $k$, there is no $O(n^{\alpha h(k)})$ algorithm to solve problem \textsc{X}''. The latter lower bounds imply the former: indeed, $f(k)n^{o(h(k))}=O(n^{\alpha h(k)})$ for a fixed $k$.  Our results are stronger, concerning instances for any fixed~$k$. Moreover, lower bounds with two parameters are difficult to state with $o()$ notation. The statement of Theorem~\ref{T:mainmulticut} handles every combination of the two parameters in a completely formal way.

While Theorem~\ref{T:mainmulticut} does not use an embedded graph as an input, we can find an embedding of a graph on a surface with minimum possible genus in $2^{\poly(g)}n$ time~\cite{kmr-sltaeg-08,m-ltaega-99}. Thus, the same hardness result holds in the embedded case and the question is not about whether we are given the embedding or not.

\subsection{Main ideas of the proof}

What is a good starting problem to prove hardness results for surface-embedded graphs? For planar graphs, the \textsc{Grid Tiling} problem of the third author~\cite{m-opgas-07} has now emerged as a convenient, almost universal, tool to establish parameterized hardness results and precise lower bounds based on the ETH. A similar approach, based on constraint satisfaction problems (CSPs) on $d$-dimensional grids, was used by the third author and Sidiropoulos~\cite{ms-lbldwo-14} to obtain lower bounds for geometric problems on low-dimensional Euclidean inputs (see also~\cite{bbkmz-fetalb-18} for a similar framework for geometric intersection graphs). However, these techniques do not apply directly for the problems that we consider. Indeed, the bounds implied by these approaches are governed by the treewidths of the underlying graphs and are of the type $n^{\Omega(\sqrt{p})}$ or $n^{\Omega(p^{1-1/d})}$ respectively, where $p$ is the parameter of interest and $d$ the dimension of the grid in the latter case. In contrast, here, we are looking for bounds of the form $n^{\Omega(p/ \log p)}$ (while this is not apparent from looking at Theorem~\ref{T:mainmulticut}, this also turns out to be the main regime of interest for the \textsc{Multiway Cut} problem). 

Our first contribution, in Section~\ref{S:pivot}, is to introduce a new hard problem for embedded graphs, which is versatile enough to be used as a starting point to obtain lower bounds for both the \textsc{Shortest Cut Graph} and the \textsc{Multiway Cut} problem (and hopefully others). It is a variant of the \textsc{Grid Tiling} problem which we call \textsc{4-Regular Graph Tiling}; in a precise sense, it generalizes the \textsc{Grid Tiling} problem to allow for embedded $4$-regular graphs different from the planar grid to be used as the structure graph of the problem. We show that a CSP instance with $k$ binary constraints can be simulated by a \textsc{4-Regular Graph Tiling} instance with parameter $k$. A result of the  third author~\cite{m-cybt-10} shows that, assuming the ETH, such CSP instances cannot be solved in time $f(k)n^{\Omega(k/\log k)}$, giving a similar lower bound for \textsc{4-Regular Graph Tiling} (Theorem~\ref{T:pivot}).

We then establish in Sections \ref{S:multicut1} and~\ref{S:cutgraph} the lower bounds for the \textsc{Shortest Cut Graph} and ``one half'' of the lower bound for \textsc{Multiway Cut}, namely, for the regime where the genus dominates the number of terminals, in which case we prove a lower bound of $n^{\Omega(g/ \log g)}$.  Both reductions proceed from \textsc{4-Regular Graph Tiling} and use as a building block an intricate set of \emph{cross gadgets} originally designed by the third author~\cite{m-tlbpmc-12} for his hardness proof of the \textsc{Planar Multiway Cut} problem. While it does not come as a surprise that these gadgets are useful for \textsc{Multiway Cut} instances in the case of surface-embedded graphs, for which planar tools can often be used, it turns out that via basic planar duality, they also provide exactly the needed technical tool for establishing the hardness of \textsc{Shortest Cut Graph}.

The ``second half'' of the lower bound in Theorem~\ref{T:mainmulticut} is in the regime where the number of terminals dominates the genus, for which we establish a lower bound of $n^{\Omega(\sqrt{gt+t}/\log(g+t))}$. In Section~\ref{S:multicut2}, we use a similar strategy as before but bypass the use of the \textsc{4-Regular Graph Tiling} problem. Instead, we rely directly on the aforementioned theorem of the third author on the parameterized hardness of CSPs, which we apply not to a family of expanders, but to blow-ups of expanders, i.e., expanders where vertices are replaced by grids of a well-chosen size. This size is prescribed exactly by the tradeoff between the genus and the number of terminals, as described with the two integers $g$ and $t$ in Theorem~\ref{T:mainmulticut}. The key property of these blow-ups is that their treewidth is $\tw=\Theta(\sqrt{gt+t})$ and thus the $n^{\Omega(\tw/\log \tw)}$ lower bound on the complexity of CSPs with these blow-ups as primal graphs yields exactly the target lower bound. The reduction from CSPs to \textsc{Multiway Cut} is carried out in Proposition~\ref{P:multicut2} and also relies on cross gadgets.

The proof of Theorem~\ref{T:mainmulticut}, in Section~\ref{S:finish}, results from the two halves given by Propositions~\ref{P:multicut1} and~\ref{P:multicut2}.

Sections \ref{S:multicut1}, \ref{S:cutgraph}, and~\ref{S:multicut2} are roughly put in increasing order of technical difficulty, but they are independent; the reader interested in Theorem~\ref{T:mainmulticut} can safely skip Section~\ref{S:cutgraph}, while the reader interested in Theorem~\ref{T:maincutgraph} can safely skip Sections~\ref{S:multicut1},~\ref{S:multicut2} and~\ref{S:finish}. 

\section{Preliminaries}\label{S:prelim}

\subsection{Graphs and surfaces}\label{SS:surfaces}

For the sake of convenience in the proofs, unless otherwise noted, the graphs in this paper are not necessarily simple; they may have loops and multiple edges. However, our hardness results also hold for simple graphs, because the instances of \textsc{Shortest Cut Graph} or \textsc{Multiway Cut} can easily be made simple by subdividing edges if desired.

For extensive background on graphs on surfaces, we refer to the classic textbook of Mohar and Thomassen~\cite{mt-gs-01}. Throughout this article, we only consider \emphdef{surfaces} that are compact, connected, and orientable.  By the classification theorem of surfaces, each such surface~$\surf$ is homeomorphic to a sphere with $g$ handles attached and $b$ disks removed; $g$ is called the \emphdef{genus} of the surface and $b$ its number of \emphdef{boundaries}. A \emphdef{path}, or \emphdef{curve}, is a continuous map from $[0,1]$ to~$\surf$. A path is \emphdef{simple} if it is injective. 

An \emphdef{embedding} of $G$ on $\surf$ is a crossing-free drawing of $G$ on $\surf$, i.e., the images of the vertices are pairwise distinct and the image of each edge is a simple path intersecting the image of no other vertex or edge, except possibly at its endpoints. When embedding a graph on a surface with boundaries, we adopt the convention that while vertices can be mapped to a boundary, interiors of edges cannot.  A \emphdef{face} of the embedding is a connected component of the complement of the graph. A \emphdef{cellular embedding} is an embedding of a graph where every face is a topological disk. By a slight abuse of language, we will often identify an abstract graph with its embedding. If $G$ is a graph embedded on $\surf$, the surface obtained by \emphdef{cutting} $\surf$ along $G$ is the disjoint union of the faces of~$G$; it is an (a priori disconnected) surface with boundary.

Every graph embeddable on a surface of genus~$g$ is also embeddable on a surface of genus~$g'$, for all $g'\ge g$.  A graph embedded on a surface of minimum possible genus is cellularly embedded.  The genus of that surface is called the \emphdef{genus} of the graph; it is at most the number of edges of the graph.

To a graph cellularly embedded on a surface without boundary~$\surf$, one can naturally associate a dual graph $G^*$ embedded on~$\surf$, whose vertices are the faces of $G$ and two such vertices are connected by an edge~$e^*$ for every edge~$e$ their dual faces share; $e^*$ crosses~$e$ and no other edge of~$G$.

\subsection{Expanders, separators, and treewidth}

For a \emph{simple} graph $G$, we denote by \emphdef{$\bm{\lambda(G)}$} the second largest eigenvalue of its adjacency graph.  If $G$ is $d$-regular, it is a basic fact that $\lambda(G)\le d$.  A family of $d$-regular \emphdef{expanders} is an infinite family of $d$-regular simple graphs $G$ such that $\lambda(G)/d<c_{\textup{exp}}<1$ for some constant~$c_{\textup{exp}}$. A family $\mathcal{G}$ of graphs is \emphdef{dense} if for any $n>0$, there exists a graph in $\mathcal{G}$ with $\Theta(n)$ vertices (where the $\Theta()$ hides a universal constant).

\begin{lemma}\label{L:existence}
  There exists a dense family~$\mathcal{H}$ of bipartite four-regular expanders.
\end{lemma}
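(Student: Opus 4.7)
The plan is to construct $\mathcal{H}$ via the probabilistic method, invoking a standard spectral estimate for random regular bipartite graphs. For each sufficiently large integer $n$, I would consider the random $4$-regular bipartite multigraph $G_n$ on vertex set $L_n \sqcup R_n$ with $|L_n|=|R_n|=n$, sampled as the edge-disjoint union of four independent uniformly random perfect matchings between $L_n$ and $R_n$. The first step is to invoke Friedman's theorem in its bipartite form to conclude that, with probability tending to $1$ as $n \to \infty$, the second-largest eigenvalue of the adjacency matrix of $G_n$ is at most $2\sqrt{3}+o(1)$, hence bounded above by some fixed constant strictly less than $4$.

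The second step is to observe that a standard computation in the configuration model shows $G_n$ to be simple (free of loops and multiple edges) with probability bounded below by a universal positive constant. Combining the two bounds via the union bound, for every sufficiently large $n$ there exists a realization $H_n$ that is simultaneously simple and satisfies $\lambda(H_n)/4 < c_{\mathrm{exp}}$ for some universal $c_{\mathrm{exp}}<1$. Setting $\mathcal{H} := \{H_n : n \ge n_0\}$ and prepending a single fixed bipartite $4$-regular simple graph (for instance $K_{4,4}$) to cover sizes below $n_0$ gives a family containing a graph with $\Theta(N)$ vertices for every positive integer $N$, and is therefore dense in the sense of the definition.

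If one prefers to avoid appealing to Friedman's theorem as a black box, an explicit alternative is to start from any family of (not necessarily bipartite) two-sided spectral $4$-regular expanders of all sufficiently large sizes, for instance those produced by the zig-zag product of Reingold--Vadhan--Wigderson, and then pass to the bipartite double cover: this doubles the vertex count, preserves $4$-regularity, produces a bipartite simple graph, and yields spectrum $\{\pm\mu : \mu \in \mathrm{Spec}(G)\}$, so the two-sided spectral gap of the base translates directly into the required bound on the second eigenvalue of the cover. The main obstacle in either route is purely one of bookkeeping, namely verifying simultaneous preservation of $4$-regularity, bipartiteness, simplicity, and the spectral gap together with the density of the sizes, but both approaches handle these constraints in a standard way.
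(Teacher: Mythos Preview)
Your proposal is correct and matches the paper's treatment: the paper does not give a self-contained argument but simply asserts that the lemma follows either from the standard probabilistic fact that random bipartite regular graphs are expanders or from explicit constructions, citing the Hoory--Linial--Wigderson survey and Marcus--Spielman--Srivastava. Your two routes---Friedman's theorem applied to the union-of-matchings model, and the bipartite double cover of an explicit two-sided $4$-regular expander family---are precisely detailed instantiations of those two suggestions.
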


This lemma can be proved using a well-known simple probabilistic argument showing that random bipartite regular graphs are expanders, or with more intricate explicit constructions. We refer to the survey of Hoory, Linial, and Wigderson~\cite{hlw-ega-06} or the groundbreaking recent works of Marcus, Spielman, and Srivastava~\cite{mss-if1br-15}.

The \emphdef{treewidth}~$\tw(G)$ of a graph~$G$ is a parameter measuring intuitively how close it is to a tree. Since we will use this parameter in a black-box manner and not rely precisely on its definition, we do not include it here and refer to graph theory textbooks, e.g., Diestel~\cite[Chapter~12]{d-gt-00}.  We only indicate a few basic properties that we will use.  For every graph~$G$, we have $\tw(G)\le|V(G)|-1$.  Every graph containing a $\delta\times\delta$-grid as a subgraph has treewidth at least~$\delta$.

For $\alpha<1$, an \emphdef{$\bm{\alpha}$-separator} of a graph~$G$ is a subset~$C$ of vertices of~$G$ such that each connected component of~$G-C$ has a fraction at most~$\alpha$ of the vertices of~$H$.

\begin{lemma}\label{L:tw-sep}\cite[Lemma~7.19]{cfklm-pa-15}
  Let $G$ be a graph with treewidth at most~$k$.  Then $G$ has a $1/2$-separator of size at most~$k+1$.
\end{lemma}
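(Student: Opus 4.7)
The plan is to exploit the tree decomposition directly: take a tree decomposition $(T, \{X_t\}_{t \in V(T)})$ of $G$ of width at most $k$, so that every bag satisfies $|X_t| \le k+1$, and argue that at least one bag $X_{t^*}$ serves as the desired $1/2$-separator.

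First, I would set up notation for the "sides'' of an edge of $T$. For every edge $\{s,t\}$ of $T$, removing that edge splits $T$ into two subtrees; let $V_{s \mid t} \subseteq V(G)$ denote the union of bags lying in the subtree on the $s$-side. A standard property of tree decompositions gives that $V_{s \mid t} \cap V_{t \mid s} \subseteq X_s \cap X_t$, and more generally that for any node $t^*$, each connected component of $G - X_{t^*}$ is contained in some $V_{c \mid t^*} \setminus X_{t^*}$ where $c$ ranges over the neighbors of $t^*$ in $T$ (because an edge $uv$ of $G$ forces $u,v$ into a common bag, hence into a single component of $T - t^*$).

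Next, I would define a partial orientation of the edges of $T$: orient an edge $\{s,t\}$ from $t$ toward $s$ whenever $|V_{s \mid t} \setminus X_t| > n/2$, where $n = |V(G)|$. The key observation is that no edge can be oriented in both directions, because the two sets $V_{s\mid t} \setminus X_t$ and $V_{t \mid s} \setminus X_s$ are disjoint subsets of $V(G)$ (their intersection would have to lie simultaneously in $X_s \cap X_t$ and outside both bags), so they cannot both exceed $n/2$. Since $T$ has $|V(T)| - 1$ edges but $|V(T)|$ vertices, a simple pigeonhole on the "tail'' of each oriented edge shows that some vertex $t^*$ has no outgoing oriented edge.

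Finally I would check that $X_{t^*}$ is the desired separator. It has size at most $k+1$ by the width bound. By the component property recalled above, every connected component of $G - X_{t^*}$ lies in some $V_{c \mid t^*} \setminus X_{t^*}$; and because $t^*$ is a sink of the orientation, the edge $\{c, t^*\}$ is \emph{not} oriented from $t^*$ to $c$, meaning $|V_{c \mid t^*} \setminus X_{t^*}| \le n/2$ for every neighbor $c$. Hence every component has at most $n/2$ vertices, completing the proof. The only step that requires genuine care is verifying that the orientation is well defined and unambiguous; the rest is bookkeeping around the tree decomposition properties.
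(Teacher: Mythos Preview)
The paper does not actually prove this lemma; it merely cites it from \cite[Lemma~7.19]{cfklm-pa-15} and uses it as a black box. Your argument is correct and is precisely the standard proof one finds in that reference: pick a bag that is a ``sink'' with respect to the orientation sending each tree edge toward the heavier side, and observe that removing that bag leaves only pieces of size at most $n/2$. There is nothing to compare approaches against, since the paper offers none of its own.
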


\begin{lemma}\label{L:tw}
Let $G$ be a simple $d$-regular graph.  Then every $1/2$-separator of~$G$ has size at least $|V(G)|(d-\lambda(G))/16d$. Moreover, the treewidth of~$G$ is at least $\floor{|V(G)|\cdot(d-\lambda(G))/8d}$.
\end{lemma}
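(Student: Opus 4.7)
The plan is to derive the separator bound from Cheeger's inequality for $d$-regular graphs, which states that the edge expansion satisfies $h(G) \geq (d-\lambda(G))/2$, and then read off the treewidth bound through Lemma~\ref{L:tw-sep}.

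Let $n = |V(G)|$ and let $C$ be a $1/2$-separator. I may assume $|C| \leq n/2$, since otherwise the claimed bound $|C| \geq n(d-\lambda(G))/(16d)$ is immediate. Let $V_1, \ldots, V_k$ denote the connected components of $G - C$; by hypothesis each has size at most $n/2$. I would form the prefix unions $S_j := V_1 \cup \cdots \cup V_j$ and pick the smallest index $j$ for which $|S_j| \geq n/4$; this index exists because the total $n - |C| \geq n/2 > n/4$. Since each $|V_i| \leq n/2$, the prefix size jumps by at most $n/2$ at any step, so the chosen set $S := S_j$ satisfies $n/4 \leq |S| \leq 3n/4$, and in particular $\min(|S|, n - |S|) \geq n/4$.

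Now I would bound the edge boundary of $S$ two ways. Because $S$ is a union of complete components of $G - C$, every edge leaving $S$ must enter $C$; hence $|E(S, V \setminus S)| \leq d|C|$, using that each vertex of $C$ has degree at most $d$. On the other hand, Cheeger's inequality yields $|E(S, V \setminus S)| \geq (n/4) \cdot (d - \lambda(G))/2 = n(d - \lambda(G))/8$. Combining gives $|C| \geq n(d - \lambda(G))/(8d)$, which is actually twice the asserted separator bound and so is more than enough. For the treewidth statement, Lemma~\ref{L:tw-sep} implies $\tw(G) \geq s - 1$, where $s$ is the minimum size of a $1/2$-separator of~$G$; feeding in the bound just proved yields $\tw(G) \geq \lfloor n(d - \lambda(G))/(8d) \rfloor$, the floor comfortably absorbing the $\pm 1$.

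The main obstacle is really just bookkeeping: there is no delicate combinatorial or spectral step beyond quoting Cheeger's inequality with the right constant. The one conceptual point worth flagging is that the hypothesis "$1/2$-separator" controls component sizes rather than the sizes of a bipartition, which is precisely why the prefix-sum argument can place $|S|$ inside $[n/4, 3n/4]$ using the single-component jump bound of $n/2$. Should one prefer to avoid Cheeger, the expander mixing lemma applied to $S$ and $V\setminus S$ gives the analogous conclusion up to constants, at the cost of being slightly more careful that it is the second-largest eigenvalue $\lambda(G)$, and not $\max(|\lambda_2|,|\lambda_n|)$, that appears in the bound.
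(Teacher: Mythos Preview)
Your approach is essentially the paper's: both obtain the separator bound from Cheeger's inequality, after grouping components of $G-C$ into a set of size between $n/4$ and $3n/4$. The paper routes the argument through outer vertex expansion $h_{\mathrm{out}}(G)$ (via $h(G)\le d\,h_{\mathrm{out}}(G)$) and bounds the vertex boundary of that set by $|C|$, whereas you stay with edge expansion and bound the edge boundary by $d|C|$. Your route is cleaner and actually yields the stronger inequality $|C|\ge n(d-\lambda(G))/(8d)$; the paper's detour through vertex expansion is what costs it the extra factor of two.

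There is, however, a small gap in your treewidth step. Lemma~\ref{L:tw-sep} gives $\tw(G)\ge s-1$, where $s$ is the minimum size of a $1/2$-separator. Since $s$ is an integer and $s\ge x:=n(d-\lambda(G))/(8d)$, you get $\tw(G)\ge\lceil x\rceil-1$. When $x$ is not an integer this equals $\lfloor x\rfloor$, as desired; but when $x$ happens to be an integer you only obtain $\tw(G)\ge x-1=\lfloor x\rfloor-1$, so the floor does \emph{not} absorb the $-1$ in that case. The paper sidesteps this by invoking a result of Grohe and Marx, namely $\tw(G)\ge\lfloor (n/4)\,h_{\mathrm{out}}(G)\rfloor$, and combining it with $h_{\mathrm{out}}(G)\ge(d-\lambda(G))/(2d)$ to get the stated bound directly, without passing through Lemma~\ref{L:tw-sep}. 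For the applications in the paper, which only use the lemma to conclude $\tw(G)=\Omega(|V(G)|)$ for expanders, your argument is of course perfectly adequate.
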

\begin{proof}
  First, we remind some standard definitions.  The edge expansion $h(G)$ of~$G$ and the (outer) vertex expansion $h_\out(G)$ of~$G$ are defined as follows, letting $n$ be the number of vertices of~$G$:
\[h(G):=\min_{\substack{A\subseteq V(G)\\1\le|A|\le n/2}}\frac{|\{uv\in E(G)\mid u\in A,v\in\bar A\}|}{|A|} \text{ , and}\]
\[h_\out(G):=\min_{\substack{A\subseteq V(G)\\1\le|A|\le n/2}}\frac{|\{v\in\bar A\mid \exists u\in A, uv\in E(G)\}|}{|A|}.\]

  First, the ``easy direction'' of the Cheeger inequality gives $(d-\lambda(G))/2\le h(G)$, see for example Hoory, Linial, and Wigderson~\cite[Theorem~2.4]{hlw-ega-06}.  Also, it is easy to see that $h(G)\le dh_\out(G)$.  So $(d-\lambda(G))/2d\le h_\out(G)$.
  
  Let us prove the bound on the size of $1/2$-separators.  Let $S$ be a $1/2$-separator for~$G$.  Assume that $|S|<n(d-\lambda(G))/16d$; in particular, $|S|\le n/8$.  Since $S$ is a $1/2$-separator, there is a set $A$ disjoint from~$S$, of size between $n/4$ and~$3n/4$, whose vertices are adjacent only to vertices in~$S\cup A$.  By replacing $A$ with $V(G)-S-A$ if necessary, we can assume that $A$ actually has size between $n/4-|S|\ge n/8$ and~$n/2$.  Thus, $h_\out(G)\le|S|/|A|\le 8|S|/n$, which together with the bound of the previous paragraph implies the result.

  To prove the lower bound on the treewidth, note that additionally to the above bound on~$h_\out$, we have $\floor{n/4\cdot h_\out(G)}\le\tw(G)$, by Grohe and the third author~\cite[Proposition~1]{grohe2009tree} (in which we choose $\alpha=1/2$).
\end{proof}

\subsection{Exponential Time Hypothesis}

Our lower bounds are conditioned on the Exponential Time Hypothesis (ETH), which was conjectured in~\cite{IPZ98} and is stated below.  \emphdef{3SAT} denotes the Boolean satisfiability problem in which instances are presented in conjunctive normal form with at most three literals per clause.

\begin{conjecture}[Exponential Time Hypothesis~\cite{IPZ98}]
  \label{conj:ETH}
There exists a positive real value $s > 0$ such that 3SAT, parameterized by $n$, has no $2^{sn}(n+m)^{O(1)}$-time algorithm, where $n$ denotes the number of variables
and $m$ denotes the number of clauses.
\end{conjecture}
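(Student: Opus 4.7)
The statement labeled Conjecture~\ref{conj:ETH} is the Exponential Time Hypothesis itself, and it is explicitly a \emph{conjecture}, not a theorem; it is stated as a hypothesis on which the paper's lower bounds (Theorems~\ref{T:maincutgraph} and~\ref{T:mainmulticut}) are conditioned, and is not something the paper proves or intends to prove. I will nevertheless describe what any hypothetical proof would have to accomplish, and why such a proof is completely out of reach of current techniques.

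First, any proof of ETH would immediately imply $\mathsf{P}\neq\mathsf{NP}$: if 3SAT admitted a polynomial-time algorithm, then for every $s>0$ it would a fortiori admit a $2^{sn}(n+m)^{O(1)}$-time algorithm, contradicting the statement. So any plan to prove ETH must, as a sub-step, resolve the P vs.\ NP question. Worse, ETH is strictly stronger than $\mathsf{P}\neq\mathsf{NP}$: even under $\mathsf{P}\neq\mathsf{NP}$, it remains consistent with current knowledge that 3SAT admits a $2^{o(n)}$-time algorithm, which would refute ETH without contradicting the P vs.\ NP separation. A proof therefore needs a genuinely finer lower bound than any we currently know how to approach.

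Second, to locate where such a proof might come from, the natural plan would be to combine unconditional lower bound techniques against satisfiability. The best known algorithms for 3SAT — variants of the PPSZ algorithm of Paturi, Pudl\'ak, Saks, and Zane and its refinements — run in time $2^{cn}$ with explicit constants $c<1$ bounded away from~$0$; these are entirely consistent with ETH but obviously do not prove it. Conversely, the strongest unconditional lower bounds known against satisfiability are only polynomial (e.g.\ the time–space tradeoffs in the Fortnow–Van Melkebeek line of work). Any proof of ETH would need to push these from polynomial all the way to $2^{\Omega(n)}$, uniformly across all algorithmic strategies.

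The main obstacle, and the reason I cannot realistically supply a proof plan here, is that ETH is essentially an unconditional exponential lower bound on an \textsf{NP}-complete problem, a goal against which contemporary complexity theory has no viable attack — we do not even know how to rule out linear-time algorithms for 3SAT unconditionally. Accordingly, the present paper, like the bulk of the fine-grained complexity literature initiated by Impagliazzo, Paturi, and Zane~\cite{IPZ98}, adopts Conjecture~\ref{conj:ETH} as an axiom from which to derive conditional lower bounds, rather than attempting to establish it.
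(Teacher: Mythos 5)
You are correct: the paper states ETH as Conjecture~\ref{conj:ETH}, not as a theorem, and never attempts to prove it — it is the unproven hypothesis on which all the paper's lower bounds (Theorems~\ref{T:maincutgraph} and~\ref{T:mainmulticut}) are conditioned. Your recognition that there is nothing to prove here, and your explanation of why a proof is out of reach (it would imply $\mathsf{P}\neq\mathsf{NP}$ and indeed more), matches the paper's treatment exactly.
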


We refer to the survey~\cite{lms-lbbeth-13} for background and discussion of this conjecture. Informally speaking, Conjecture~\ref{conj:ETH} states that there is no algorithm for 3SAT that is subexponential in the number of \textit{variables} of the formula.
  The Sparsification Lemma of 
Impagliazzo, Paturi, and Zane~\cite[Corollary~1]{ipz-wphse-01} shows that the ETH is equivalent to saying that there is no algorithm subexponential in the \textit{length} of the formula.
\begin{corollary}\label{cor:eth-sparse}
  If the ETH holds, then there exists a positive real value $s > 0$ such that 3SAT has no $2^{s(n+m)}(n+m)^{O(1)}$-time algorithm, where $n$ denotes the number of variables and $m$ denotes the number of clauses.
\end{corollary}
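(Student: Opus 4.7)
The plan is to derive Corollary~\ref{cor:eth-sparse} as a direct consequence of the Sparsification Lemma, arguing by contrapositive. Suppose, for contradiction, that for every $s>0$ there is a $2^{s(n+m)}(n+m)^{O(1)}$-time algorithm for 3SAT. The goal is to use such an algorithm, together with the Sparsification Lemma, to construct, for every $s_0>0$, a $2^{s_0 n}(n+m)^{O(1)}$-time algorithm for 3SAT on $n$ variables and $m$ clauses, thereby contradicting the ETH (Conjecture~\ref{conj:ETH}).

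The key ingredient is the Sparsification Lemma of Impagliazzo, Paturi, and Zane, which states that for every $\varepsilon>0$ there exists a constant $C=C(\varepsilon)$ and an algorithm that, given a 3SAT instance $\varphi$ with $n$ variables, produces in time $2^{\varepsilon n}\poly(n)$ a list of at most $2^{\varepsilon n}$ 3SAT instances $\varphi_1,\ldots,\varphi_N$, each on the same $n$ variables and with at most $Cn$ clauses, such that $\varphi$ is satisfiable if and only if some $\varphi_i$ is. Given $s$ from the ETH, I would fix $\varepsilon := s/3$, obtain the corresponding constant $C$, and then invoke the assumed hypothetical algorithm with parameter $s' := s/(3(1+C))$. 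This yields an algorithm for 3SAT whose total running time on instances coming from the sparsification step is bounded by
\[
2^{\varepsilon n}\cdot 2^{s'(n+Cn)}\poly(n) \;\le\; 2^{(\varepsilon + s'(1+C))n}\poly(n) \;\le\; 2^{(2s/3)n}\poly(n),
\]
which is $2^{s''n}\poly(n)$ for some $s''<s$, directly violating the ETH.

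Since the overall dependence on $m$ in the original instance is absorbed into the $\poly(n)$ factor coming from the sparsification (the input size is polynomial in $n+m$, and the sparsification runtime is $2^{\varepsilon n}$ times polynomial in the input size), no obstacle arises from the clause count of the original instance. The only subtle point is to be careful about the form of the Sparsification Lemma invoked: one needs the version that produces sparse instances whose \emph{total size} is linear in $n$, so that feeding them into a $2^{s'(n+m)}\poly$-algorithm yields an exponent of the form $s'(1+C)n$ rather than something depending on $m$. This is precisely the content of Corollary~1 of~\cite{ipz-wphse-01}, and there is no serious technical obstacle beyond correctly bookkeeping the constants $\varepsilon$, $C$, and $s'$; the whole argument is a quantitative balancing to make $\varepsilon + s'(1+C)$ strictly less than the ETH constant $s$.
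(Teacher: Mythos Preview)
Your argument is correct and is the standard derivation of this corollary from the Sparsification Lemma. The paper itself does not give a proof: it simply states the corollary and cites it as following from \cite[Corollary~1]{ipz-wphse-01}. Your proposal is therefore a faithful (and more detailed) unpacking of exactly what the paper invokes.

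One small remark on presentation: the phrase ``Given $s$ from the ETH'' is potentially confusing, since at that point you have assumed the negation of the corollary and are trying to refute the ETH; it would be cleaner to write ``Let $s_0>0$ be arbitrary'' and then carry out the balancing with $\varepsilon=s_0/3$ and $s'=s_0/(3(1+C))$ exactly as you do. The mathematics is unaffected.
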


\subsection{Constraint satisfaction problems}

A \emphdef{binary constraint satisfaction problem} is a triple $(V,D,C)$ where 
\begin{itemize}
\item $V$ is a set of \emphdef{variables},
\item $D$ is a \emphdef{domain} of values,
\item $C$ is a set of \emphdef{constraints}, each of which is a triple of the form $\langle u,v,R \rangle$, where $(u,v)$ is a pair of variables called the \emphdef{scope}, and $R$ is a subset of $D^2$ called the \emphdef{relation}.
\end{itemize}

All the constraint satisfaction problems (CSPs) in this paper will be binary, and thus we will omit the adjective binary. A solution to a constraint satisfaction problem instance is a function $f: V \rightarrow D$ such that for each constraint $\langle u,v,R\rangle$, the pair $(f(u),f(v))$ is a member of $R$. An algorithm \emphdef{decides} a CSP instance $I$ if it outputs true if and only if that instance admits a solution.

The \emphdef{primal graph} of a CSP instance $I=(V,D,C)$ is a graph with vertex set $V$ such that distinct vertices $u,v \in V$ are adjacent if and only if there is a constraint whose scope contains both $u$ and $v$.

The starting points for the reductions in this paper are the following two theorems, which state in a precise sense that the treewidth of the primal graph of a binary CSP establishes a lower bound on the best algorithm to decide it.
\begin{theorem}[{\cite{gss-wecqt-01,g-chcsps-07}}]\label{T:W1}
Let $\mathcal{G}$ be an arbitrary class of graphs with unbounded treewidth.  Let us consider the problem of deciding the binary CSP instances whose primal graph, $G$, lies in~$\mathcal G$. This problem is W[1]-hard parameterized by the treewidth of the primal graph.
\end{theorem}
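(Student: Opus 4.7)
The plan is to give an FPT reduction from \emph{Multicolored Clique}: given a graph $H$ with vertex set partitioned into colour classes $V_1,\ldots,V_k$, decide whether $H$ contains a clique using exactly one vertex of each class. This is W[1]-hard parameterized by $k$. From an instance $(H, V_1, \ldots, V_k)$ I will produce a binary CSP instance $I$ whose primal graph lies in $\mathcal{G}$, whose treewidth is bounded by a function of $k$, and which is satisfiable iff the multicoloured clique exists.

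Because $\mathcal{G}$ has unbounded treewidth, for each $k$ I fix (non-uniformly if $\mathcal{G}$ is not recursively enumerable) some $G_k \in \mathcal{G}$ with $\tw(G_k) \geq \phi(k)$, where $\phi$ is the Robertson--Seymour function forcing a $k \times k$-grid minor in any graph of treewidth $\geq \phi(k)$. Choosing $G_k$ with the smallest possible number of vertices bounds $|V(G_k)|$ by a function of $k$ alone. Inside $G_k$ I fix once and for all a concrete $k \times k$-grid minor: pairwise vertex-disjoint connected subgraphs $\{V_{i,j}\}_{(i,j)\in[k]^2}$ of $G_k$, together with one chosen edge of $G_k$ between every pair of orthogonally adjacent cells. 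This minor is located by exhaustive search in time depending only on $k$.

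The CSP has variable set $V(G_k)$ and domain $D := W \times W$, where $W = V_1 \cup \cdots \cup V_k$. The intended value of every variable lying in cell $V_{i,j}$ is the pair $(x_i, x_j)$, with $x_\ell \in V_\ell$ the eventual $\ell$-th clique vertex. On each edge inside $V_{i,j}$ I place an equality constraint; on each horizontal connecting edge between $V_{i,j}$ and $V_{i,j+1}$ a constraint enforcing agreement on the first coordinate; on each vertical connecting edge between $V_{i,j}$ and $V_{i+1,j}$, agreement on the second coordinate. Into each of these binary relations I intersect the unary restriction that values in $V_{i,i}$ be diagonal pairs $(v,v)$ with $v \in V_i$, and that values in $V_{i,j}$ for $i \neq j$ be pairs $(u,v) \in V_i \times V_j$ with $uv \in E(H)$. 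Every remaining edge of $G_k$ is decorated with the trivial relation $D \times D$, so the primal graph of $I$ is exactly $G_k \in \mathcal{G}$.

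A satisfying assignment assigns each cell a single pair; the row constraints propagate a first coordinate $x_i$ across row $i$, the column constraints propagate a second coordinate $x_j$ down column $j$, and the diagonal cells $V_{i,i}$ force these two values to coincide, producing one tuple $(x_1,\ldots,x_k) \in V_1 \times \cdots \times V_k$ whose off-diagonal constraints certify it as a multicoloured clique; conversely any such clique yields a satisfying assignment. Since $\tw(G_k)$ depends only on $k$, the reduction is FPT parameterized by the output treewidth. The main obstacle I expect is keeping the primal graph equal to $G_k$ rather than to some modification of it: this is achieved by padding unused edges with the trivial relation and by folding all unary restrictions into incident binary constraints, so that no auxiliary vertices are introduced. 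A secondary subtlety, only relevant for effective reductions, is that finding $G_k$ and its grid minor requires $\mathcal{G}$ to be recursively enumerable; otherwise W[1]-hardness is obtained only in the non-uniform sense, which still suffices to rule out uniform FPT algorithms.
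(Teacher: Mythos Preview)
The paper does not give its own proof of this theorem; it merely cites Grohe, Schwentick, and Segoufin~\cite{gss-wecqt-01} and Grohe~\cite{g-chcsps-07}. Your proposal is correct and is essentially the standard argument from those references: use the Excluded Grid Theorem to find, for each $k$, a graph $G_k\in\mathcal G$ containing a $k\times k$-grid minor, encode \textsc{Multicolored Clique} (equivalently, \textsc{Grid Tiling}) as a binary CSP on the grid cells, and pad every remaining edge of $G_k$ with the trivial relation so that the primal graph is exactly $G_k$. The folding of unary restrictions into incident binary constraints, and the observation that the non-constructive choice of $G_k$ yields only non-uniform hardness when $\mathcal G$ is not recursively enumerable, are both handled correctly and match the discussion in the cited sources.
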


\begin{theorem}[{\cite{m-cybt-10}}]\label{T:beattreewidth}
    Assuming the ETH, there exists a universal constant $\alpha_{\CSP}$ such that for any fixed primal graph $G$ such that $\tw(G)\geq 2$, there is no algorithm deciding the binary CSP instances whose primal graph is $G$ in time $O(|D|^{\alpha_{\CSP} \cdot \tw(G)/\log \tw(G)})$.
\end{theorem}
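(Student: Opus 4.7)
The plan is to reduce from binary CSP on the complete graph $K_k$ for $k = \Theta(\tw(G)/\log \tw(G))$. The first step is to establish a $|D|^{\Omega(k)}$ ETH-based lower bound for binary CSPs whose primal graph is $K_k$. This follows from a standard partitioning reduction from 3SAT: given a formula with $n$ variables and $m$ clauses, partition the variables into $k$ groups of size $\Theta((n+m)/k)$, set the CSP domain to $D = \{0,1\}^{\Theta((n+m)/k)}$ so each group becomes one CSP variable, and encode 3SAT clauses as binary relations on pairs of groups (handling clauses spanning three groups by a short standard trick that preserves $K_k$ as the primal graph, e.g.\ by guessing the value of a shared variable). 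An $O(|D|^{\alpha k})$ algorithm with $\alpha$ small enough would yield a $2^{o(n+m)}$ algorithm for 3SAT, contradicting Corollary~\ref{cor:eth-sparse}.

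The second step is the key structural ingredient, which I would isolate as a lemma: every graph $G$ with treewidth $w$ admits a \emph{low-congestion embedding} of $K_k$ for $k = \Omega(w/\log w)$. That is, there exist pairwise vertex-disjoint connected \emph{branch sets} $B_1,\dots,B_k \subseteq V(G)$ together with paths $P_{ij}$ in $G$ joining $B_i$ to $B_j$ for every $i<j$, such that each edge of $G$ lies on only $O(1)$ of the paths $P_{ij}$. This lemma is the technical heart of the argument; my plan would be to start from an optimal tree decomposition of $G$, extract a large bag-skeleton yielding an expander-like subgraph, and then perform a routing/flow argument to realise the $K_k$ embedding. The $\log w$ factor is known to be necessary in general (witnessed, for instance, by bounded-degree expanders), so this loss cannot be avoided by a purely combinatorial argument.

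Granted the embedding lemma, the reduction is then reasonably mechanical. Given a $K_k$-CSP instance $I$ with domain $D$, I would produce a $G$-CSP instance $I'$ whose domain is $D' = D^{O(1)}$ as follows: a vertex $u \in B_i$ stores a copy of the value of the $i$-th $K_k$-variable, with equality constraints along a spanning tree of $B_i$ ensuring consistency inside the branch set; each edge of $G$ shared among the $O(1)$ paths $P_{ij}$ that use it stores the corresponding tuple in $D^{O(1)}$ and enforces the relation of $I$ on $(v_i,v_j)$ along the whole of $P_{ij}$, with projection constraints at the endpoints linking the path values to the branch set values. A hypothetical algorithm running in $O(|D'|^{\alpha w/\log w})$ on $I'$ would then solve $I$ in time $O(|D|^{c\alpha k})$ for some absolute constant $c$ arising from the congestion, contradicting step one for sufficiently small $\alpha_{\CSP}$. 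The main obstacle is therefore the embedding lemma; the partitioning reduction of step one and the syntactic translation of step three are standard.
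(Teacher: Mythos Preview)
Your proposed embedding lemma is false as stated, and this is not a minor technical issue but a structural obstruction to the whole route through $K_k$. Take $G$ to be a $3$-regular expander on $n$ vertices, so $\tw(G)=\Theta(n)$. Your lemma would give $k=\Theta(n/\log n)$ disjoint branch sets and $\binom{k}{2}=\Theta(n^2/(\log n)^2)$ paths $P_{ij}$, each using at least one edge, with every edge of $G$ in $O(1)$ paths. But $G$ has only $O(n)$ edges, so the total number of (path, edge) incidences is $O(n)$, while you need at least $\binom{k}{2}=\Theta(n^2/(\log n)^2)$. This is impossible for large $n$. The correct congestion for routing $K_k$ in such a $G$ is $\Theta(k)$, not $O(1)$; with congestion $c=\Theta(k)$ the domain becomes $D^{\Theta(k)}$ and your step~3 yields only an $|D|^{\Theta(\alpha k^2)}$ algorithm for $K_k$-CSP, which does not contradict the $|D|^{\Omega(k)}$ bound from step~1.

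The underlying problem is that $K_k$ is dense: it has $\Theta(k^2)$ edges, and the embedding cost scales with the number of edges of the guest graph. The paper sidesteps this by never passing through $K_k$. Instead it converts the 3SAT instance (via Lemma~\ref{lem:sat2csp}) into a binary CSP whose primal graph $H$ has only $m=O(n_{\textup{3SAT}}+m_{\textup{3SAT}})$ edges, and invokes the embedding theorem of \cite{m-cybt-10} (Theorem~\ref{th:embedding} here): any such $H$ is a minor of $G^{(q)}$ with $q=O(m\log k/k)$, where $k=\tw(G)$. One then moves the CSP from $H$ to $G^{(q)}$ (Lemma~\ref{lem:cspreductionminor}) and collapses $G^{(q)}$ to $G$ at the price of raising the domain to $3^{q}$ (Lemma~\ref{lem:cspreductionblowup}). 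A hypothetical $|D|^{\alpha k/\log k}$ algorithm on $G$ then runs in time $3^{q\cdot \alpha k/\log k}=2^{O(\alpha m)}$, contradicting ETH for small $\alpha$. The point is that the blowup exponent $q$ is proportional to $|E(H)|$, so keeping the intermediate graph \emph{sparse} is essential; routing the complete graph wastes exactly the factor you cannot afford.
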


Theorem~\ref{T:W1} is due to Grohe, Schwentick, and Segoufin~\cite{gss-wecqt-01} (see also Grohe~\cite{g-chcsps-07}). Theorem~\ref{T:beattreewidth} follows from the work of the third author~\cite{m-cybt-10}. Since this statement is stronger than the main theorem of~\cite{m-cybt-10} (which assumes the existence of an algorithm solving binary CSP instances for a class of graphs with unbounded treewidth), we explain how to prove it in the rest of this subsection. This reformulation could be useful also for other problems where lower bounds of this form are proved; it seems to be especially important for the clean handling of lower bounds with respect to two parameters in the exponent.

We first need to recall some definitions and results from \cite{m-cybt-10}. 
A graph $H$ is a \emphdef{minor} of $G$ if $H$ can be obtained from $G$ by
a sequence of vertex deletions, edge deletions, and edge
contractions. Equivalently, a graph $H$ is a minor of $G$ if there is a {\em minor mapping} from $H$ to $G$, which is a function $\psi: V(H)\to 2^{V(G)}$ satisfying the following properties: (1) $\psi(v)$ is a connected vertex set in $G$ for every $v\in V(H)$, (2) $\psi(u)\cap \psi(v)=\emptyset$ for every $u\neq v$, and (3) if $uv\in E(H)$, then there is an edge of $G$ intersecting both $\psi(u)$ and $\psi(v)$. 
 Given a graph $G$ and an integer $q$, we denote by
$G^{(q)}$ the graph obtained by replacing every vertex with a clique
of size $q$ and replacing every edge with a complete bipartite graph
on $q+q$ vertices. The main combinatorial result of \cite{m-cybt-10} is the following embedding theorem:

\begin{theorem}[{\cite[Theorem~3.1]{m-cybt-10}}]\label{th:embedding}
There are computable functions $f_1(G)$, $f_2(G)$, and a universal constant $c$ such that for every $k\ge 2$, if $G$ is a graph with $\tw(G)\ge k$ and $H$ is a graph with $|E(H)|=m \ge f_1(G)$ and no isolated vertices, then $H$ is a minor of $G^{(q)}$ for $q=\lceil cm\log k/k \rceil$. Furthermore, such a minor mapping can be found in time $f_2(G)m^{O(1)}$.
\end{theorem}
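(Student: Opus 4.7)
The plan is to reduce the minor-embedding claim to a low-congestion routing problem in $G$, and then use the vertex blow-up $G^{(q)}$ to absorb the congestion into vertex multiplicities. The main structural input is the well-known approximate duality for treewidth: a graph with $\tw(G) \geq k$ contains a \emph{well-linked} set $W \subseteq V(G)$ of size $\Theta(k)$, in the sense that $W$ admits a concurrent multicommodity flow routing any moderate demand between pairs of $W$-vertices at small congestion. Such a set can be extracted from a tangle, a bramble, or a suitably balanced branch decomposition certifying $\tw(G) \geq k$.

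Given $W$, I would set up the routing as follows. Since $H$ has no isolated vertices, $|V(H)| \leq 2m$; assuming $m \geq f_1(G)$ is large enough, pick an assignment $\phi : V(H) \to W$ of $H$-vertices to terminals in $W$. Consider the multicommodity demand asking for one unit of flow between $\phi(u)$ and $\phi(v)$ for every $uv \in E(H)$, so the total demand is $m$. Well-linkedness of $W$, combined with the standard concurrent multicommodity flow machinery (the Leighton--Rao style $O(\log n)$ flow-cut gap, sharpened through Linial--London--Rabinovich embeddings), guarantees a fractional routing with vertex congestion $O(m \log k / k)$. A randomized rounding or explicit path-stripping argument then produces an integral collection of paths $\{P_{uv}\}_{uv \in E(H)}$ in $G$ with congestion $O(q)$ at every vertex.

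Finally, I would lift these paths to a minor model $\psi : V(H) \to 2^{V(G^{(q)})}$. For each $u \in V(H)$, let $\psi(u)$ consist of a dedicated copy of $\phi(u)$ in $G^{(q)}$ together with the ``first halves'' of the paths $P_{uv}$ leaving $\phi(u)$, each routed through reserved copies of the visited vertices of $G$. A greedy layer assignment works because the congestion is at most $q$, and connectedness of $\psi(u)$ is automatic: adjacent vertices of $G$ induce a complete bipartite subgraph between their copies in $G^{(q)}$, so paths may hop freely between layers. The two halves of $P_{uv}$ meet at adjacent vertices at the midpoint, providing the required edge of the minor between $\psi(u)$ and $\psi(v)$. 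All steps can be carried out in time polynomial in $|V(H)|$ once $G$ is fixed, yielding the claimed $f_2(G)\, m^{O(1)}$ bound. The main obstacle is the routing lemma with the $O(\log k)$ congestion factor, which cannot be obtained by purely combinatorial means and relies on the quantitative LP-duality-based theory of multicommodity flow; this logarithm is precisely what produces the $\log k$ denominator in the exponent of $q$, and it is inherited by every conditional lower bound proved downstream in the paper.
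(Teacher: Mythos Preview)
The paper does not contain a proof of this statement: Theorem~\ref{th:embedding} is quoted verbatim from~\cite[Theorem~3.1]{m-cybt-10} and used as a black box in the derivation of Theorem~\ref{T:beattreewidth}. So there is no ``paper's own proof'' to compare against.

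That said, your sketch is a faithful outline of how the theorem is actually proved in~\cite{m-cybt-10}: extract a well-linked set of size $\Theta(k)$ certifying large treewidth, route the edges of $H$ between (non-injectively) assigned terminals via a multicommodity flow with vertex congestion $O(m\log k/k)$, and then absorb the congestion into the $q$ copies of each vertex in $G^{(q)}$. Two small points of precision. First, the $O(\log k)$ factor is not literally ``Leighton--Rao sharpened by LLR'' on $G$; it is the flow--cut gap applied to the routing instance on the well-linked set, whose relevant size parameter is $|W|=\Theta(k)$, which is why the logarithm is in $k$ rather than $|V(G)|$. Second, your last sentence has the role of $\log k$ slightly garbled: in $q=\lceil cm\log k/k\rceil$ the $\log k$ sits in the numerator; it is only after this value of~$q$ is fed through Lemma~\ref{lem:cspreductionblowup} that it shows up as the $\log k$ in the denominator of the exponent $k/\log k$ in Theorem~\ref{T:beattreewidth}. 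Neither point affects the validity of the overall plan.
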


We will also need the following (fairly straightforward) reductions from \cite{m-cybt-10}:

\begin{lemma}\label{lem:sat2csp}
  Given an instance of 3SAT with $n$ variables and $m$ clauses, it is
  possible to construct in polynomial time an equivalent CSP instance
  with $n+m$ variables, $3m$ binary constraints, and domain size $3$.
\end{lemma}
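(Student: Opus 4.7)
The plan is to use the standard encoding of 3SAT as a binary CSP in which each clause is replaced by a single CSP variable recording which of its three literals witnesses satisfaction. Concretely, I would take the domain to be $D=\{1,2,3\}$, with the convention that for the variables associated to the original propositional variables, the value $1$ encodes True, $2$ encodes False, and $3$ is left unused. I create one variable $y_i$ per 3SAT variable $x_i$ for $i=1,\ldots,n$, and one variable $z_j$ per clause $C_j=\ell_{j,1}\vee \ell_{j,2}\vee \ell_{j,3}$ for $j=1,\ldots,m$, where the value $k\in\{1,2,3\}$ of $z_j$ is intended to select the literal $\ell_{j,k}$ as the one satisfying $C_j$. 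This gives $n+m$ variables and domain size~$3$, as required.

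The constraints would be introduced one per (clause, literal) pair. For each $C_j$ and each $k\in\{1,2,3\}$, letting $x_i$ be the variable underlying $\ell_{j,k}$, I add a single binary constraint $\langle z_j, y_i, R\rangle$ whose relation allows every pair $(k',v)$ with $k'\neq k$ (the constraint is inactive unless $z_j=k$) and, in the case $k'=k$, forces $v$ to be the truth value that makes $\ell_{j,k}$ true, i.e., $v=1$ if $\ell_{j,k}$ is positive and $v=2$ if $\ell_{j,k}$ is negative. This yields exactly $3m$ constraints, and the whole construction is clearly produced in polynomial time.

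Correctness is a direct translation in both directions. From a satisfying assignment $\sigma$ of the 3SAT instance, I set $y_i$ to the encoding of $\sigma(x_i)$ and $z_j$ to the index of any literal of $C_j$ satisfied by $\sigma$; every constraint is then satisfied because whenever $z_j=k$, the literal $\ell_{j,k}$ is true under $\sigma$. Conversely, given a CSP solution, I read off a 3SAT assignment by decoding $y_1,\ldots,y_n$ (and assigning any value to a $y_i$ that happens to take the unused value $3$); for each clause $C_j$, the constraint corresponding to $k=z_j$ and the variable $x_i$ underlying $\ell_{j,k}$ forces $\ell_{j,k}$ to be true, so $C_j$ is satisfied.

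There is no serious obstacle; the construction is essentially one bookkeeping table. The only point worth a second look is that the presence of the ``spurious'' domain value $3$ for the $y_i$ variables cannot create extra CSP solutions, and this is immediate from the argument above, since whenever $x_i$ occurs in some clause the selecting constraint forces $y_i\in\{1,2\}$, while an isolated $y_i$ has no effect on either side of the equivalence.
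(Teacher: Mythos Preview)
Your construction is correct and is exactly the standard reduction: clause variables $z_j$ over $\{1,2,3\}$ record which literal witnesses satisfaction, and the $3m$ binary constraints between each $z_j$ and the underlying propositional variable enforce consistency. The paper itself does not prove this lemma---it is imported from~\cite{m-cybt-10} as a ``fairly straightforward'' reduction---so there is nothing to compare against, but your argument matches the intended one.

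One small wording issue: you write ``whenever $x_i$ occurs in some clause the selecting constraint forces $y_i\in\{1,2\}$,'' which is slightly stronger than what actually holds. The constraint only forces $y_i\in\{1,2\}$ when the literal involving $x_i$ is the \emph{selected} one; if $y_i=3$ in a CSP solution, all you know is that no $z_j$ points to a literal over $x_i$, hence each clause containing $x_i$ is satisfied by some other literal and $x_i$ may be assigned arbitrarily. Your subsequent sentence handles this correctly, so the proof is fine, but the quoted sentence overstates the case. You may also want a word on clauses with fewer than three literals (pad by repeating a literal), since the statement allows ``at most three.''
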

\begin{lemma}\label{lem:cspreductionminor}
  Assume that $G_1$ is a minor of $G_2$. Given a binary CSP instance
$I_1$   with primal graph $G_1$ and a minor mapping from $G_1$
to $G_2$, it is possible to
  construct in polynomial time an
  equivalent instance $I_2$ with primal graph $G_2$ and the same domain.
\end{lemma}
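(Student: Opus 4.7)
\medskip

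The plan is to use the minor mapping $\psi$ to simulate a single variable of $I_1$ by an entire \emph{branch set} of $G_2$, forced to be monochromatic. Specifically, the variables of $I_2$ will be $V(G_2)$ with the same domain $D$, and I want every variable in $\psi(v)$ (for $v\in V(G_1)$) to take the same value, which will play the role of $f_1(v)$.

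The construction proceeds in three stages, each handling a different type of edge of $G_2$. First, for every $v\in V(G_1)$, since $\psi(v)$ is connected in $G_2$ by definition of a minor mapping, I pick a spanning tree $T_v$ of $G_2[\psi(v)]$ and for each edge of $T_v$ add the \emph{equality constraint} with relation $\{(d,d):d\in D\}$. These constraints force all variables in $\psi(v)$ to agree, as the $T_v$'s are connected. Second, for every constraint $\langle u,v,R\rangle$ of $I_1$, since $uv\in E(G_1)$, there exists at least one edge of $G_2$ with one endpoint in $\psi(u)$ and the other in $\psi(v)$; I pick such an edge $xy$ (with $x\in\psi(u),y\in\psi(v)$) and add the constraint $\langle x,y,R\rangle$ to $I_2$. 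Third, on every edge of $G_2$ not yet covered by the previous two steps, I add the trivial constraint with relation $D^2$, so that the primal graph of $I_2$ is exactly $G_2$ as required (the $\psi(v)$'s being pairwise disjoint ensures no unwanted overlap of scopes). All of this is clearly computable in polynomial time from $I_1$ and $\psi$.

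For equivalence, given a solution $f_1:V(G_1)\to D$ of $I_1$, set $f_2(x)=f_1(v)$ whenever $x\in\psi(v)$, and $f_2(x)$ arbitrarily for $x$ outside every branch set. The equality and trivial constraints are trivially satisfied, and for each original constraint $\langle x,y,R\rangle$ coming from $\langle u,v,R\rangle\in C_1$, we have $(f_2(x),f_2(y))=(f_1(u),f_1(v))\in R$. Conversely, given a solution $f_2$ of $I_2$, the equality constraints along $T_v$ force $f_2$ to be constant on each $\psi(v)$; define $f_1(v)$ to be that common value. For every $\langle u,v,R\rangle\in C_1$, the corresponding constraint $\langle x,y,R\rangle$ of $I_2$ is satisfied, and by the monochromaticity $(f_1(u),f_1(v))=(f_2(x),f_2(y))\in R$.

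There is essentially no hard step here: the construction is a direct unpacking of the definition of a minor mapping, and the main (minor) care point is simply ensuring the primal graph of $I_2$ equals $G_2$ rather than merely being a subgraph, which is why the third stage adding trivial constraints is necessary.
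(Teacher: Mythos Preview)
Your proof is correct and is exactly the standard argument. The paper itself does not give a proof of this lemma: it merely cites it from~\cite{m-cybt-10} as a ``fairly straightforward'' reduction, so there is nothing substantive to compare against, but what you have written is precisely the expected construction.
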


\begin{lemma}\label{lem:cspreductionblowup}
  Given a binary CSP instance $I_1=(V_1,D_1,C_1)$
  with primal graph $G^{(q)}$ (where $G$ has no isolated vertices), it is possible to
  construct (in time polynomial in the size of the {\em output}) an
  equivalent instance $I_2=(V_2,D_2,C_2)$ with primal graph
  $G$ and $|D_2|=|D_1|^q$.
\end{lemma}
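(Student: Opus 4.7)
The plan is to produce $I_2$ by ``contracting'' each of the $q$-cliques of $G^{(q)}$ into a single variable of $I_2$ whose domain encodes a simultaneous assignment to all $q$ contracted variables. Concretely, take $V_2 := V(G)$, $D_2 := D_1^q$, and for each $v\in V(G)$ let $v_1,\dots,v_q\in V(G^{(q)})$ denote the corresponding clique. An assignment $f_2:V_2\to D_2$ is intended to correspond to $f_1:V(G^{(q)})\to D_1$ via $f_1(v_i)=(f_2(v))_i$.

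For every edge $uv\in E(G)$, define a relation $R_{uv}\subseteq D_2\times D_2$ consisting of those pairs $(\alpha,\beta)\in D_1^q\times D_1^q$ such that (a) every binary constraint of $I_1$ whose scope is inside $\{u_1,\dots,u_q\}$ is satisfied by $\alpha$; (b) every binary constraint of $I_1$ whose scope is inside $\{v_1,\dots,v_q\}$ is satisfied by $\beta$; and (c) every binary constraint of $I_1$ whose scope is some $(u_i,v_j)$ (one endpoint in each clique) is satisfied by $(\alpha_i,\beta_j)$. Put $C_2:=\{\langle u,v,R_{uv}\rangle: uv\in E(G)\}$. By construction, every pair $uv\in E(G)$ carries a constraint in $I_2$ (possibly a trivial one, equal to $D_2\times D_2$, which we still include so that the scope forces the edge), and no other scopes appear, so the primal graph of $I_2$ is exactly $G$, and $|D_2|=|D_1|^q$.

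Correctness: if $f_1$ satisfies $I_1$, then $f_2(v):=(f_1(v_1),\dots,f_1(v_q))$ satisfies every $R_{uv}$ directly from the definition. Conversely, if $f_2$ satisfies $I_2$, define $f_1(v_i):=(f_2(v))_i$; then every constraint of $I_1$ lies either inside some clique $\{v_1,\dots,v_q\}$ or between two adjacent cliques. In the latter case it is enforced by clause~(c) of the corresponding $R_{uv}$. In the former case we use the crucial hypothesis that $G$ has no isolated vertices: pick any neighbor $u$ of $v$ in $G$, and clauses~(a)/(b) of $R_{uv}$ enforce all intra-clique constraints at $v$. Hence $I_1$ and $I_2$ are equivalent.

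Finally, the construction runs in time polynomial in the output: each relation $R_{uv}$ is a subset of $D_2\times D_2$ of size at most $|D_1|^{2q}$, and can be generated by enumerating all pairs $(\alpha,\beta)\in D_1^q\times D_1^q$ and checking each of the at most $O(q^2)$ relevant constraints of $I_1$; the number of edges $uv$ is bounded by $|E(G)|\le |V(G)|^2$, and the output already has size $\Theta(|E(G)|\cdot |D_1|^{2q})$ in the worst case. The only mildly delicate point is the bookkeeping above—ensuring each intra-clique constraint is enforced by at least one surviving edge constraint, which is precisely where the no-isolated-vertex hypothesis is used; everything else is routine.
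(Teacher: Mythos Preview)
Your proof is correct and is exactly the natural construction one would expect. Note that the paper itself does not give a proof of this lemma: it is stated as one of several ``fairly straightforward'' reductions imported from~\cite{m-cybt-10}, so there is no in-paper proof to compare against. Your argument matches the intended one: collapse each $q$-clique to a single variable over $D_1^q$, and for every edge $uv\in E(G)$ encode in $R_{uv}$ all intra-clique constraints at $u$, at $v$, and all inter-clique constraints between them; the no-isolated-vertex hypothesis is used precisely so that every intra-clique constraint is picked up by at least one $R_{uv}$. The running-time analysis is fine once one observes that the output size is at least $|D_1|^q$ (the domain must be listed), so the $|D_1|^{2q}$-time enumeration per edge is polynomial in the output.
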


With these tools at hand, we are ready to prove Theorem~\ref{T:beattreewidth}.

\begin{proof}[Proof of Theorem~\ref{T:beattreewidth}]
Assume that the ETH holds.  By Corollary~\ref{cor:eth-sparse}, let $s>0$ be a universal constant such that 3SAT has no $2^{s(n+m)}(n+m)^{O(1)}$-time algorithm, where $n$ denotes the number of variables and $m$ denotes the number of clauses.

Let $c\ge1$ be a universal constant satisfying Theorem~\ref{th:embedding} and let us define $\alpha_\CSP=s/(6c\log_2 3)$.  Let $G$ be a graph with treewidth $k\ge2$.  Throughout this proof, we consider that $G$ (and thus~$k$) are fixed, in the sense that the $O()$ notation can hide factors depending on $G$ and~$k$.  In contrast, the notation ``$\poly$'' denotes a fixed polynomial, independent of $G$ and~$k$.  Suppose that an algorithm~$\mathbb{A}_G$ decides the binary CSP instances whose primal graph is~$G$ in time $O(|D|^{\alpha_\CSP\cdot k/\log k})$.

Consider an instance $\Psi$ of 3SAT with $n$ variables and $m$ clauses.  Using Lemma~\ref{lem:sat2csp}, we construct in $\poly(n+m)$ time an instance $I_1$ of CSP equivalent to~$\Psi$ with at most $n+m$ variables, at most~$3m$ binary constraints, and domain size 3.  Let $H$ be the primal graph of $I_1$; without loss of generality, it does not have any isolated vertex. If $3m<\max\{f_1(G),k\}$, then we can solve $I_1$, and thus~$\Psi$, in $O(1)$ time (hiding a factor depending on~$G$ and~$k$). Otherwise, by Theorem~\ref{th:embedding}, graph $H$ is a minor of $G^{(q)}$ for $q=\lceil c(3m)\log k/k \rceil \le 2c(3m)\log k/k$ (the inequality uses the facts that $c\ge1$ and $3m\ge k\ge2$), and the minor mapping can be found in time $f_2(G)\poly(m)$. Then, by Lemma~\ref{lem:cspreductionminor}, $I_1$~can be turned in time $\poly(n+m)$ into an instance $I_2$ with primal graph $G^{(q)}$ and domain size 3, which, by Lemma~\ref{lem:cspreductionblowup}, can be turned in time $O(\poly(3^{2q}))$ into an instance $I_3$ with primal graph $G$ and domain size $3^q$.  Using these reductions and algorithm~$\mathbb A_G$ to solve~$I_3$, we can solve~$\Psi$ in time $\poly(n+m)+2^{\beta q}+O((3^q)^{\alpha_\CSP\cdot k/\log k})$, for some universal constant $\beta>0$ that does not depend on~$k$.  By definition of~$q$ and $\alpha_\CSP$, this is \[\poly(n+m)+2^{\gamma m\log k/k}+O(2^{sm})\] for some universal constant $\gamma>0$ that does not depend on~$k$.

Let $\bar k\ge2$ be a universal constant such that $\gamma\log k/k\le s$ for every $k\ge\bar k$; then the total running time becomes $O(2^{sm}\poly(n+m))$. Therefore, we have proved that if there is a graph $G$ with $\tw(G)\ge \bar k$ and an algorithm~$\mathbb A_G$ deciding the binary CSP instances with primal graph~$G$ in time $O(|D|^{\alpha_\CSP\cdot k/\log k})$, then we can solve any 3SAT instance in time $2^{s(n+m)}\poly(n+m)$, which, by our choice of~$s$, implies that the ETH does not hold. By Remark~\ref{R:constants}, this suffices to conclude the proof.
\end{proof}

We will rely in particular on the following corollary of Theorems~\ref{T:W1} and~\ref{T:beattreewidth}:

\begin{corollary}\label{C:beattreewidth}
  \begin{enumerate}
    \item Deciding the binary CSP instances whose primal graph has at
    most $k$ vertices, is four-regular and bipartite is W[1]-hard
    when parameterized by $k$.
  \item Assuming the ETH, there exists a universal constant $\alpha_{\CSP}$ such that for any fixed $k \geq 2$, there is no algorithm deciding the binary CSP instances whose primal graph has at most $k$ vertices, is four-regular and bipartite in time $O(|D|^{\alpha_{\REGULARCSP} \cdot k/\log k})$.
  \end{enumerate}
\end{corollary}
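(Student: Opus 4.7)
The plan is to derive both parts of the corollary by specializing Theorems~\ref{T:W1} and~\ref{T:beattreewidth} to the dense family~$\mathcal{H}$ of bipartite four-regular expanders provided by Lemma~\ref{L:existence}. The crucial observation is that, for an expander, the number of vertices and the treewidth are linearly related, so parameterizing by the number of vertices is essentially equivalent to parameterizing by treewidth.

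First, I would establish the quantitative relation. For every graph $G\in\mathcal H$, the expansion condition gives $\lambda(G)\le c_{\textup{exp}}\cdot 4$ with $c_{\textup{exp}}<1$, so Lemma~\ref{L:tw} yields
\[
\tw(G)\ge\left\lfloor\frac{|V(G)|\cdot(4-\lambda(G))}{32}\right\rfloor\ge c_1|V(G)|
\]
for a universal constant $c_1>0$. Combined with the trivial inequality $\tw(G)\le|V(G)|-1$, this gives $\tw(G)=\Theta(|V(G)|)$ uniformly over~$\mathcal H$. Moreover, by density of~$\mathcal H$, for every integer $k$ larger than some absolute constant there is a graph $G_k\in\mathcal H$ with $c_0k\le|V(G_k)|\le k$ for a universal constant $c_0>0$; in particular $\tw(G_k)\ge c_0c_1k$.

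For the first item, Theorem~\ref{T:W1} applied to~$\mathcal H$ (which has unbounded treewidth) shows that deciding binary CSP instances whose primal graph lies in~$\mathcal H$ is W[1]-hard parameterized by the treewidth of the primal graph. Since on~$\mathcal H$ we have $\tw(G)=\Theta(|V(G)|)$, this parameterization is equivalent, up to a multiplicative constant, to parameterization by the number~$k$ of vertices, yielding W[1]-hardness in the claimed form.

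For the second item, assume the ETH and let $\alpha_{\CSP}$ be the constant of Theorem~\ref{T:beattreewidth}. Fix $k$ large enough that $G_k\in\mathcal H$ exists and $\tw(G_k)\ge 2$, and apply Theorem~\ref{T:beattreewidth} to $G_k$: no algorithm decides the binary CSP instances with primal graph $G_k$ in time $O(|D|^{\alpha_{\CSP}\tw(G_k)/\log\tw(G_k)})$. Using $\tw(G_k)\ge c_0c_1k$ and monotonicity of $x/\log x$ for $x\ge3$, we get $\tw(G_k)/\log\tw(G_k)\ge c_2\cdot k/\log k$ for a universal constant $c_2>0$. Setting $\alpha_{\REGULARCSP}:=c_2\alpha_{\CSP}$, we obtain that no algorithm decides the binary CSPs whose primal graph is $G_k$ (a four-regular bipartite graph with at most $k$ vertices) in time $O(|D|^{\alpha_{\REGULARCSP}\cdot k/\log k})$. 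The remaining finitely many small values of~$k$ are absorbed by invoking Remark~\ref{R:constants} (shrinking $\alpha_{\REGULARCSP}$ if needed, so that the exponent drops below~$1$ and the trivial $\Omega(n)$ lower bound on any decision algorithm does the job).

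The only mildly delicate point is the uniformity across~$k$: we must ensure that a single universal constant $\alpha_{\REGULARCSP}$ works for all~$k$, which is guaranteed because $c_0$, $c_1$, $c_2$, and $\alpha_{\CSP}$ are all universal and do not depend on the particular graph chosen in~$\mathcal H$. No other obstacle arises, since the existence of the expander family is given by Lemma~\ref{L:existence} and the hard technical content is already packaged inside Theorem~\ref{T:beattreewidth}.
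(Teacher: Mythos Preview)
Your proposal is correct and follows essentially the same approach as the paper: both arguments specialize Theorems~\ref{T:W1} and~\ref{T:beattreewidth} to the dense family~$\mathcal H$ of bipartite four-regular expanders from Lemma~\ref{L:existence}, use Lemma~\ref{L:tw} to get $\tw(G)=\Theta(|V(G)|)$ on that family, and then convert the treewidth-based bounds into bounds in terms of~$k$, handling the finitely many small values of~$k$ via Remark~\ref{R:constants}. The only cosmetic differences are in how the constants are named and how the inequality $\tw(P)/\log\tw(P)\ge\Omega(k/\log k)$ is justified.
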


\begin{proof}
For the first item, we apply Theorem~\ref{T:W1} to the infinite family $\mathcal{H}$
of four-regular bipartite expanders output by Lemma~\ref{L:existence}.
Then, Lemma~\ref{L:tw} implies that for each such graph,
the treewidth is linear in the number of vertices, and so we conclude
that the problem is W[1]-hard when parameterized by $k$.

For the second item, by Lemma~\ref{L:existence}, there are universal constants~$c>0$, $\bar k>0$, and~$c_{\textup{exp}}<1$ such that, for any $k\ge \bar k$, there exists a four-regular bipartite expander~$P$ such that $\lambda(P)/4<c_{\textup{exp}}$, and with at least $ck$ but at most $k$ vertices.  By Lemma~\ref{L:tw}, for some universal constant~$c'>0$, the treewidth of~$P$ is at least~$c'k$, which we can assume to be at least two (up to increasing~$\bar k$).

We choose $\alpha_{\REGULARCSP}$ so that $\alpha_{\REGULARCSP} \leq \alpha_{\CSP}/c'$. Let us assume that there exists an algorithm as described by the corollary.  Assume first that $k\ge\bar k$, and let $P$ be obtained as in the previous paragraph.  The assumed algorithm would in particular decide the binary CSP instances whose primal graph is $P$ in time $O(|D|^{\alpha_{\REGULARCSP} \cdot k/\log k}) = O(|D|^{\alpha_{\REGULARCSP}/c' \cdot \tw(P)/\log \tw(P)}) = O(|D|^{\alpha_{\CSP} \cdot \tw(P)/\log \tw(P)}$ (for fixed~$P$, with $\tw(P)\ge2$). By Theorem~\ref{T:beattreewidth}, this would contradict the ETH.
 
Thus the second part of the corollary is proved for $k\ge \bar k$. The other cases are trivial if $\alpha_{\REGULARCSP}$ is small enough, as in Remark~\ref{R:constants} (because if $k\ge2$, there indeed exist some binary CSP instances whose primal graph has the form indicated in the statement of the corollary).
\end{proof}

\subsection{Cross gadgets}

We rely extensively on the following intricate family of gadgets introduced by the third author in his proof of hardness of \textsc{Planar Multiway Cut}~\cite{m-tlbpmc-12}, which we call \emphdef{cross gadgets}; see Figure~\ref{F:cross1}, left.  Let $\Delta$ be an integer.   The gadgets always have the form of a planar graph $G_S$ embedded on a disk, with $4\Delta+8$ distinguished vertices on its boundary, which are, in clockwise order, denoted by
\[UL, u_1, \ldots, u_{\Delta+1}, UR, r_1, \ldots , r_{\Delta+1}, DR, d_{\Delta+1}, \ldots d_1, DL, \ell_{\Delta+1}, \ldots ,\ell_1.\] 
The embedding is chosen so that the boundary of the disk intersects the graph precisely in this set of distinguished vertices; the interior of the edges lie in the interior of the disk.  We consider the vertices $UL, UR, DR$, and $DL$ as terminals in that gadget, and thus a multiway cut $M$ of the gadget is a subset of the edges of $G_S$ such that $G_S \setminus M$ has at least four components, and each of the terminals is in a distinct component. We say that a multiway cut $M$ of the gadget \emphdef{represents} the pair $(i,j) \in [\Delta]^2$ (where, as usual, $[\Delta]$ denotes the set $\{1,\ldots,\Delta\}$) if $G_S \setminus M$ has exactly four components that partition the distinguished vertices into the following classes:
\begin{align*}
  \{UL,u_1, \ldots, u_j, \ell_1, \ldots, \ell_i \}& \quad &\{UR, u_{j+1}, \ldots _{\Delta+1}, r_1, \ldots ,r_i\} \\
  \{DL, d_1, \ldots d_j , \ell_{i+1}, \ldots , \ell_{\Delta+1}\} & \quad & \{DR,d_{j+1}, \ldots, d_{\Delta+1}, r_{i+1}, \ldots r_{\Delta+1}\}
\end{align*}

We remark that, as in the original article~\cite{m-tlbpmc-12}, the notation $(i,j)$ is in matrix form (i.e., (row, column), with $(1,1)$ being in the corner~$UL$). We will use the same convention throughout this paper, especially in Section~\ref{S:pivot}.

The properties that we require are summarized in the following lemma:

\begin{figure}
    \centering
    \def\svgwidth{.85\textwidth}
    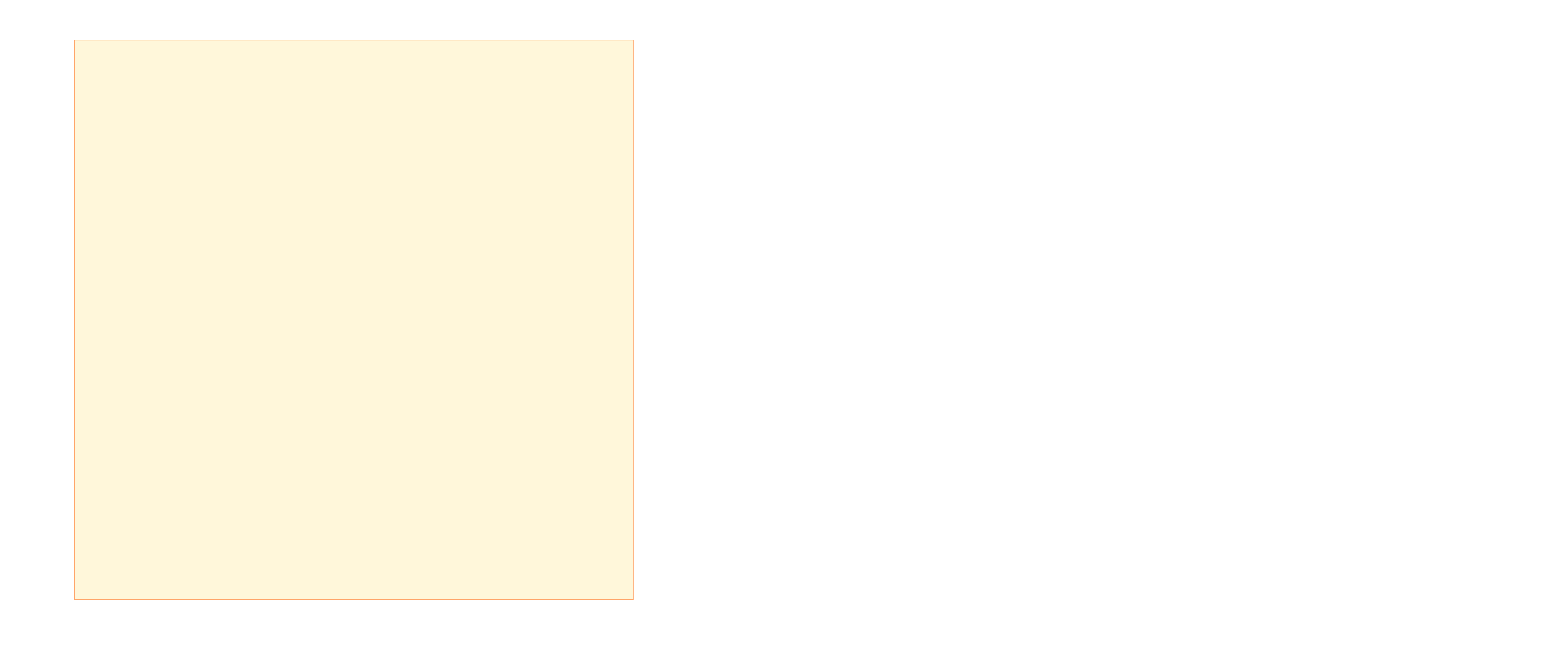
    \caption{Left: a cross gadget $G_S$ for $\Delta=3$. The dashed line indicates a multiway cut that represents the pair $(2,3)$. Right: a dual cross gadget $G_S^*$ for $\Delta=3$. The dashed lined is a dual multiway cut that represents the pair $(2,3)$.}
    \label{F:cross1}
\end{figure}

\begin{lemma}[{\cite[Lemma~2]{m-tlbpmc-12}}]\label{L:marx:gadgets}
  Given a subset $S \subseteq [\Delta]^2$,
  we can construct in polynomial time a planar gadget $G_S$ with $\poly(\Delta)$ unweighted edges and vertices, and an integer $D_1$ such that the following properties hold:
  \begin{enumerate}\renewcommand{\theenumi}{\roman{enumi}}
  \item For every $(i,j) \in S$, the gadget $G_S$ has a multiway cut of weight $D_1$ representing $(i,j)$.
  \item Every multiway cut of $G_S$ has weight at least $D_1$.
  \item If a multiway cut of $G_S$ has weight $D_1$, then it represents some $(i,j) \in S$.
  \end{enumerate}
\end{lemma}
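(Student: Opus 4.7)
The plan is to recognize that Lemma~\ref{L:marx:gadgets} is transcribed verbatim from Lemma~2 of~\cite{m-tlbpmc-12}, so at the level of the present paper the proof really just invokes that construction. Still, I would sketch the intuition from scratch, since the rest of our arguments rely on the internal structure of these gadgets rather than only on the black-box statement.

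First I would build a baseline gadget for the unrestricted case $S = [\Delta]^2$: a planar graph drawn on a disk, with the $4\Delta+8$ prescribed boundary vertices in the prescribed cyclic order, whose only minimum multiway cuts consist of a ``horizontal separator'' at some row level $i$ together with a ``vertical separator'' at some column level $j$. The natural candidate is a grid-like structure in which weights are simulated by parallel edges (and by subdivisions if a simple graph is wanted), with multiplicities tuned so that each row choice and each column choice contributes the same total edge count, yielding representing cuts of a uniform weight $D_0$. To enforce the restriction to $S$, I would then add, near the $(i,j)$-cell of the grid for each $(i,j)\notin S$, a small penalty substructure that adds cost precisely to the cut representing $(i,j)$ and to no cut representing a pair in $S$. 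This yields item~(i) with $D_1$ equal to the common weight achieved by pairs in $S$, and shows that pairs outside $S$ cost strictly more.

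The main obstacle, and the technical heart of the original construction, will be the unconditional lower bound in item~(ii) together with the equality case in~(iii): one cannot simply assume a priori that a minimum multiway cut splits cleanly as ``one row plus one column'', since it could meander, share edges between the two putative separators, or try to use the penalty substructures as shortcuts. The argument in~\cite{m-tlbpmc-12} handles this via a planar uncrossing/rerouting analysis: dually, a multiway cut of the disk separating the four corner terminals can be rerouted without increasing its weight until it consists of two monotone arcs, one separating top from bottom and one separating left from right, meeting in a single cell, after which the desired properties reduce to a row--column count built into the design. The delicate point, and the reason the original proof is intricate, is to design the baseline structure and the penalty substructures tightly enough that no shortcut configuration defeats the bound. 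Since Lemma~\ref{L:marx:gadgets} is used here as a black box, I would ultimately defer to the detailed proof in~\cite{m-tlbpmc-12}.
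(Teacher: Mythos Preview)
Your proposal is correct and matches the paper's approach: the paper does not prove Lemma~\ref{L:marx:gadgets} at all but simply cites it from~\cite{m-tlbpmc-12}, adding only the remark that the polynomially bounded integer weights used there can be emulated by parallel unweighted edges. Your sketch of the underlying construction is a bonus that the paper itself does not include.
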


Note that in~\cite{m-tlbpmc-12}, the third author uses weights to define the gadgets, but as he explains at the end of the introduction, the weights are polynomially large integers and thus can be emulated with parallel unweighted edges.

In the following, we will also use the dual of the graph $G_S$ as one of our gadgets, yielding a \emphdef{dual cross gadget} $G_S^*$ (see Figure~\ref{F:cross1}). Its properties mirror exactly the ones of cross gadgets in a dual setting. In the dual setting, the gadget $G_S^*$ still has the form of a planar graph embedded on a disk $D$, with $4\Delta+8$ distinguished faces incident to its boundary, which are, in clockwise order, denoted by
\[UL^*, u_1^*, \ldots, u_{\Delta+1}^*, UR^*, r_1^*, \ldots , r_{\Delta+1}^*, DR^*, d_{\Delta+1}^*, \ldots d_1^*, DL^*, \ell_{\Delta+1}^*, \ldots ,\ell_1^*.\] 

In $G^*_S$, the vertices dual to boundary faces of $G_S$ lie on the boundary on the disk instead of the interior, see Figure~\ref{F:cross1}, right.  As above, the boundary of the disk intersects the graph $G^*_S$ precisely in the distinguished vertices.

A dual multiway cut is a set of edges $M^*$ of $G_S^*$ such that cutting the disk $D$ along $M^*$ yields at least four connected components, and the four terminal faces end up in distinct components. We say that a dual multiway cut $M^*$ \emphdef{represents} a pair $(i,j) \in [\Delta]^2$ if cutting the disk $D$ along $M^*$ yields exactly four connected components that partition the distinguished faces into the following classes:

\begin{align*}
  \{UL^*,u_1^*, \ldots, u_j^*, \ell_1^*, \ldots, \ell_i^* \}& \quad &\{UR^*, u_{j+1}^*, \ldots u_{\Delta+1}^*, r_1^*, \ldots ,r_i^*\} \\
  \{DL^*, d_1^*, \ldots d_j^* , \ell_{i+1}^*, \ldots , \ell_{\Delta+1}^*\} & \quad & \{DR^*,d_{j+1}^*, \ldots, d_{\Delta+1}^*, r_{i+1}^*, \ldots r_{\Delta+1}^*\}
\end{align*}

For convenience, we restate the content of Lemma~\ref{L:marx:gadgets} in the dual setting in a separate lemma.  By duality, its proof follows directly from Lemma~\ref{L:marx:gadgets}.

\begin{lemma}[{\cite[Dual version of Lemma~2]{m-tlbpmc-12}}]\label{L:marx:gadgetsdual}
  Given a subset $S \subseteq [\Delta]^2$,
  we can construct in polynomial time a planar gadget $G_S^*$ with $\poly(\Delta)$ unweighted edges and vertices, and an integer $D_1$ such that the following properties hold:
  \begin{enumerate}\renewcommand{\theenumi}{\roman{enumi}}
  \item For every $(i,j) \in S$, the gadget $G_S^*$ has a dual multiway cut of weight $D_1$ representing $(i,j)$.
  \item Every dual multiway cut of $G_S^*$ has weight at least $D_1$.
  \item If a dual multiway cut of $G_S^*$ has weight $D_1$, then it represents some $(i,j) \in S$.
  \end{enumerate}
\end{lemma}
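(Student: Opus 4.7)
The plan is to derive Lemma~\ref{L:marx:gadgetsdual} directly from Lemma~\ref{L:marx:gadgets} by invoking standard planar duality on the disk. Starting from the gadget $G_S$ of Lemma~\ref{L:marx:gadgets}, I would construct its planar dual $G_S^*$: faces of $G_S$ become vertices of $G_S^*$, and edges are in bijection $e \leftrightarrow e^*$, with each dual edge crossing exactly its primal edge. The disk boundary of $G_S$ is subdivided by its $4\Delta+8$ distinguished vertices into $4\Delta+8$ arcs, each arc lying inside a unique face of $G_S$ incident to the boundary. I would redraw $G_S^*$ (a topologically harmless move) so that the vertices of $G_S^*$ corresponding to these boundary faces are placed on the disk boundary, arranged in the cyclic order dictated by the arcs. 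The distinguished faces of $G_S^*$ are then precisely the dual faces of the distinguished vertices of $G_S$; assigning them the labels $UL^*, u_1^*, \ldots, \ell_1^*$ to match the corresponding primal vertices yields exactly the boundary structure described before Lemma~\ref{L:marx:gadgetsdual}.

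Next, I would use the edge bijection to transfer multiway cuts to dual multiway cuts. The key observation is a standard planar-duality statement on the disk: a set $M \subseteq E(G_S)$ is a multiway cut separating the four terminals $UL, UR, DR, DL$ into four components of $G_S - M$ if and only if the corresponding set $M^* \subseteq E(G_S^*)$, treated as a set of curves, cuts the disk into (at least) four regions, with the four terminal faces $UL^*, UR^*, DR^*, DL^*$ lying in four distinct regions. Drawing $G_S$ and $G_S^*$ simultaneously, the connected components of $G_S - M$ are in bijection with the regions of the disk cut along $M^*$: each primal vertex and each primal face (i.e.\ dual vertex) lies in a unique such region, and by construction each distinguished vertex $v$ of $G_S$ ends up in the same region as its dual distinguished face $v^*$. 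Consequently, the partition of the boundary vertices induced by $M$ corresponds precisely, label by label, to the partition of the boundary faces induced by $M^*$.

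With this correspondence in hand, and since $|M| = |M^*|$ by the edge bijection, the three conclusions of Lemma~\ref{L:marx:gadgetsdual} are direct transcriptions of items (i), (ii), (iii) of Lemma~\ref{L:marx:gadgets} applied to~$G_S$, with the same constant~$D_1$. I would expect the main obstacle, and the point where the proof requires careful bookkeeping rather than deep new ideas, to be the formalization of the ``cutting the disk along $M^*$'' operation together with the region/component bijection: the disk boundary must be treated as a topological object separate from the graph edges, the boundary arcs must be correctly identified with the distinguished dual faces, and one must verify that the $4\Delta+8$ labels on the primal boundary transfer to the correct cyclic positions on the dual boundary so that the definition of ``representing $(i,j)$'' matches on both sides. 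Beyond this careful dictionary, the argument is mechanical.
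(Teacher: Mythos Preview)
Your proposal is correct and follows exactly the paper's approach: the paper simply states that ``by duality, its proof follows directly from Lemma~\ref{L:marx:gadgets}'' without further detail, and your writeup is a careful unpacking of precisely that duality argument. The bookkeeping you flag (placing dual vertices on the disk boundary, matching labels, and verifying the component/region bijection) is indeed the only content, and you have identified it accurately.
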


\section{The \texorpdfstring{4}{}-regular graph tiling problem}\label{S:pivot}

We introduce the problem $\textsc{4-Regular Graph Tiling}$ which will be used as a basis to prove the reductions involved in Theorems~\ref{T:maincutgraph} and~\ref{T:mainmulticut}.

\begin{framed}
\textsc{4-Regular Graph Tiling}

\textbf{Input:} Positive integers $k$, $\Delta$; a four-regular graph~$\Gamma$ on~$k$ vertices where the edges are labeled by $U,D,L,R$ in a way that each vertex is incident to exactly one of each label; for each vertex~$v$, a set~$S_v \subseteq [\Delta] \times [\Delta]$.

\textbf{Output:} For each vertex~$v$, a value $s_v \in S_v$ such that if $s_v=(i,j)$, 
\begin{enumerate}
\item the first coordinates of $s_{L(v)}$ and $s_{R(v)}$ are both~$i$, and
\item the second coordinates of $s_{U(v)}$ and $s_{D(v)}$ are both~$j$,
\end{enumerate}
where $U(v)$, $D(v)$, $L(v)$, and $R(v)$ denote the vertex of the graph $\Gamma$ connected to~$v$ via an edge labeled respectively by $U$, $D$, $L$, and $R$.
\end{framed}

We call the two conditions above the \emphdef{compatibility conditions} of the \textsc{4-Regular Graph Tiling} instance. The graph in the input is allowed to have parallel edges. It is easy to see that the \textsc{Grid Tiling} problem~\cite{m-opgas-07} is a special case of \textsc{4-Regular Graph Tiling}. 

In this section, we prove a larger lower bound for this more general problem: We prove a $\Delta^{\Omega(k/\log k)}$ lower bound, conditionally to the ETH, for \textsc{4-Regular Graph Tiling}, even when the problem is restricted to bipartite instances and when fixing $k$.  We also show that it is W[1]-hard when parameterized by the integer~$k$ (even for bipartite instances).   Precisely:
\begin{theorem}\label{T:pivot}
\begin{enumerate}
\item The \textsc{4-Regular Graph Tiling} problem restricted to instances whose underlying graph is bipartite is W[1]-hard parameterized by the integer~$k$.
  \item Assuming the ETH, there exists a universal constant $\alpha_{\GT}$ such that for any fixed integer $k\geq 2$, there is no algorithm that decides all the \textsc{4-Regular Graph Tiling} instances whose underlying graph is bipartite and has at most $k$ vertices, in time $O(\Delta^{\alpha_{\GT} \cdot k / \log k})$.
\end{enumerate}
\end{theorem}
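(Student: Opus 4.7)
The plan is to reduce from binary CSP instances whose primal graph is $4$-regular bipartite, whose hardness is given by Corollary~\ref{C:beattreewidth}, to \textsc{4-Regular Graph Tiling} restricted to bipartite underlying graphs, preserving the number of vertices up to a constant factor and letting the tile size $\Delta$ equal the CSP domain size $|D|$. Given a CSP instance whose primal graph $G$ is $4$-regular bipartite on $n$ vertices with bipartition $V(G) = A \sqcup B$, I take $V(\Gamma) := E(G)$, so $|V(\Gamma)| = 2n$; for an edge $e = (a,b) \in E(G)$ with $a \in A$ and $b \in B$, the tile $s_{v_e} = (\alpha,\beta) \in D^2$ will encode the pair (value of $a$, value of $b$), and I set $S_{v_e} := R_e$, the CSP relation associated to $e$.

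To define the graph $\Gamma$ and its edge labeling, I first invoke Petersen's $2$-factor theorem to decompose $E(G) = E_0 \sqcup E_1$ into two $2$-regular subgraphs, and at each $v \in V(G)$ (which has two edges in each $E_i$) I arrange the four incident edges of $G$ in a cyclic order alternating between $E_0$ and $E_1$. For each $v \in V(G)$ I add to $\Gamma$ a $4$-cycle on the four $G$-edges at $v$ following this cyclic order, making $\Gamma$ $4$-regular with $|V(\Gamma)| = 2n$. The four edges of the $4$-cycle at each $a \in A$ are labeled alternately $L, R, L, R$, and the four edges of the $4$-cycle at each $b \in B$ are labeled alternately $U, D, U, D$; a direct check shows that every vertex of $\Gamma$ has exactly one incident edge of each label. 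The graph $\Gamma$ is bipartite via the $2$-coloring $V(\Gamma) = E_0 \sqcup E_1$, since each edge of $\Gamma$ arises from a consecutive-in-cyclic-order pair at some $v \in V(G)$ and thus joins an $E_0$-edge to an $E_1$-edge.

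In this construction, each L/R cycle of $\Gamma$ is exactly the $4$-cycle at some $a \in A$, so the compatibility condition of \textsc{4-Regular Graph Tiling} forces the first coordinate to be constant on that cycle, which is interpreted as the CSP-value assigned to $a$; symmetrically for the second coordinate and vertices of $B$. Thus tiling solutions correspond bijectively to CSP assignments that satisfy each relation $R_e$, and the reduction is correct and polynomial-time. With $|V(\Gamma)| = 2n$ and $\Delta = |D|$, an algorithm for \textsc{4-Regular Graph Tiling} on bipartite instances running in time $O(\Delta^{\alpha_{\GT} \cdot |V(\Gamma)| / \log |V(\Gamma)|})$ would yield a $|D|^{O(\alpha_{\GT} \cdot n / \log n)}$-time algorithm for $4$-regular bipartite binary CSP, contradicting Corollary~\ref{C:beattreewidth}(2) and hence the ETH once $\alpha_{\GT}$ is chosen small enough relative to the constant of that corollary; the W[1]-hardness item follows from part (1) of the same corollary via the identical reduction (the parameter $|V(\Gamma)|$ is linear in the CSP parameter). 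Small-$k$ cases are handled as in Remark~\ref{R:constants}. The main technical point is ensuring $\Gamma$ is simultaneously bipartite, $4$-regular, and label-balanced, which is precisely what the Petersen-theorem-based alternating cyclic orders deliver.
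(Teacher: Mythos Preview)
Your proof is correct and follows a genuinely different route from the paper's. Both arguments reduce from the $4$-regular bipartite binary CSP instances of Corollary~\ref{C:beattreewidth}, but the tiling graph $\Gamma$ is built differently.

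The paper constructs $\Gamma$ on $6|V(G)|$ vertices: each CSP variable becomes a $4$-cycle carrying the diagonal set $\Diag([\Delta])$, each CSP edge becomes a single additional vertex (attached by double edges to the adjacent variable-cycles) carrying the relation, and the label structure and bipartiteness of $\Gamma$ are derived from a proper $4$-edge-coloring of $G$ (K\H{o}nig's theorem). Your construction instead takes $V(\Gamma)=E(G)$, only $2|V(G)|$ vertices, and lets every vertex carry a CSP relation directly; the variable gadgets disappear because each connected component of the $\{L,R\}$- (respectively $\{U,D\}$-) subgraph is precisely the $4$-cycle at one $A$- (respectively $B$-) vertex of $G$, so the compatibility condition already forces a single value per CSP variable. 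You obtain the label balance and the bipartition $E_0\sqcup E_1$ from a Petersen $2$-factorization rather than a $4$-edge-coloring. Your reduction is tighter by a factor of three in $|V(\Gamma)|$ and avoids the auxiliary diagonal sets; the paper's version is somewhat more modular, separating ``variable'' and ``constraint'' gadgets in a way that mirrors the downstream cross-gadget constructions in Sections~\ref{S:multicut1}--\ref{S:multicut2}. Either argument yields the same asymptotic lower bound after invoking Corollary~\ref{C:beattreewidth} and handling small~$k$ via Remark~\ref{R:constants}.
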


The analogous result for \textsc{Grid Tiling} by the third author~\cite{m-opgas-07} embeds the $k$-\textsc{Clique} problem in a $k\times k$ grid. Here we start from a hardness result for 4-regular binary CSPs that follows from Theorem~\ref{T:beattreewidth} and directly represent the problem as a \textsc{4-Regular Graph Tiling} instance by locally replacing each variable and each binary constraint in an appropriate way.

\begin{proof}
\begin{figure}
    \centering
    \def\svgwidth{\textwidth}
    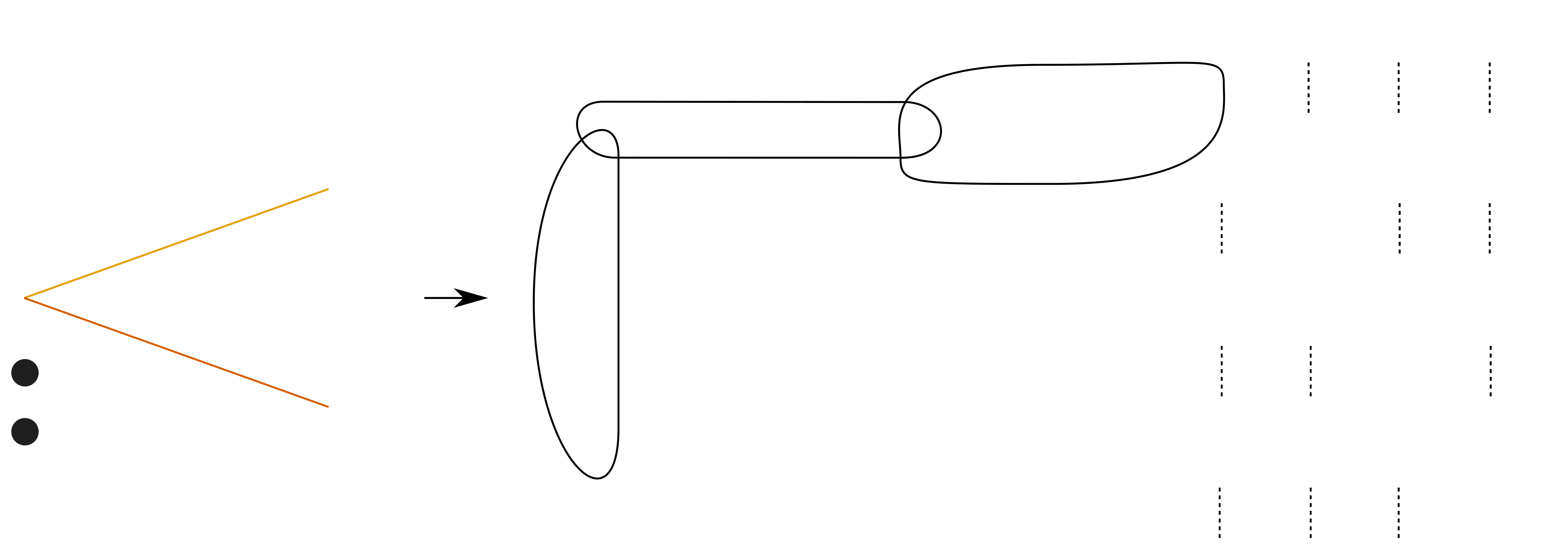
    \caption{The reduction in the proof of Theorem~\ref{T:pivot}. The bipartition on both sides are represented by hollow/solid vertices. The colors represent the $4$-coloring of the edges, and the labels of the edges are suggested by their orientation, i.e., edges entering vertices vertically are labeled $U$ or $D$, while edges entering vertices horizontally are labeled $L$ or $R$.}
    \label{F:GraphTiling}
\end{figure}

In the proof, we will use the well-known fact that a $d$-regular bipartite graph~$G$ can be properly edge-colored with $d$ colors.  This is proved by induction on~$d$:  The case $d=0$ is trivial; in general, take a perfect matching of~$G$, which exists by Hall's marriage theorem; color the edges with color~$d$; the subgraph of~$G$ made of the uncolored edges satisfies the induction hypothesis with $d-1$, so it admits a proper edge-coloring with $d-1$ colors; thus $G$ has a proper edge-coloring with $d$ colors.  This also implies that computing such a proper edge-coloring takes polynomial time.

The proof of the theorem proceeds by a reduction from the binary CSP instances involved in Theorem~\ref{T:W1} and Corollary~\ref{C:beattreewidth}.  Starting from a binary CSP instance $I=(V,D,C)$ whose primal graph is $P$, a $4$-regular bipartite graph, we define an instance of \textsc{4-Regular Graph Tiling}, $(k,\Delta,\Gamma,\{S_i\})$, in the following way (see Figure~\ref{F:GraphTiling}):

\begin{enumerate}
\item We set $\Delta=|D|$ and $k=6|V|$.
\item We find a proper edge coloring of $P$ with $4$ colors, as indicated above.
\item Denoting by $V_1$ and $V_2$ the two subsets of vertices of $P$ corresponding to the bipartition of $P$, for each vertex $u$ of $V_1$, we create four vertices $u_1,u_2,u_3,u_4$ in $\Gamma$ which we connect in a cycle in this order using two $U$ and two $D$ edges (e.g., $u_1u_2$ and $u_3u_4$ are $U$ edges and $u_2u_3$ and $u_4u_1$ are $D$ edges). Similarly, for each vertex $v$ of $V_2$, we create four vertices $v_1,v_2,v_3,v_4$ in $\Gamma$ which we connect in a cycle in this order using two $R$ and two $L$ edges.
\item For each edge $e=uv$ labeled with a color $i$, where $u \in V_1$ and $v \in V_2$, we create one vertex $v_e$ in $\Gamma$, which is connected to $u_i$ via two edges, one labeled $R$ and one labeled $L$, and to $v_i$ via two edges, one labeled $U$ and one labeled $D$. \item For each vertex $u_i$ or $v_i$ of $\Gamma$ coming from a vertex of $P$, the corresponding subset $S_{u_i}$ or $S_{v_i}$ is set to be $\Diag([\Delta]):=\{(x,x)\mid x\in[\Delta]\}$.
\item For each vertex $v_e$ of $\Gamma$ coming from an edge $e=uv$ of $P$, where $u\in V_1$ and~$v\in V_2$, the corresponding subset $S_e$ is set to be the relation corresponding to $e$.
\end{enumerate}

We first prove that the graph $\Gamma$ is bipartite.  For this purpose, let $W_1$ be the set of vertices $\{u_i\mid u\in V_1,i\in\{1,3\}\}\cup\{v_i\mid v\in V_2,i\in\{1,3\}\}$, together with the vertices~$v_e$ such that $e$ is labeled by an even color.  Let $W_2$ be the vertices of~$\Gamma$ not in~$W_1$.  By construction, each edge in~$\Gamma$ connects a vertex with $W_1$ with a vertex in~$W_2$, so $\Gamma$~is indeed bipartite.

We claim that this instance of \textsc{4-Regular Graph Tiling} is satisfiable if and only if $I$ is satisfiable. Indeed, if $I$ is satisfiable, the truth assignment $f$ for $I$ can be used to find the values for the $s_i$ in the following way.  If $u\in V_1$ and~$i\in[4]$, the value of $s_{u_i}$ is chosen to be $(f(u),f(u))$.  Similarly, if $v\in V_2$ and~$i\in[4]$, the value of $s_{v_i}$ is chosen to be $(f(v),f(v))$.  For a vertex $v_e$ of $\Gamma$ coming from an edge $e=uv$ of $P$ where $u\in V_1$ and $v\in V_2$, the value of $s_{e}$ can be chosen to be $(f(u),f(v))$. The compatibility conditions are trivially fulfilled. In the other direction, the values $s_{v_i}$ for the four vertices of $\Gamma$ coming from a vertex $v$ of $P$ are identical and of the form $(x,x)$. Choosing $x$ as the truth assignment for $v$ in $I$ yields a solution to the CSP $I$.

We thus have a reduction from binary CSP, restricted to instances~$I$ whose primal graph~$P$ has $|V|$ vertices, is four-regular and is bipartite, to instances of \textsc{4-Regular Graph Tiling} on a bipartite graph~$\Gamma$ with $6|V|$ vertices.  This reduction takes time $O((|V||D|)^d)$ for some universal constant~$d\ge1$.  Combined with the
first item of
Corollary~\ref{C:beattreewidth} this proves the first item of the theorem.

For the second item, let $\alpha_\GT=\alpha_\REGULARCSP/6$ and let us assume that we have an algorithm~$\mathbb A$ deciding the $4$-\textsc{Regular Graph Tiling} bipartite instances with at most $k$ vertices in time $O(\Delta^{\alpha_\GT\cdot k/ \log k})$ (where, in the $O()$ notation, we consider $k$ to be fixed). For any four-regular and bipartite graph with at most $k/6$ vertices, the graph obtained from it in the reduction above has at most $k$ vertices. Using this reduction and algorithm~$\mathbb A$, we could therefore decide any binary CSP instance $(V,D,C)$ whose primal graph is four-regular, bipartite, and has at most $k/6 \geq 2$ vertices in time $O(|D|^d+|D|^{\alpha_\GT \cdot k/\log k})$ (for fixed~$k$). This is $O(|D|^{\alpha_\REGULARCSP \cdot (k/6)/\log (k/6)})$ if $k$ is larger than some universal constant $\bar k$ (that depends only on $d$ and $\alpha_\REGULARCSP$), and thus, by Corollary~\ref{C:beattreewidth}, the ETH does not hold.

This means that that the second item is proved for $k$ at least $\bar k$. The remaining cases follow from the trivial linear lower bound, as in Remark~\ref{R:constants} (note that, because $k\ge2$, there are such instances of \textsc{4-Regular Graph Tiling}).
\end{proof}

\textbf{Remark}: It might seem more natural to use a definition of \textsc{4-Regular Graph Tiling} where \textit{half-edges} are labeled by $U,D,L$ and $R$, so that every edge contains either $U$ and $D$, or $L$ and $R$ labels. This fits more the intuition that the top side of a vertex should be attached to the bottom side of the next vertex. It follows from roughly the same proof that the same hardness result also holds for that variant. However, it seems that both the bipartiteness and the unusual labeling are required for the reduction in Section~\ref{S:multicut1}.

\section{Multiway cut with four terminals}\label{S:multicut1}

In this section, we prove the following proposition, which will yield Theorem~\ref{T:mainmulticut} in the regime where the genus dominates the number of terminals.  The other case will be handled in Section~\ref{S:multicut2}.

\begin{proposition}\label{P:multicut1}
\begin{enumerate}
    \item The \textsc{Multiway Cut} problem when restricted to instances $(G,T,\lambda)$ in which $|T|=4$ and $G$ is a graph embeddable on the surface of genus $g$ is W[1]-hard parameterized by $g$.
    \item Assuming the ETH, there exists a universal constant $\alpha_{\MCone}$ such that for any fixed integer $g \geq 0$, there is no algorithm that decides all the \textsc{Multiway Cut} instances $(G,T,\lambda)$ for which $|T|=4$ and $G$ has $n$ vertices and edges and is embeddable on the surface of genus $g$, in time $O(n^{\alpha_{\MCone} \cdot (g+1)/\log (g+2)})$.
\end{enumerate}
\end{proposition}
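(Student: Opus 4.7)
The plan is to reduce from the bipartite variant of \textsc{4-Regular Graph Tiling} given by Theorem~\ref{T:pivot}, using the cross gadgets of Lemma~\ref{L:marx:gadgets} as building blocks. Given an instance $(k,\Delta,\Gamma,\{S_v\})$ with $\Gamma$ bipartite on $V_1\sqcup V_2$, I would place a cross gadget at each vertex of $\Gamma$: $v\in V_1$ receives the standard gadget $G_{S_v}$, while $u\in V_2$ receives a reflected copy $G_{S_u'}$ (encoding a top--bottom reflection, e.g.\ with $S_u'=\{(\Delta+1-i,j):(i,j)\in S_u\}$). For each edge of $\Gamma$ labeled $X\in\{U,D,L,R\}$ between $v\in V_1$ and $u\in V_2$, I identify the $\Delta+1$ interior distinguished vertices on the $X$-side of $v$'s gadget with those on $u$'s $X$-side in the reflected frame (equivalently, the opposite side in $u$'s original gadget), with reversed orientation so the resulting gluing preserves orientability; the corner vertices of the two sides are kept distinct. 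This bipartite flipping turns the gluing into a standard side-to-opposite-side identification in the original coordinates (e.g.\ a $U$-edge glues $v$'s top to $u$'s bottom), and the split positions on both sides of the glued arc match exactly when the 4-RGT compatibility constraints ($j_v=j_u$ for $X\in\{U,D\}$, $i_v=i_u$ for $X\in\{L,R\}$) hold.

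To ensure the instance has exactly the four required terminals $T_{UL},T_{UR},T_{DL},T_{DR}$, I introduce four new vertices and attach each gadget's four corners to them via bundles of many parallel edges (``ropes''), following the bipartite orientation rule: for $v\in V_1$, $UL_v$ is joined to $T_{UL}$ (and similarly $UR_v,DL_v,DR_v$ to the other three terminals); for $u\in V_2$, the corner playing the role of $UL$ in the reflected frame (namely $DL_u$ in the original coordinates of $G_{S_u'}$) is joined to $T_{UL}$, and analogously for the other three corners. Choosing the rope multiplicity larger than the budget $\lambda=kD_1$ guarantees that in any optimal multiway cut the ropes are uncut, so each gadget's four corners lie in four distinct terminal components. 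Correctness then follows from Lemma~\ref{L:marx:gadgets}: in the forward direction, a 4-RGT solution $(s_v)$ yields by part (i) a cut of weight exactly $kD_1$; in the reverse direction, any cut of weight at most $kD_1$ must, by (ii)--(iii) applied to each gadget separately, use exactly $D_1$ per gadget and represent some $(i_v,j_v)\in S_v$, and the identification of the interior side-vertices forces these choices to satisfy the 4-RGT compatibility across every edge of $\Gamma$.

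For the genus bound, any 4-regular graph on $k$ vertices embeds cellularly on an orientable surface of genus at most $(k+1)/2$ by Euler's formula, and inside this embedding the planar cross gadgets can be drawn inside small disks around each vertex without adding genus; the interior-side identifications respect this embedding and the rope attachments (being bundles of parallel edges that do not individually add genus) contribute only $O(k)$ more handles to route the four terminal stars. Hence $G$ has $n=\poly(\Delta,k)$ vertices and edges and is embeddable on an orientable surface of genus $g=O(k)$. Choosing $k=\Theta(g)$, a hypothetical algorithm deciding our \textsc{Multiway Cut} instances in time $O(n^{\alpha_{\MCone}(g+1)/\log(g+2)})$ would decide bipartite 4-RGT in time $\Delta^{O(\alpha_{\MCone}k/\log k)}$, which, for $\alpha_{\MCone}$ small enough, contradicts Theorem~\ref{T:pivot}(2); W[1]-hardness in $g$ follows analogously from Theorem~\ref{T:pivot}(1), and the few small values of $g$ are absorbed as in Remark~\ref{R:constants}. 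The main obstacle is to arrange simultaneously (a) orientability of the embedding, (b) a correct pairing of the four corner types with the four global terminals across all gadgets, and (c) a genus bound of $O(k)$; this is exactly where the bipartiteness of $\Gamma$, used to consistently flip $V_2$ gadgets, and the edge-labelling (as opposed to half-edge) convention of 4-RGT become essential, as the remark at the end of Section~\ref{S:pivot} anticipates.
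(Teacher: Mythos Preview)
Your approach is essentially the paper's: reduce from bipartite \textsc{4-Regular Graph Tiling} (Theorem~\ref{T:pivot}), place a cross gadget at each vertex of~$\Gamma$, glue the sides dictated by the edge labels, and use bipartiteness to keep the genus~$O(k)$.  The paper carries this out slightly more simply: it does not modify~$S_u$ or introduce ropes, but directly identifies all $UL$-corners into a single terminal (and likewise for $UR,DL,DR$), and it realizes the bipartite flip only in the \emph{embedding} (drawing the $V_1$ gadgets with one orientation and the $V_2$ gadgets with the opposite one), so that all side-to-side ribbons are flat and contribute at most~$2k$ handles (Claim~\ref{C:genus}).  Your rope variant and your explicit reflection of the $V_2$ gadgets via $S'_u=\{(\Delta+1-i,j)\}$, together with the swapped corner--terminal attachments, achieve the same effect and are a valid alternative; the backward direction works because uncut ropes force each gadget to see four distinct terminal classes at its corners, exactly as the paper's identification does.

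One genuine slip in your write-up: a top--bottom reflection swaps $U\leftrightarrow D$ but fixes $L$ and $R$, so your parenthetical ``equivalently, the opposite side in $u$'s original gadget'' and the subsequent claim that the gluing becomes ``side-to-opposite-side'' are incorrect for $L$- and $R$-labelled edges.  If you trace the index maps with the standard orientable boundary gluing (which is what ``reversed orientation so the gluing preserves orientability'' should mean), the $U/D$ sides get identified with matching index and the $L/R$ sides with reversed index, and in all four cases the compatibility forced is exactly the 4-RGT condition; but the quoted sentences obscure this and should be fixed.  Likewise, in the genus paragraph you should say explicitly that you choose the rotation system of~$\Gamma$ so that the cyclic order of edge labels at each vertex matches the cyclic order of the gadget's sides; this is what makes ``the interior-side identifications respect this embedding'' true, and it is precisely where bipartiteness (and the edge---not half-edge---labelling) is used, as you correctly anticipate in your last sentence.
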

\begin{proof}
The idea is to reduce \textsc{4-Regular Graph Tiling} instances of Theorem~\ref{T:pivot} to the instances of \textsc{Multiway Cut} specified by the proposition.  Consider an instance $(k,\Delta,\Gamma,\{S_i\})$ of \textsc{4-Regular Graph Tiling} where the underlying graph~$\Gamma$ is bipartite.  In polynomial time, we transform it into an equivalent instance $(G,T,\lambda)$ of \textsc{Multiway Cut} as follows.
\begin{enumerate}
\item For each vertex $v$ of $\Gamma$, we create a cross gadget $G_S(v)$ such that the set $S\subseteq [\Delta]^2$ is chosen to be equal to~$S_v$.
\item For each edge $e=uv$ of $\Gamma$ labeled $U$, we identify the vertices of the $U$ side of the cross gadget $G_S(v)$ to the corresponding vertices of the $U$ side of the cross gadget $G_S(u)$. Similarly for the edges labeled $D$, $R$, and $L$ for which the vertices on the $D$, $R$, and~$L$ sides, respectively, are identified. Note that only vertices, and not edges, are identified.
\item The four corner vertices $UL, UR, DR$, and $DL$ of all the cross gadgets are identified in four vertices $UL, UR, DR$, and $DL$, where the four terminals are placed.
\item We let $\lambda:=kD_1$ (where $D_1$ is the integer from Lemma~\ref{L:marx:gadgets}).
\end{enumerate}
Note that since the sides are consistently matched in this last step, the four terminals remain distinct after this identification.  (Of course, this identification may create loops and multiple edges.)

We claim that this instance admits a multiway cut of weight at most $\lambda$ if and only if the \textsc{4-Regular Graph Tiling} instance is satisfiable. Assume first that the \textsc{4-Regular Graph Tiling} instance is satisfiable.  For each vertex~$v$ of $\Gamma$, one can use the value $s_v$ to choose, using Lemma~\ref{L:marx:gadgets}(1), a multiway cut in $G_S(v)$ representing $s_v$.  Let $M$ be the union of all these sets of edges.  We claim that $M$ is a multiway cut separating the four terminals in $G$. Indeed, after removing the multiway cuts, the four terminals lie in four different components in each of the cross gadgets.  It suffices to prove that it remains the case after identifying the four sides.  To see this, consider two cross gadgets that have two sides identified, and let~$w$ be a vertex on that common side.  Then, by the compatibility conditions in the definition of \textsc{4-Regular Graph Tiling}, $w$ is connected, in the first gadget, to a terminal ($UL$, $UR$, $DR$, or $DL$) if and only if it is connected to the corresponding terminal in the second gadget.  Thus, as desired, $M$ is a multiway cut separating the four terminals in~$G$.  Moreover, it has weight at most $kD_1$, since it is the union of $k$ edge sets of weight at most $D_1$.

For the other direction, we first observe that if the instance admits a multiway cut of weight at most $kD_1$, then each of the cross gadgets $G_S$ must admit a multiway cut (otherwise the four terminals would not be disconnected).  By Lemma~\ref{L:marx:gadgets}(2), each of these $k$~multiway cuts has weight exactly~$D_1$. Therefore, by Lemma~\ref{L:marx:gadgets}(3), each of them represents some $(i,j) \in S$, which will be used as the value $s_v$ for the \textsc{4-Regular Graph Tiling} instance. Furthermore, we claim that the multiway cuts need to match along identified sides, by which we mean that the two following conditions are satisfied: (1) If a multiway cut represents the pair $(i,j)$, then a multiway cut in a cross gadget adjacent along an edge labeled $U$ or $D$ needs to represent a pair $(k,j)$ for some $k\in [\Delta]$.  (2) Similarly, a multiway cut in a cross gadget adjacent along an edge labeled $R$ or $L$ needs to represent a pair $(i,\ell)$ for some $\ell\in [\Delta]$. Indeed (see Figure~\ref{F:Newclaim}), if, say, a multiway cut representing the pair $(i,j)$ is connected along an edge labeled $R$ to a multiway cut representing the pair $(i',\ell)$ for $i'>i$, then vertex $(i',\Delta)$, common to both gadgets, is connected to~$UR$ by a path inside the first gadget and to~$DR$ by a path inside the second gadget, contradicting the fact that we have a multiway cut. Therefore, the compatibility conditions of the \textsc{4-Regular Graph Tiling} instance are satisfied.

\begin{figure}
\centering
\def\svgwidth{7cm}
\begingroup%
  \makeatletter%
  \providecommand\color[2][]{%
    \errmessage{(Inkscape) Color is used for the text in Inkscape, but the package 'color.sty' is not loaded}%
    \renewcommand\color[2][]{}%
  }%
  \providecommand\transparent[1]{%
    \errmessage{(Inkscape) Transparency is used (non-zero) for the text in Inkscape, but the package 'transparent.sty' is not loaded}%
    \renewcommand\transparent[1]{}%
  }%
  \providecommand\rotatebox[2]{#2}%
  \newcommand*\fsize{\dimexpr\f@size pt\relax}%
  \newcommand*\lineheight[1]{\fontsize{\fsize}{#1\fsize}\selectfont}%
  \ifx\svgwidth\undefined%
    \setlength{\unitlength}{1379.44093878bp}%
    \ifx\svgscale\undefined%
      \relax%
    \else%
      \setlength{\unitlength}{\unitlength * \real{\svgscale}}%
    \fi%
  \else%
    \setlength{\unitlength}{\svgwidth}%
  \fi%
  \global\let\svgwidth\undefined%
  \global\let\svgscale\undefined%
  \makeatother%
  \begin{picture}(1,0.58328248)%
    \lineheight{1}%
    \setlength\tabcolsep{0pt}%
    \put(0,0){\includegraphics[width=\unitlength,page=1]{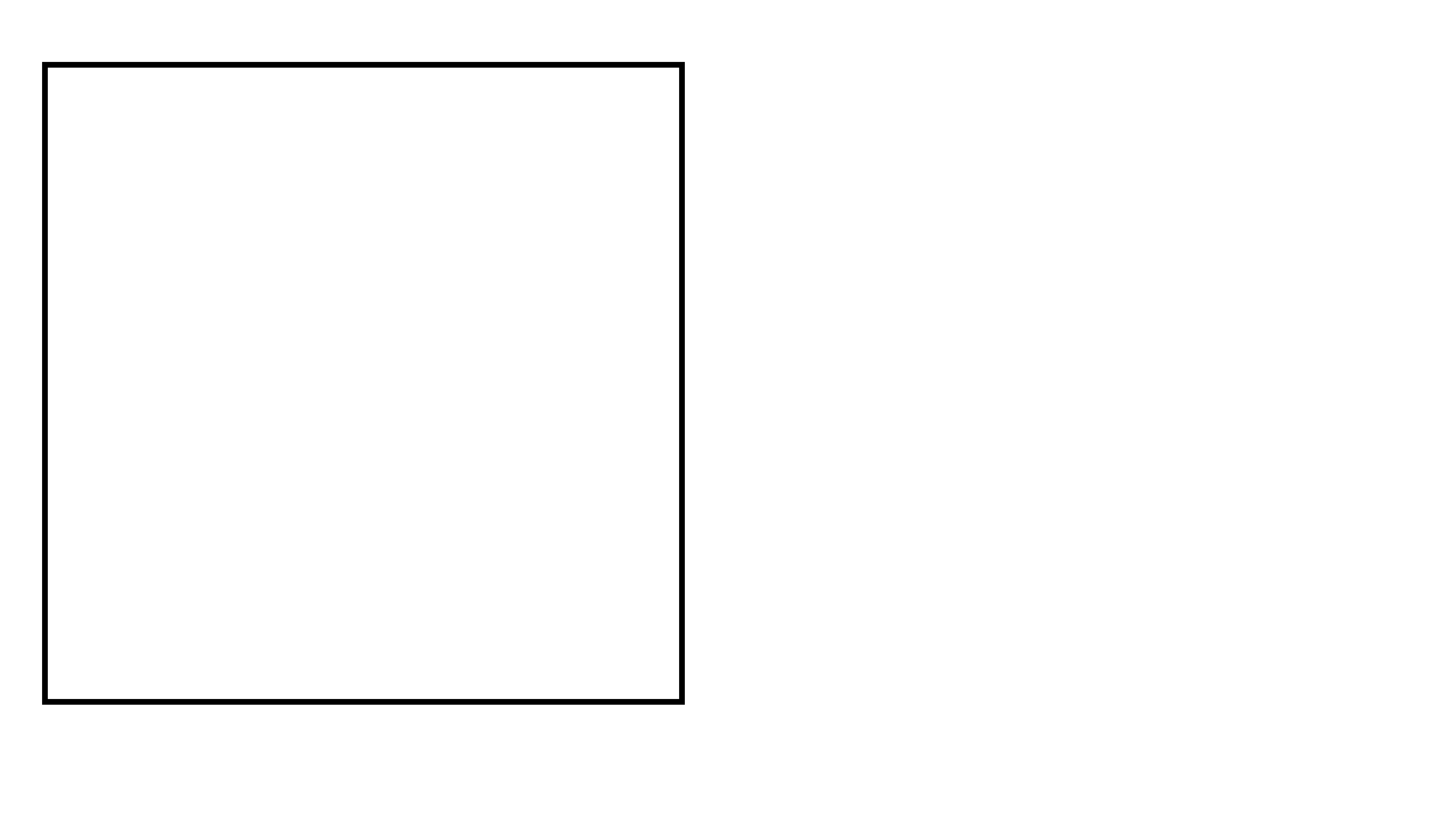}}%
    \put(-0.00149517,0.56565759){\color[rgb]{0,0,0}\makebox(0,0)[lt]{\lineheight{0}\smash{\begin{tabular}[t]{l}$UL$\end{tabular}}}}%
    \put(0.48234471,0.56565759){\color[rgb]{0,0,0}\makebox(0,0)[lt]{\lineheight{0}\smash{\begin{tabular}[t]{l}$UR$\end{tabular}}}}%
    \put(0.48191428,0.00340944){\color[rgb]{0,0,0}\makebox(0,0)[lt]{\lineheight{0}\smash{\begin{tabular}[t]{l}$DR$\end{tabular}}}}%
    \put(-0.0019256,0.00340944){\color[rgb]{0,0,0}\makebox(0,0)[lt]{\lineheight{0}\smash{\begin{tabular}[t]{l}$DL$\end{tabular}}}}%
    \put(0,0){\includegraphics[width=\unitlength,page=2]{Newclaim.pdf}}%
    \put(0.93217623,0.56565759){\color[rgb]{0,0,0}\makebox(0,0)[lt]{\lineheight{0}\smash{\begin{tabular}[t]{l}$UL$\end{tabular}}}}%
    \put(0.9416317,0.00340944){\color[rgb]{0,0,0}\makebox(0,0)[lt]{\lineheight{0}\smash{\begin{tabular}[t]{l}$DL$\end{tabular}}}}%
  \end{picture}%
\endgroup%

\caption{If the multiway cuts do not match (here represented by their duals), they do not separate the terminals.}
\label{F:Newclaim}
\end{figure}
\begin{figure}
    \centering
    \includegraphics[width=.8\linewidth]{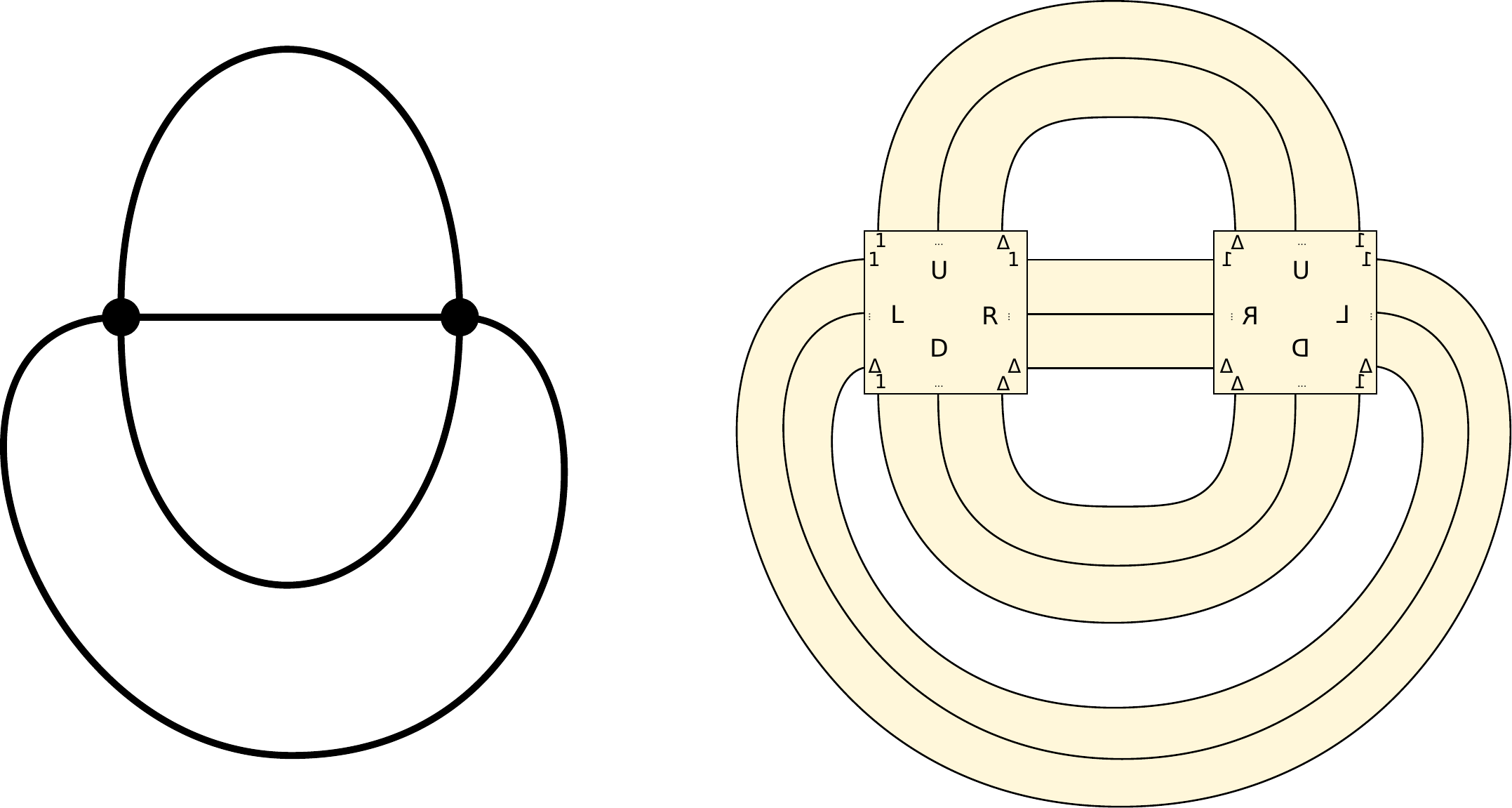}
    \caption{Left: A bipartite four-valent graph~$\Gamma$ with two vertices.  Right: The construction of the embedding  of~$G$.  The orientation of each gadget of~$G$, corresponding to a vertex~$v$, is chosen according to the side of the bipartition vertex~$v$ lies in.  This allows to connect pairs of vertices on the boundary of each gadget with the same indices.}
    \label{F:orient4}
\end{figure}
  We will prove the following claim.
\begin{claim}\label{C:genus}
The genus of the graph $G$ is $O(k)$.
\end{claim}
\begin{proof}
We prove here that the genus of the graph $G$ is $O(k)$.  For this, the fact that $\Gamma$ is bipartite turns out to be crucial.  For Step~1 above, let us embed the cross gadgets corresponding to~$\Gamma$ in the plane.  Let $V_1\cup V_2$ be the bipartition of the vertices of~$\Gamma$.  We embed the cross gadgets corresponding to~$V_1$ in the plane with the natural orientation ($U$, $R$, $D$, $L$ in clockwise order), and the cross gadgets corresponding to~$V_2$ with the opposite orientation.  Second, let us connect the vertices on the sides of the cross gadgets as in Step~2 above; but for now, just for clarity of exposition, instead of identifying pairs of vertices, let us connect each pair by a new edge.  We can add these $n$~new edges corresponding to a single edge of~$\Gamma$ by putting them on a ribbon connecting the sides of the cross gadgets (see Figure~\ref{F:orient4}).  We emphasize that, because of the orientation chosen to embed the gadgets corresponding to $V_1$ and~$V_2$, the ribbons are drawn ``flat'' in the plane (though possibly with some overlapping between them), and the vertices in one cross gadget are connected to the corresponding vertices in the other cross gadget (for example, in the case of an edge labeled $R$, vertex $r_i$ in the first gadget is connected to vertex $r_i$ in the second gadget).  Thus, since we started with a graph embedded on the plane and added at most $2k$ ``flat'' ribbons (because $\Gamma$ is four-regular), we obtain a graph embedded on an orientable surface with genus at most~$2k$ (without boundary, after attaching disks to each boundary component).  We now contract every newly added edge, which can only decrease the genus.  For Step~3 above, the graph~$G$ is obtained by identifying four groups of at most $k$~vertices of the previous graph (the terminals $UL$ of all cross gadgets, and similarly for $UR$, $DL$, and $DR$) into four vertices; these vertex identifications increase the genus by $O(k)$.  (To see this, we can for example add $O(k)$ edges to connect in a linear way all the vertices to be identified, which increases the genus by $O(k)$, and then contract these new edges.)  This proves that $G$ is embeddable on a surface of genus $O(k)$.
\renewcommand{\qedsymbol}{$\lrcorner$}
\end{proof}

To summarize: For some universal constants~$c,d\ge 1$, we can transform in time $O((k\Delta)^d)$ any instance $(k,\Delta,\Gamma,\{S_i\})$ of \textsc{4-Regular Graph Tiling} where $\Gamma$~is bipartite into an equivalent instance of \textsc{Multiway Cut} with four terminals and whose graph has $O((k\Delta)^d)$ vertices and edges and is embeddable on a surface of genus at most~$ck$.  Combined with Theorem~\ref{T:pivot}(1), this proves the first item.

Let us now consider the second item.  Let $\alpha_\MCone=\alpha_\GT/c(d+1)$ and assume that for some fixed $g$ there is an algorithm~$\mathbb A$ that decides all the \textsc{Multiway Cut} instances $(G,T,\lambda)$ for which $G$ has $n$ vertices and edges and is embeddable on the surface of genus~$g$ and $|T|=4$ in time $O(n^{\alpha_\MCone \cdot (g+1)/\log (g+2)})$.  Let $k':=\lfloor (g+2)/c\rfloor$ be fixed, and consider an instance $(k,\Delta,\Gamma,\{S_i\})$ of \textsc{4-Regular Graph Tiling} whose underlying graph is bipartite and has $k\le k'$ vertices.  Using algorithm~$\mathbb A$ and the above reduction, we can decide this instance in time $O(\Delta^d)+(O(\Delta^d))^{\alpha_\MCone \cdot c(k'+1)/\log (ck')}$ (for fixed~$k'$).  If $g$ is larger than some universal constant $\bar g$ (and thus~$k'$ is also large enough), then this is $O(\Delta^{\alpha_\GT k'/\log k'})$.  Theorem~\ref{T:pivot}(2) then implies that the ETH does not hold.

This means that if $g\ge\bar g$, then the second item is proved. The remaining cases follow from the trivial linear lower bound, as in Remark~\ref{R:constants}.
\end{proof}

\section{Shortest cut graph}\label{S:cutgraph}

In this section, we prove Theorem~\ref{T:maincutgraph} on the hardness of the \textsc{Shortest Cut Graph} problem.

\begin{proof}[Proof of Theorem~\ref{T:maincutgraph}]
  The idea is to reduce \textsc{4-Regular Graph Tiling} instances of Theorem~\ref{T:pivot} to instances of \textsc{Shortest Cut Graph}.

  Let $\Gamma$ be a bipartite four-regular graph with $k$ vertices.  From~$\Gamma$, we build an orientable surface without boundary~$\surf$ as follows (see Figure~\ref{F:cutgraph}): We build one cylindrical tube for each edge of~$\Gamma$ and one sphere minus four disks for each vertex of~$\Gamma$, attaching them in the natural way to obtain an orientable surface.

  We claim that this surface has genus~$k+1$.  To see this, let $T$ be a spanning tree of~$\Gamma$; it has $k-1$ edges, and $E(\Gamma)\setminus T$ has $k+1$~edges.  For each edge $e\in E(\Gamma)\setminus T$ in turn, let us remove from~$\surf$ a cylindrical tube corresponding to~$e$, and attach a disk to each of the two resulting boundary components.  Each of these $k+1$ operations decreases the genus by one (as can be seen formally using Euler's formula).  We eventually obtain a sphere.  This proves the claim.

  Moreover, the graph~$\Gamma$ is naturally embedded in~$\surf$, though not cellularly.  In order to have a cellular embedding, and actually a cut graph, we transform~$\Gamma$ as follows. Let $\Gamma'$ be the graph obtained from~$\Gamma$ by subdividing each edge not in~$T$ into two edges, and adding a loop in the middle vertex.  Now, embed~$\Gamma'$ into~$\surf$ in the natural way: Starting from the embedding of~$\Gamma$ into~$\surf$, put each middle vertex on the corresponding cylindrical tube of~$\surf$, and make the corresponding loop go around the tube.  The resulting embedding of~$\Gamma'$ is a cut graph of~$\surf$.  Indeed, it has a single face, because we only add loops in the middle of edges not in the spanning tree~$T$; moreover, it has $2k+1$ vertices and $4k+2$ edges (being four-regular), so its unique face is a disk, by Euler's formula.
\begin{figure}
    \centering
    \includegraphics[width=.8\linewidth]{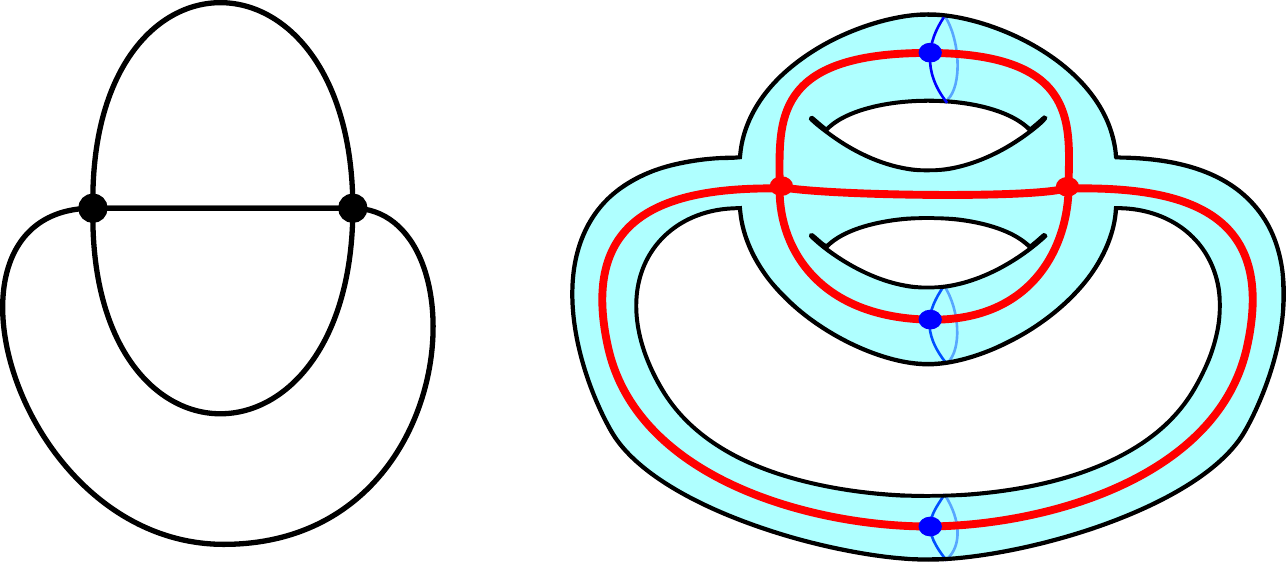}
    \caption{Left: A bipartite four-valent graph~$\Gamma$ with two vertices.  Right: The resulting graph~$\Gamma'$.  The graph~$\Gamma$ is in thick lines, and the edges forming the complement of a spanning tree are split with a new vertex, to which a loop (in thin lines) is attached.}
    \label{F:cutgraph}
\end{figure}

Let $V_1\cup V_2$ be the bipartition of the vertices of~$\Gamma$.   We note that the above construction is possible while enforcing an arbitrary cyclic ordering of the edges incident to each vertex of~$\Gamma$; we do it in a way that the cyclic ordering of the edges around each vertex in~$V_1$ is the standard one ($U$, $R$, $D$, $L$ in clockwise order), while the cyclic ordering around each vertex in~$V_2$ is reversed ($U$, $L$, $D$, $R$ in clockwise order).  We now build a graph~$G$ embedded on the same surface~$\surf$, obtained by replacing each vertex of~$\Gamma'$ with a dual cross gadget and by identifying vertices on the corresponding sides of adjacent gadgets.  In detail:
\begin{enumerate}
\item For each vertex $v$ of $\Gamma$, we create a dual cross gadget $G_S^*(v)$.  To define this gadget, we only have to specify the corresponding integer $\Delta$ together with the subset~$S$ of~$[\Delta]^2$; we choose $\Delta:=n$ and $S:=S_v$.  We embed that dual cross gadget with the same orientation as the corresponding vertex of~$\Gamma'$.
\item For each edge $e=uv$ of $T$, we identify the vertices (not the edges) on the side of $G_S^*(u)$ corresponding to the label of $e$ to the vertices on the same side of $G_S^*(v)$.  By the choice of the rotation systems, and for the same reason as in Figure~\ref{F:orient4}, this identifies the vertices in the gadget associated with~$u$ to the corresponding vertices in the gadget associated with~$v$; for example, if the label of edge~$e$ is~$R$, the vertex~$r_i$ of the first gadget is associated to vertex~$r_i$ of the second gadget.

\item For an edge $e=uv$ of~$\Gamma$ not in $T$, we use another dual cross gadget $G_S^*(e)$, for which we choose $S$ to be the unconstrained relation $S=[n]^2$.  We put that gadget on the vertex of~$\Gamma'$ that splits edge~$e$.  We identify the vertices on the side of $G_S^*(u)$ corresponding to the label of $e$ to the vertices of the same side of $G_S^*(e)$, and similarly the vertices on the side of $G_S^*(v)$ corresponding to the label of $e$ to the vertices on the opposite side of $G_S^*(e)$. The two sets of vertices on the opposite sides of $G_S^*(e)$ which are not yet identified are identified to each other.
\end{enumerate}

As a result of the identifications, some corners of the dual cross gadgets might become identified. Unlike in the reductions for \textsc{Multiway Cut}, this is not a problem here. The following claim shows that the reduction works as expected.

\begin{claim}\label{C:cutgraph}
  The \textsc{4-Regular Graph Tiling} instance on $\Gamma$ is satisfiable if and only if the embedded graph $G$ admits a cut graph $C$ of weight at most $(2k+1)D_1$.
\end{claim}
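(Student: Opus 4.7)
The plan is to prove both directions of the biconditional.

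For the forward direction, start with a satisfying assignment $\{s_v\}_{v\in V(\Gamma)}$ and extend it to the edge gadgets: for each $e=uv\notin T$, since $G_S^*(e)$ has the unconstrained relation $[n]^2$, choose $s_e$ so that it is compatible with $s_u$ and $s_v$ along the sides to which it is glued, which is possible because $s_u$ and $s_v$ are already compatible on the original edge $e$. By Lemma~\ref{L:marx:gadgetsdual}(i), in each of the $2k+1$ gadgets pick a dual multiway cut of weight exactly $D_1$ representing the chosen pair, and let $C$ be the union of these cuts, so $|C|=(2k+1)D_1$. To show $C$ is a cut graph, analyze $\surf\setminus C$: within each gadget disk, cutting along the chosen dual multiway cut yields four open disk regions, one containing each corner $UL$, $UR$, $DR$, $DL$. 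Because the gadget-disk boundaries carry no edges of $G$, each side-identification merges the corresponding regions on either side of the shared arc in a manner determined by the partitions of boundary faces. The compatibility conditions, combined with the self-gluing internal to each edge gadget (which merges an additional pair of corner types through a handle), guarantee that all $4(2k+1)$ regions collapse to a single connected region; Euler's formula applied to the cellular embedding of $C$ in $\surf$ with $F_C=1$ then forces this unique face to be a topological disk.

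For the backward direction, let $C$ be a cut graph of $\surf$ of weight at most $(2k+1)D_1$. Since each edge of $G$ lies in a unique gadget, write $C=\bigsqcup_i C_i$ with $C_i=C\cap G_S^*(i)$. The core claim is that each $C_i$ is a dual multiway cut. If some $C_i$ fails to separate two distinguished terminal faces of $G_S^*(i)$, then inside that gadget disk there is a path joining these two terminals avoiding $C_i$; combining this path with the region-merging picture of neighboring gadgets produces either a nontrivial loop inside $\surf\setminus C$ (contradicting simple-connectedness of the disk) or an extra face of $C$ (contradicting $F_C=1$), in either case contradicting the cut graph property. Hence each $C_i$ is a dual multiway cut, so $|C_i|\ge D_1$ by Lemma~\ref{L:marx:gadgetsdual}(ii), and the weight bound forces $|C_i|=D_1$ for every $i$. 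By Lemma~\ref{L:marx:gadgetsdual}(iii), each $C_i$ represents some pair $(i_v,j_v)\in S_v$ (and some pair in $[n]^2$ for edge gadgets). The compatibility conditions of the \textsc{4-Regular Graph Tiling} instance then follow by the same region-matching analysis used in the forward direction: inconsistent pairs on two gadgets sharing a side would force a distinguished boundary face to simultaneously lie in two incompatible regions, contradicting the single-disk structure of $\surf\setminus C$.

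The main obstacle is the topological analysis of $\surf\setminus C$, and in particular controlling the self-gluings in the edge gadgets, which are precisely what raises the genus of $\surf$ from $0$ to $k+1$ and allow different corner types to merge across the construction. Careful attention to the orientation conventions on the $V_1$- and $V_2$-vertices of $\Gamma$ (as set up in the paragraph preceding the three-step construction) is required so that each gluing identifies corresponding distinguished vertices, ensuring the partition-matching analysis extracts exactly the \textsc{4-Regular Graph Tiling} compatibility conditions in both directions.
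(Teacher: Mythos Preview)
Your forward direction takes a different route from the paper: you argue that the union $C$ of the chosen dual multiway cuts is itself a cut graph, by showing $\surf\setminus C$ is a single disk. The paper instead observes that this union \emph{contains} a subdivision of~$\Gamma'$ (the two crossing paths inside each gadget, matching across glued sides by compatibility), and $\Gamma'$ was already shown to be a cut graph of~$\surf$; hence some subgraph of the union, of weight at most $(2k+1)D_1$, is a cut graph. Your approach may be salvageable, but as written it has gaps: the claim that all $4(2k+1)$ regions collapse to one connected region needs an actual argument (tracing which corner-types merge through the tree edges, the non-tree-edge gadgets, and the self-gluing loops), and the Euler-formula step is circular---Euler's formula with $2-2g$ applies only to \emph{cellular} embeddings, so you cannot invoke it to deduce that a single face is a disk without already knowing it, and you have no independent control on $V(C)-E(C)$ since the internal structure of a weight-$D_1$ dual multiway cut is not specified.

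The backward direction has a more serious gap. You assert that a path $p$ joining two unseparated terminal faces inside a gadget, combined with ``the region-merging picture of neighboring gadgets,'' yields a nontrivial loop in $\surf\setminus C$ or an extra face. But you give no mechanism for why such a loop would be nontrivial in~$\surf$: the completion $q$ of~$p$ outside the gadgets lives in the single face of~$\Gamma'$, and a priori $p\cup q$ could be contractible. The paper's key idea, which you are missing, is the observation that any simple closed curve crossing a \emph{reduced} cut graph exactly once is non-contractible (otherwise it bounds a disk containing a tree with a stray degree-one vertex). The paper then shows how to locally modify~$\Gamma'$ inside the offending gadget---homotoping edges aside if $p$ joins adjacent corners, or splitting the degree-four vertex into two degree-three vertices if $p$ joins opposite corners---so that the modified~$\Gamma'$ is still a reduced cut graph and crosses $p\cup q$ exactly once. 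This witnesses non-contractibility of $p\cup q$, contradicting that it lies in the disk $\surf\setminus C$. The same device handles the matching-across-sides step. Without this ingredient (or a genuine substitute), your backward direction does not go through.
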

\begin{proof}
(Recall that $\Gamma'$ has $2k+1$ vertices.)  Let us first assume that the \textsc{4-Regular Graph Tiling} instance is satisfiable.  For each vertex~$v$ of $\Gamma$, one can use the value $s_v$ to choose, using Lemma~\ref{L:marx:gadgetsdual}(1), a dual multiway cut in $G^*_S(v)$ representing $s_v$. For $s_v=(i,j)$, such a dual multiway cut contains a path connecting the vertex between the $i$th and $(i+1)$st face of $G^*_S(v)$ on the side $L$ to the vertex between $i$th and $(i+1)$st face of $G^*_S(v)$ on the side $R$, and likewise there is a path connecting the $U$ and $D$ side between the $j$th and $(j+1)$st faces. Since the $(s_v)$ satisy the compatibility conditions, such paths in two adjacent dual cross gadgets must match on the common boundary. Thus the union of the edges of these dual multiway cuts contains a cut graph of~$\surf$; indeed, it contains a subdivision of the graph $\Gamma'$ described above (possibly after transforming a degree-four vertex into two degree-three vertices connected by an edge), which is a cut graph.  Furthermore, this union has weight~$(2k+1)D_1$, since each of the dual multiway cuts has weight~$D_1$.  Thus, some subgraph of it, of weight at most~$(2k+1)D_1$, is a cut graph.

The reverse direction requires more effort. First, let us call a cut graph \emph{reduced} if it has no degree-one vertex; $\Gamma'$ is clearly reduced.  We will use the following fact: Every simple closed curve~$\gamma$ crossing some reduced cut graph exactly once is non-contractible.  Indeed, if $\gamma$ is contractible, it bounds a disk; since the cut graph intersects the boundary of the disk exactly once, the part of the cut graph inside the disk is a tree, and thus has at least two degree-one vertices, one of which is not on~$\gamma$; this is a contradiction.

Assume that the embedded graph~$G$ admits a cut graph~$C$ of weight at most $(2k+1)D_1$.  Note that the edge set of~$G$ is the disjoint union of the edges of the dual cross gadgets, and one can naturally talk about the restriction of (the set of edges of) $C$ to a given dual cross gadget (before the identification of the vertices of the dual cross gadgets).  We first prove that the restriction of~$C$ to each of the dual cross gadgets~$G_S^*$ contains a dual multiway cut.  Indeed, assume that there is a path~$p$, in the disk corresponding to gadget~$G^*_S$, that connects two terminal faces and does not cross~$C$.  Let $q$ be a path with the same endpoints as~$p$ and that does not enter the gadgets; such a path exists because the union of the gadgets is a thickened version of the graph~$\Gamma'$, which is a cut graph, and which thus has a single face, and because the terminal faces are on the boundary of the dual cross gadget.  The closed curve~$\gamma$ that is the concatenation of $p$ and~$q$ is contractible, because it does not cross~$C$.  On the other hand, we can slightly modify~$\Gamma'$ in the disk of the gadget~$G_S^*$ to obtain a reduced cut graph that crosses~$p$, and thus also~$\gamma$, exactly once (which, by the previous paragraph, implies that $\gamma$ is contractible, and thus the contradiction).  To prove this fact, there are two cases.  If $p$ connects two ``neighboring'' terminals in~$G^*_S$, e.g., $UL$ and~$DL$, then we locally homotope the part of~$\Gamma'$ inside the disk of~$G^*_S$ to the~$R$ side of that disk, except for the edge that crosses the~$L$ side, and we draw that latter edge in a way that it crosses~$p$ exactly once.  If $p$ connects two ``opposite'' terminals in~$G^*_S$, e.g., $UL$ and~$DR$, then we first replace the four-valent vertex of~$\Gamma'$ in the gadget with two three-valent vertices $v_1$ and~$v_2$, in a way that $v_1$ is connected to the $L$ and $D$ sides and to~$v_2$, and similarly $v_2$ is connected to the $U$ and~$R$ sides and to~$v_1$.  We put $v_1$ in the $DL$ corner, $v_2$ in the $UR$ corner, and connect them by an edge crossing~$p$ exactly once.  Thus, in both cases, we have a (clearly reduced) cut graph that crosses~$p$, and thus~$\gamma$, exactly once, as desired.

Thus, the restriction of $C$ to each dual cross gadget $G_S^*(u)$ or $G_S^*(e)$ is a dual multiway cut representing some $(i,j)\in S$, by Lemma~\ref{L:marx:gadgets}(2) and (3).  Furthermore, the dual multiway cuts must match on the boundaries, i.e., if a dual multiway cut represents the pair $(i,j)$, the multiway cut in a cross gadget adjacent along an edge labeled $U$ or $D$ needs to represent a pair $(k,j)$ for some $k\in [\Delta]$, and similarly the dual multiway cut in a cross gadget adjacent along an edge labeled $R$ or $L$ needs to represent a pair $(i,\ell)$ for some $\ell\in [\Delta]$. Indeed, otherwise, there is a path connecting two terminal faces, as pictured in Figure~\ref{F:Newclaim}, and by a similar argument as above, this path can be completed into a non-contractible cycle not crossing $C$, which is a contradiction.

Since the restriction of $C$ to each dual cross gadget is a dual multiway cut that has to represent some $(i,j) \in S$, we use $(i,j)$ as the value of~$s_v$ for the \textsc{4-Regular Graph Tiling} instance.  The compatibility conditions follow from the fact that the dual multiway cuts match on the boundaries; thus, the \textsc{4-Regular Graph Tiling} instance is satisfiable.
\renewcommand{\qedsymbol}{$\lrcorner$}
\end{proof}

To summarize: For some universal constant~$d\ge 1$, we can transform in time $(k\Delta)^d$ any instance $(k,\Delta,\Gamma,\{S_i\})$ of \textsc{4-Regular Graph Tiling} where~$\Gamma$ is bipartite into an equivalent instance of~\textsc{Shortest Cut Graph} whose graph has at $O((k\Delta)^d)$ vertices and edges, embedded on a surface of genus~$k+1$. Combined with Theorem~\ref{T:pivot}(1), this proves the first item of the theorem.

Let us now consider the second item.  Let $\alpha_\CG=\alpha_\GT/(d+1)$ and assume that, for some fixed $g$, there is an algorithm~$\mathbb A$ that decides all the \textsc{Shortest Cut Graph} instances embedded on a surface of genus at most~$g$ in time $O(n^{\alpha_\CG\cdot (g+1)/\log (g+2)})$.  Let $k'=g-1$, and consider an instance $(k,\Delta,\Gamma,\{S_i\})$ of \textsc{4-Regular Graph Tiling} whose underlying graph is bipartite and has $k\le k'$ vertices.  Using algorithm~$\mathbb A$ and the above reduction, we can decide this instance in time $O(\Delta^d)+(\Delta^d)^{\alpha_{\CG} \cdot (k'+2)/\log(k'+3)}$ (for fixed~$g$, and thus fixed~$k'$).  If $g$ is larger than a certain universal constant $\bar g$, then this is $O(\Delta^{\alpha_\GT k'/\log k'})$. Theorem~\ref{T:pivot}(2) then implies that the ETH does not hold.

This means that there exists an integer~$\bar g$ such that, if $g\ge\bar g$, then the theorem is proved. The remaining cases follow from the trivial linear lower bound, as in Remark~\ref{R:constants}.
\end{proof}

\section{Multiway cut with a large number of terminals}\label{S:multicut2}

The goal of this section is to prove the following proposition, which will yield Theorem~\ref{T:mainmulticut} in the regime where the number of terminals dominates the genus.

\begin{proposition}\label{P:multicut2}  
  Assuming the ETH, there exists a universal constant $\alpha_{\MCtwo}$ such that for any fixed choice of integers $g\geq0$ and $t\geq 48(g+1)$, there is no algorithm that decides all the \textsc{Multiway Cut} instances $(G,T,\lambda)$ for which $G$ is embeddable on the surface of genus~$g$ and such that $|T| \leq t$, in time $O(n^{\alpha_{\MCtwo} \sqrt{gt+t}/\log(g+t)})$ .
\end{proposition}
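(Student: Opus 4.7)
The plan is to reduce directly from binary CSPs (bypassing \textsc{4-Regular Graph Tiling}) to Multiway Cut on a surface of genus $O(g)$ with $O(t)$ terminals, where the CSP has a cleverly chosen primal graph so that Theorem~\ref{T:beattreewidth} yields a matching lower bound. Concretely, I would set $k:=\Theta(g+1)$ and $a:=\Theta(\sqrt{t/(g+1)})$ (the hypothesis $t\ge 48(g+1)$ guarantees $a$ is at least a sufficiently large constant), take a bipartite $4$-regular expander $H$ on $k$ vertices from Lemma~\ref{L:existence} with its edges properly $4$-edge-colored by labels $U,D,L,R$, and define the primal graph $G_{k,a}$ by replacing each $v\in V(H)$ with an $a\times a$ grid block $B_v$ and each $U$/$D$/$L$/$R$-labeled edge $vv'$ with a matching of $a$ edges between the appropriate boundary rows or columns of $B_v$ and $B_{v'}$. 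Remark~\ref{R:constants} lets me assume $g$ is above whatever constant is needed for $H$ to exist in the family of Lemma~\ref{L:existence}.

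The first technical step is to show that $\tw(G_{k,a})=\Theta(ka)=\Theta(\sqrt{gt})$. The upper bound is routine. For the lower bound, I would fix a balanced separator $S$ of $G_{k,a}$ (via Lemma~\ref{L:tw-sep}), classify each block $B_v$ as entirely on one side, entirely on the other, or partially cut by $S$, and exploit $|B_v|=a^2$ together with the balance condition to conclude that $\Omega(k)$ blocks lie on each side unless $|S|$ is already $\Omega(ka)$. Expansion of $H$ (Lemma~\ref{L:tw}) then forces $\Omega(k)$ edges of $H$ to cross this coarse bipartition of clean blocks, and since each such expander edge corresponds to a matching of size $a$ in $G_{k,a}$, severing it costs $\Omega(a)$ vertices of $S$, yielding $|S|=\Omega(ka)$ in either case.

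The heart of the argument is the reduction from CSPs with primal graph $G_{k,a}$ and domain $D$ to Multiway Cut. I would replace each variable by a cross gadget of parameter $\Delta=|D|$ (Lemma~\ref{L:marx:gadgets}) and identify sides of adjacent gadgets as in Proposition~\ref{P:multicut1}. The new ingredients compared to Proposition~\ref{P:multicut1} are: (i)~within each block $B_v$, instead of collapsing the four corners of the $a\times a$ array of gadgets into only four global terminals, I adopt the planar many-terminal arrangement of Marx's planar Multiway Cut reduction~\cite{m-tlbpmc-12}, equipping each block with $O(a^2)$ local terminals around its boundary that force any minimum-weight multiway cut restricted to the block to represent a valid tiling of that block; (ii)~between blocks, the matching identifications along each of the $O(k)$ expander edges are routed through one handle of the surface, exactly as in Proposition~\ref{P:multicut1}. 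This produces a Multiway Cut instance on $n=\poly(t,|D|)$ vertices, genus $O(k)=O(g)$, and $O(ka^2)=O(t)$ terminals. Correctness follows by combining the cross-gadget matching argument of Proposition~\ref{P:multicut1} between blocks with the within-block correctness of~\cite{m-tlbpmc-12}.

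Putting it together, a hypothetical Multiway Cut algorithm running in time $O(n^{\alpha_\MCtwo\sqrt{gt+t}/\log(g+t)})$ would decide CSPs on $G_{k,a}$ in time $O(|D|^{O(\alpha_\MCtwo\sqrt{gt}/\log(g+t))})$. Since $\sqrt{gt+t}=\Theta(\sqrt{gt})=\Theta(\tw(G_{k,a}))$ and $\log(g+t)=\Theta(\log\tw(G_{k,a}))$ in the regime $t\ge 48(g+1)$, picking $\alpha_\MCtwo$ small enough relative to $\alpha_\CSP$ contradicts Theorem~\ref{T:beattreewidth}. The principal obstacle is the reduction itself: orchestrating the planar many-terminal arrangement of~\cite{m-tlbpmc-12} within blocks with the surface-handle identifications across blocks, and verifying that neither the terminal count nor the genus inadvertently exceeds $O(t)$ or $O(g)$; the $48(g+1)$ threshold on $t$ is precisely what is needed so that $a$ is large enough for the within-block planar gadget construction to behave as in the planar case.
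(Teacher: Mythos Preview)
Your overall plan---blow up a $\Theta(g)$-vertex bipartite expander by $\Theta(\sqrt{t/g})$-grids to get a four-regular graph $P$ of genus $\le g$ with $\Theta(t)$ vertices and treewidth $\Theta(\sqrt{gt})$, then reduce binary CSP with primal graph $P$ to \textsc{Multiway Cut} with $O(t)$ terminals and invoke Theorem~\ref{T:beattreewidth}---is exactly the paper's; your separator-based treewidth argument is Lemma~\ref{L:ght} there.

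The gap is in the reduction itself. A cross gadget $G_S$ carries a set $S\subseteq[\Delta]^2$ and transmits one coordinate $L$--$R$ and the other $U$--$D$; replacing each CSP variable by a single cross gadget and identifying adjacent sides encodes only the \emph{tiling} compatibility condition (equality of one coordinate across an edge), not an arbitrary binary relation. Both Proposition~\ref{P:multicut1} and~\cite{m-tlbpmc-12}, which you invoke for correctness, start from \textsc{Graph Tiling}/\textsc{Grid Tiling} for precisely this reason---yet you explicitly bypass tiling and reduce directly from CSP, so your gadget layer has nowhere to put the relations. The paper handles this by giving each vertex of $P$ a cross gadget with $S=\Diag([\Delta])$ (so a variable carries a single value in both directions), and inserting for each edge $e=uv$ of $P$ a chain of three further cross gadgets $G_S^1(e),G_S(e),G_S^2(e)$: the outer two with $S=[\Delta]^2$ serve as routers, and the middle one has $S$ equal (up to reversals) to the binary relation on~$e$, with $u$'s value entering horizontally and $v$'s vertically (Figure~\ref{F:chains}). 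Terminals are then placed at \emph{every} corner of \emph{every} gadget, giving $12|V(P)|\le t$ terminals. On that note, a minor inaccuracy: the $O(a^2)$ terminals per block sit at all interior corners of the $a\times a$ array of gadgets, not ``around its boundary''---this is what forces each individual gadget to contain its own multiway cut. Once the edge-chain mechanism is in place, your two-layer narration (planar within blocks, handles between blocks) coincides with what the paper does more cleanly by embedding $P=H^\delta$ directly on the genus-$g$ surface and performing one uniform gadget replacement there.
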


\begin{figure}
    \centering
    \def\svgwidth{\textwidth}
    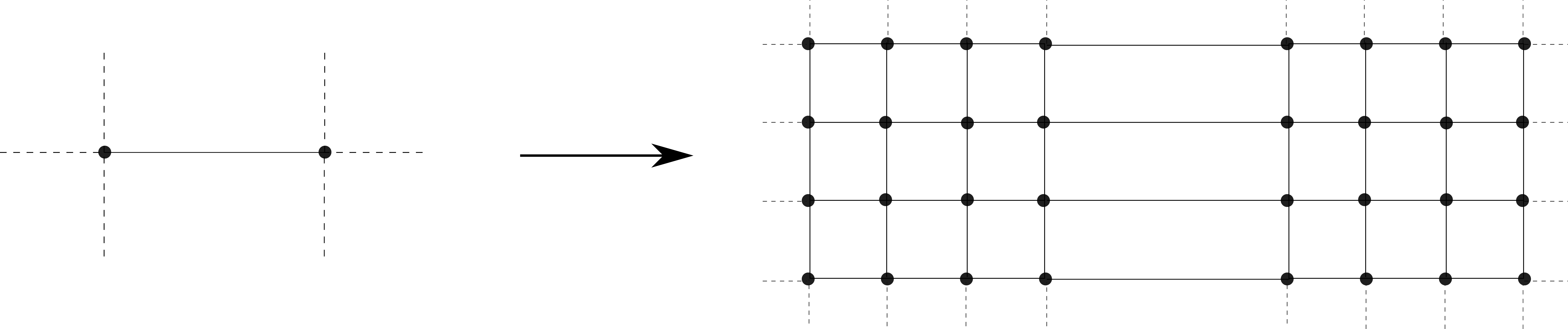
    \caption{The construction of $G_\delta$ for $\delta=4$.}
    \label{F:blowup}
\end{figure}

In order to prove this proposition, we will blow up expander graphs by replacing their vertices by grids, a construction similar to the one in Gilbert, Hutchinson and Tarjan~\cite{ght-stgbg-84}. Given a four-regular graph cellularly embedded on a surface $\surf$, we define $G^\delta$ in the following way (see Figure~\ref{F:blowup}).  We replace each vertex of $G$ with a $(\delta\times\delta)$-grid.  The boundary of the grid is thus made of $4\delta-4$ vertices, which we divide into four contiguous segments of $\delta$ vertices starting from a corner, in clockwise order along the boundary of the grid (corners appear in two segments). For each edge $e$ of $G$ between two adjacent vertices $u$ and $v$, we blow up $e$ in the following way: we place an edge between the $i$th vertex of a segment of $u$ and the $(\delta-i+1)$st vertex of a segment of $v$. The segments are selected in such a way that the cyclic ordering of the four blown-up half-edges around a grid replacing a vertex $u$ corresponds to the cyclic ordering of the four half-edges around $u$ that is prescribed by the cellular embedding. This new graph $G^\delta$ is also cellularly embedded on~$\surf$, since the embedding of $G$ can be ``thickened'' to an embedding of $G^\delta$. Moreover, $G^\delta$ clearly has $\delta^2\cdot|V(G)|$ vertices.

The properties of $G^\delta$ that we are interested in are summarized in the following lemma.
\begin{lemma}\label{L:ght}
  Let $G$ be a four-regular graph cellularly embedded on a surface~$\surf$, such that $\lambda(G)/4<c<1$ for some universal constant $c$, and let~$\delta$ be a positive integer.  The graph $G^\delta$ can also be cellularly embedded on~$\surf$, has $\delta^2\cdot|V(G)|$ vertices and treewidth $\Omega(\delta\cdot |V(G)|)$.
\end{lemma}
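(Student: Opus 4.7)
First I would handle the easy parts: the cellular embedding of $G^\delta$ on $\surf$ is obtained by thickening the given embedding of $G$, replacing each vertex $v$ with its grid $B_v$ (drawn inside a small disk around $v$) and each edge with a thin ribbon carrying the $\delta$ parallel matching edges; all resulting faces---interior grid squares, bigons between consecutive matching edges, and modified copies of the original faces of $G$---are disks, so the embedding is cellular. The vertex count $\delta^2|V(G)|$ is immediate from the construction.

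For the treewidth bound, writing $n := |V(G)|$, the plan is to apply Lemma~\ref{L:tw-sep} contrapositively and argue that every $1/2$-separator $S$ of $G^\delta$ satisfies $|S| = \Omega(\delta n)$. Given such an $S$, I would extract a balanced partition $V(G^\delta) \setminus S = A \sqcup B$ with no edges between $A$ and $B$ and $|A|, |B| = \Omega(\delta^2 n)$, and may assume $|S| \leq \delta^2 n /10$ (else the bound is automatic). For each $v \in V(G)$ set $a_v = |A \cap B_v|$, $b_v = |B \cap B_v|$, $s_v = |S \cap B_v|$; since $G^\delta$ has no $A$-$B$ edges, $S \cap B_v$ is a vertex-separator of $A \cap B_v$ from $B \cap B_v$ inside the grid $B_v$. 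The main case split is: if $\Omega(n)$ vertices $v$ are \emph{mixed}, in the sense that $a_v, b_v \geq c_1 \delta^2$ for a small constant $c_1>0$, then standard grid isoperimetry forces $s_v = \Omega(\delta)$ for each such $v$, and summing gives $|S| = \Omega(\delta n)$; otherwise, up to a negligible error set, $V(G)$ partitions into $V_A$ (with $a_v$ close to $\delta^2$) and $V_B$ (with $b_v$ close to $\delta^2$), and the balance $|A|, |B| = \Omega(\delta^2 n)$ forces $|V_A|, |V_B| = \Omega(n)$.

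At this point I would combine expansion with gadget counting. The hypothesis $\lambda(G)/4 < c < 1$ together with the Cheeger inequality gives $|E_G(V_A, V_B)| = \Omega(n)$, hence $\Omega(\delta n)$ matching edges in $G^\delta$ between $V_A$-grids and $V_B$-grids. Each such matching edge is either incident to $S$ (at most $4|S|$ in total by $4$-regularity of $G^\delta$), has both endpoints in $A$, or both in $B$. The number of ``both in $A$'' matching edges incident to $w \in V_B$ is at most the number of $A$-vertices on the grid boundary $\partial B_w$, and to control this I expect to use the following \emph{grid boundary isoperimetric inequality}: for every $X \subseteq V(B_w)$ with $|X| \leq \delta^2/2$ and outer vertex boundary $\partial X$ in the grid, $|X \cap \partial B_w| \leq C |\partial X|$ for a universal constant $C$. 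Applied with $X = A \cap B_w$ (so $|\partial X| \leq s_w$) and summed over $w \in V_B$, this bounds the ``both in $A$'' count by $O(|S|)$; ``both in $B$'' is symmetric. Then $\Omega(\delta n) \leq 4|S| + O(|S|)$ closes the argument. The main obstacle is the boundary isoperimetric inequality, which I plan to establish by a peeling argument: for each side $\sigma$, $|X \cap \sigma| \leq |\partial X| + $ (number of full perpendicular lines of $X$), because each $X$-vertex on $\sigma$ either has its inward neighbor in $\partial X$ or extends into a full column; the constraint $|X| \leq \delta^2/2$ forces the algebraic relation $(\delta - m_c)(\delta - m_r) \geq \delta^2/2$ on the numbers $m_c, m_r$ of full columns and rows, which, combined with a row/column transition count (each mixed row or column contributes at least two to $|\partial X|$), lets one bound the ``full line'' correction by a constant multiple of $|\partial X|$.
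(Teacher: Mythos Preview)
Your approach is correct and genuinely different from the paper's. Both arguments ultimately feed the expander property of $G$, but the mechanisms differ. The paper takes a $1/4$-separator $C'$ of $G^\delta$ and \emph{projects} it to a vertex set $C\subseteq V(G)$ by thresholding: $u\in C$ iff $|gr(u)\cap C'|>\beta\delta$. Using only the grid separator lower bound (Lemma~\ref{L:grid}) applied to single grids $gr(u)$ and to unions $gr(u)\cup gr(v)$, it shows that $C$ is a $1/2$-separator of~$G$; the expander bound on separators of $G$ (Lemma~\ref{L:tw}) then forces $|C'|=\Omega(\delta n)$. Your argument instead stays inside $G^\delta$, splits into the ``many mixed grids'' case (handled by grid isoperimetry) and the ``$V_A$ vs.\ $V_B$'' case, where you invoke Cheeger on $G$ to produce $\Omega(\delta n)$ crossing matching edges and then charge the non-$S$ ones back to $|S|$ via your grid \emph{boundary} isoperimetric inequality.

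The paper's route is cleaner and more modular: the only combinatorial input about grids is the standard balanced-separator lower bound, and the reduction ``separator of $G^\delta$ $\Rightarrow$ separator of $G$'' is a one-line thresholding. Your route trades that simplicity for a direct edge-counting argument; the price is the boundary inequality $|X\cap\partial B_w|\le C|\partial X|$ for $|X|\le\delta^2/2$, which is correct (your full-line analysis with $(\delta-m_c)(\delta-m_r)\ge\delta^2/2$ does go through, giving $m_c+m_r\le 2\sqrt{2}\,|\partial X|$ and hence $C\le 10$) but is a nonstandard lemma you have to prove from scratch. Two small points to tighten when writing it up: make the case split quantitative (fix $\epsilon,c_1$ and treat separately the vertices with both $a_v,b_v<c_1\delta^2$, which have $s_v>(1-2c_1)\delta^2$ and are thus harmless), and note that corner vertices of $B_w$ carry two matching edges, so the bound on type-(b) edges picks up a factor~$2$.
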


To prove the previous lemma, we will use the following auxiliary, probably folklore, result.
\begin{lemma}\label{L:grid}
  Let $H$ be the $(\delta_1\times\delta_2)$-grid where $\delta_1\le\delta_2$, and let $\alpha<1$ be given.  There exists $\beta>0$, depending only on~$\alpha$ and~$\delta_2/\delta_1$, such that any $\alpha$-separator of~$H$ has more than~$\beta\delta_1$ vertices.
\end{lemma}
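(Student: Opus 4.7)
My plan is to prove the lemma by a direct isoperimetric-style row sweep, taking $\beta := (1-\alpha)/16$, a quantity depending only on $\alpha$ (so in particular only on $\alpha$ and $\delta_2/\delta_1$).

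First, I bipartition the complement of the separator into two edge-disjoint pieces. Greedily distribute the connected components of $H - C$ between two groups $A$ and $B$, adding each next component to the currently lighter side; since every component has at most $\alpha\delta_1\delta_2$ vertices, this produces $||A|-|B|| \leq \alpha\delta_1\delta_2$. Either $|C| \geq (1-\alpha)\delta_1\delta_2/2 \geq (1-\alpha)\delta_1/2$ already (using $\delta_2\geq\delta_1$), in which case we are done, or else $|A|,|B| \geq (1-\alpha)\delta_1\delta_2/4$. Crucially, no edge of $H$ joins $A$ to $B$, since $C$ separates them and each component of $H-C$ is entirely in one side.

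The second step is a row analysis. Set $a_i := |A \cap \mathrm{row}_i|$ and $b_i := |B \cap \mathrm{row}_i|$. If every row satisfies both $a_i > 0$ and $b_i > 0$, then since each row of $H$ is a path and $C$ must separate $A$ from $B$ inside that path, each row contributes at least one vertex of $C$, immediately giving $|C| \geq \delta_1$. Otherwise some row has $a_i = 0$ (the case $b_i = 0$ is symmetric). A straightforward double-count on the $\delta_2$ edges of $H$ between consecutive rows $i$ and $i+1$ shows that at least $|a_i - a_{i+1}|$ of them cross $A$ versus $V(H)\setminus A$; telescoping along the $\delta_1$ rows yields
\[
|E(A, V(H)\setminus A)| \;\geq\; \sum_i |a_i - a_{i+1}| \;\geq\; \max_i a_i - \min_i a_i \;=\; \max_i a_i \;\geq\; |A|/\delta_1 \;\geq\; (1-\alpha)\delta_1/4,
\]
where the last step uses $\delta_2 \geq \delta_1$.

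Finally, since $A$ and $B$ are edge-disjoint in $H$, every edge leaving $A$ lands in $C$, and each vertex of $C$ has degree at most $4$, so $|E(A, V(H)\setminus A)| = |E(A, C)| \leq 4|C|$. Combining gives the desired $|C| \geq (1-\alpha)\delta_1/16$. The only mildly delicate point is verifying that the greedy balancing step produces two sufficiently large sides, which follows immediately from the $\leq \alpha\delta_1\delta_2$ bound on the maximum component size guaranteed by the $\alpha$-separator property; the row-difference inequality and the degree-$4$ conversion between the edge boundary and $|C|$ are routine.
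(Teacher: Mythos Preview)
Your argument is correct, but it takes a substantially different route from the paper's. The paper proves the contrapositive in two lines: if $|C|\le\beta\delta_1$ with $\beta<\min\{1,(1-\alpha)\delta_2/\delta_1\}$, then at most $\beta\delta_1<\delta_1$ rows meet~$C$, so some row is entirely free of~$C$; likewise at least $\delta_2-\beta\delta_1$ columns are free of~$C$. That free row links all those free columns into one connected piece of $H-C$ with at least $\delta_1(\delta_2-\beta\delta_1)>\alpha\delta_1\delta_2$ vertices, contradicting the $\alpha$-separator property. No isoperimetry, no balancing, no edge-to-vertex conversion.

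Your approach instead runs a genuine vertex-isoperimetric argument: balance the components of $H-C$ into two sides $A,B$, then lower-bound the edge boundary of~$A$ via a row-sweep telescoping, and finally convert the edge boundary into a vertex bound on~$C$ using the degree-$4$ cap. All the steps check out (including the greedy balancing, which indeed gives $\bigl||A|-|B|\bigr|\le\alpha\delta_1\delta_2$ and hence $\min(|A|,|B|)>(1-\alpha)\delta_1\delta_2/4$ once $|C|<(1-\alpha)\delta_1\delta_2/2$; in fact that strict inequality propagates and gives you the ``more than~$\beta\delta_1$'' in the statement rather than just ``at least''). One small bonus of your route is that your $\beta=(1-\alpha)/16$ depends only on~$\alpha$, not on the aspect ratio $\delta_2/\delta_1$, so you actually prove something slightly more uniform than the lemma asks for. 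Conversely, the paper's choice $\beta$ close to $(1-\alpha)\delta_2/\delta_1$ is a sharper constant when the grid is very elongated, and the argument is considerably shorter.
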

\begin{proof}
  So let $H$ have $\delta_1$ rows and~$\delta_2$ columns.
  Let $\beta$ be chosen such that $0<\beta<\min\{1,(1-\alpha)\delta_2/\delta_1\}$.  Let $C$ be a subset of vertices of~$H$ of size at most $\beta\delta_1$.  Thus, 
  at least one row and $\delta_2-\beta\delta_1$ columns do not intersect~$C$, and there is a connected component of $H-C$ of size at least $\delta_1(\delta_2-\beta\delta_1)>\alpha\delta_1\delta_2$, and $C$ is not an $\alpha$-separator.
\end{proof}

\begin{proof}[Proof of Lemma~\ref{L:ght}]
  Following the construction, and by Lemma~\ref{L:tw}, there just remains to prove that $\tw(G^\delta)=\Omega(\delta\cdot \tw(G))$.  The approach mimicks the lower bound proof of Gilbert, Hutchinson, and Tarjan~\cite{ght-stgbg-84} and adapts it to the case of four-regular graphs. Their initial proof is for three-regular graphs.  Let $G':=G^\delta$.  For each vertex~$v$ of the original graph~$G$, we denote by $gr(u)$ the associated $(\delta\times\delta)$-grid in~$G'$.

  Most of the proof consists in proving that if $G'$ has a $1/4$-separator of size~$k$, then $G$ has a $1/2$-separator of size~$O(k/\delta)$.  For this purpose, we first note, by Lemma~\ref{L:grid}, the existence of a universal constant $\beta>0$ such that:
  \begin{itemize}
  \item for every vertex~$u$ of~$G$, every $1/2$-separator of~$gr(u)$ has more than $\beta\delta$ vertices, and
  \item for every edge~$uv$ of~$G$, every $3/4$-separator of~$G'[gr(u)\cup gr(v)]$ has more than $2\beta\delta$ vertices.
  \end{itemize}
  Let $C'$ be a $1/4$-separator for~$G'$.  Let $C$ be the set of vertices~$u$ of~$G$ such that $gr(u)$ contains more than $\beta\delta$ vertices of~$C'$.  We will prove that $C$ is a $1/2$-separator of $G$ of size $O(|C'|/\delta)$.

  Let $u$ be a vertex of~$G-C$.  By definition of $\beta$ and~$C$, we have that $gr(u)\cap C'$ is not a $1/2$-separator of~$gr(u)$, and thus the strict majority of the vertices of~$gr(u)$ lie in a single connected component of~$G'-C'$.  Thus, let $A'_1,\ldots,A'_r$ be the connected components of~$G'-C'$; for $i=1,\ldots,r$, let $A_i$ be the set of vertices~$u$ of~$G-C$ such that the strict majority of the vertices of $gr(u)$ lie in~$A'_i$.  We have that $C,A_1,\ldots,A_r$ form a partition of the vertices of~$G$.

  We first argue that $C$ separates all pairs $A_i,A_j$, $i \neq j$.  Assume to the contrary that there are adjacent vertices $u$ and~$v$ of~$G$ such that $u\in A_i$ and~$v\in A_j$.  By definition of $A_i$ and~$A_j$, neither $gr(u)$ nor~$gr(v)$ contains more than $\beta\delta$ vertices of $C'$.  Thus $G'[gr(u) \cup gr(v)]$ contains at most $2\beta\delta$ vertices of $C'$.  Thus, by definition of~$\beta$, a fraction larger than $3/4$ of its vertices of~$G'[gr(u) \cup gr(v)]$, namely more than $3\delta^2/2$ vertices, are in a single component of~$G'-C'$.  It follows that both $gr(u)$ and~$gr(v)$ have more than $\delta^2/2$ vertices in that component, so they are both in the same set $A'_i$, a contradiction.

  We then show that each $A_i$ is small.  Let $n$ be the number of vertices of~$G$.  Recall that (i) $|V(A'_i)|\le\delta^2n/4$, because $C'$ is a $1/4$-separator, and (ii) $|V(A'_i)|>|V(A_i)|\cdot\delta^2/2$, by definition of~$A_i$.  These two inequalities imply $|V(A_i)|\le n/2$.

  Hence $C$ is a $1/2$-separator.  Finally, by definition of~$C$, we have $|C'|>\beta\delta|C|$.  Thus $|C|=O(|C'|/\delta)$.

  We have proved that if $G'$ has a $1/4$-separator of size $k$, then $G$ has a $1/2$-separator of size $O(k/\delta)$.  On the other hand, since $\lambda(G)/4<c<1$, by Lemma~\ref{L:tw}, there exists a constant~$c'>0$ such that $G$ has no $1/2$-separator of size at most $c'|V(G)|$.  Therefore, there exists a constant $c''>0$ such that $G'$ does not have a $1/4$-separator of size at most $c'' \delta |V(G)|$.  This implies that there exists a constant $c'''>0$ such that $G'$ has treewidth at least $c'''\delta|V(G)|$. Indeed, otherwise, by Lemma~\ref{L:tw-sep} we could find a $1/2$-separator of $G'$ of size at most $c'''\delta|V(G)|+1$, and by applying it again on at most two connected components and taking the union of the separators, for a small enough $c'''$ we would find a $1/4$-separator of $G'$ of size at most $c'' \delta|V(G)|$, a contradiction. Therefore $G'$ has treewidth $\Omega(\delta \cdot |V(G)|)$, which concludes the proof. 
\end{proof}

The previous considerations are used in the following lemma, which essentially provides a four-regular graph with $\Theta(t)$ vertices, genus at most~$g$, and treewidth $\Theta(\sqrt{gt})$.  Intuitively, if we start with an expander~$G$ of size~$g$ and set $\delta=\sqrt{t/g}$, then $G^\delta$ has $\Theta(t)$ vertices and treewidth $\Theta(\sqrt{gt})$.

\begin{lemma}\label{L:graphp}
  There is a universal constant $c_\tw$ such that, for every choice of $g\geq0$ and $t\geq 48(g+1)$, there is a four-regular graph~$P$ embeddable on a surface of genus~$g$ such that $|V(P)|\le t/12$ and $\tw(P)\ge\max\{2,c_\tw\sqrt{gt+t}\}$.
\end{lemma}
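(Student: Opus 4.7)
I construct $P$ as a blow-up $P := H^\delta$ of a well-chosen four-regular base graph $H$ via Lemma~\ref{L:ght}, with the blow-up parameter $\delta$ tuned so that $|V(P)| \le t/12$ while $\tw(P)$ reaches the required order $\sqrt{gt+t}$. I split according to the value of~$g$ relative to a universal threshold $\bar g$ to be chosen.

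For $g \le \bar g$, I take as base a fixed four-regular planar graph $H$ with $\lambda(H)/4 < 1$, for instance the octahedron $K_{2,2,2}$ (whose adjacency spectrum is $\{4,0,0,0,-2,-2\}$, so $\lambda(H)=0$). Embedding $H$ cellularly in the plane and applying Lemma~\ref{L:ght} with $\delta := \lfloor \sqrt{t/(12|V(H)|)} \rfloor$ yields a planar four-regular $P = H^\delta$ with $|V(P)| = \delta^2 |V(H)| \le t/12$ and $\tw(P) = \Omega(\delta|V(H)|) = \Omega(\sqrt{t})$. Planarity makes $P$ embeddable on every orientable surface, and since $\sqrt{gt+t} = O(\sqrt{t})$ for $g \le \bar g$ (the hidden constant depending only on $\bar g$), the treewidth bound has the required form.

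For $g > \bar g$, I use the dense family from Lemma~\ref{L:existence} to pick a four-regular bipartite expander $H$ with $|V(H)| = \Theta(g)$ and $|V(H)| \le g$: this is achievable by requesting the member at scale $\lfloor g/c_2 \rfloor$, where $c_2$ is the upper-bound constant hidden in the density condition, provided $g$ is large enough. By Euler's formula, a connected four-regular graph on $V$ vertices admits a cellular embedding on a surface of genus at most $(V+1)/2$, so $H$ embeds cellularly on a surface of genus at most $(g+1)/2 \le g$ for $g \ge 1$. I then apply Lemma~\ref{L:ght} with $\delta := \lfloor \sqrt{t/(12|V(H)|)} \rfloor$; the hypothesis $t \ge 48(g+1) \ge 48|V(H)|$ forces $\delta \ge 2$ and $\delta = \Theta(\sqrt{t/g})$. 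The blow-up $P = H^\delta$ embeds on the same surface, has $|V(P)| \le t/12$, and using the uniform spectral gap $\lambda(H)/4 < c_{\textup{exp}}$ of the family, Lemma~\ref{L:ght} gives $\tw(P) = \Omega(\delta|V(H)|) = \Omega(\sqrt{t/g}\cdot g) = \Omega(\sqrt{gt})$; since $\sqrt{gt+t} \le \sqrt{2gt}$ for $g \ge 1$, this is $\Omega(\sqrt{gt+t})$.

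Taking $c_\tw$ to be the minimum of the constants produced in the two regimes (and small enough that $c_\tw \sqrt{48} \ge 2$ is not required, since I just need $c_\tw\sqrt{gt+t} \ge 2$ to be the binding constraint or alternatively to accept the $\max\{2,\cdot\}$ in the statement) gives the universal constant. The main technical obstacle is Regime~2, where the genus accounting via Euler's formula must cleanly guarantee $g_{\min}(H) \le g$; a secondary nuisance is a handful of corner cases in Regime~1 where $\delta$ degenerates to~$0$ or $1$ for very small $t$, but only finitely many $(g,t)$ pairs with $g \le \bar g$ and, say, $t < 12|V(H)|$ arise, and each is handled by exhibiting a small fixed four-regular planar graph of treewidth at least~$2$ (e.g., a three-vertex four-regular planar multigraph), exactly as in Remark~\ref{R:constants}.
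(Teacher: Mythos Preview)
Your proof is correct and follows essentially the same two-regime blow-up strategy as the paper. The only notable difference is that for small~$g$ the paper takes as base graph the one-vertex-two-loops multigraph, so that $\delta=\lfloor\sqrt{t/12}\rfloor\ge2$ automatically whenever $t\ge48$ and no corner-case patching is needed; and for large~$g$ the paper bounds the genus via the crude ``genus $\le |E(H)|$'' fact (with $|V(H)|\le g/2$) rather than your Euler-formula computation.
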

\begin{proof}
  Consider the family of expanders $\mathcal{H}$ defined in Lemma~\ref{L:existence} and let $\mu\ge1$ be the number of vertices of the smallest graph in that family. Then, by Lemma~\ref{L:existence}, for a small enough universal constant $c>0$, for any $g\ge\mu/c$, there exists an expander $H\in \mathcal{H}$ such that $c\cdot (g+1) \leq|V(H)|\leq g/2$.
  
  Consider $g$ and~$t$ as in the lemma; assume first that indeed $g\ge\mu/c$, and let $H$ be as indicated.  Since $H$ is four-regular, it has at most $g$ edges, so its genus is at most~$g$.  Let $\delta$ be $\lfloor \sqrt{t/(12|V(H)|)}\rfloor$.  Since $t\ge 48(g+1)\ge 96|V(H)|$, we have $\delta\ge2$ and $\delta=\Theta(\sqrt{t/(g+1)})$. Moreover, let $P:=H^\delta$ (this construction depends on a choice of a cellular embedding for $H$; we choose an arbitrary one on a surface of genus at most $g$).  Like~$H$, the graph~$P$ is embeddable on the surface of genus~$g$. By the choice of~$\delta$, the four-regular graph $P$ has $\Omega(t)$ vertices, but at most $t/12$ vertices.  The graph~$H$ has treewidth $\Omega(g+1)$ by Lemma~\ref{L:tw}, so Lemma~\ref{L:ght} implies that $P$ has treewidth $\Omega(\delta (g+1))=\Omega(\sqrt{gt+t})$.  Moreover, this treewidth is at least two, because $P$ contains a $\delta\times\delta$-grid.  This proves the lemma if $g\ge\mu/c$.

  Let now $g$ and~$t$ be as in the lemma, with $g<\mu/c$.  We let $H$ be the planar graph with one vertex and two loops, and let $\delta$ be equal to $\lfloor \sqrt{t/12}\rfloor$.  Note that $t\ge 48(g+1) \geq 48$; thus, $\delta\ge2$.  The graph $P := H^\delta$ has $\delta^2\le t/12$ vertices; it contains a $\delta\times\delta$-grid; hence, it has treewidth at least~$\delta$, which is both at least two and $\Omega(\sqrt{gt+t})$. Moreover, $P$ is planar and thus embeds on a surface of genus at most $g$.  This proves the lemma if $g<\mu/c$.
\end{proof}

Armed with Lemma~\ref{L:graphp}, we can now proceed to the proof of Proposition~\ref{P:multicut2}.
\begin{proof}[Proof of Proposition~\ref{P:multicut2}]
  The reduction is slightly different from the previous ones. It follows the same ideas, with the help of the previous lemmas, but in order to obtain exactly the lower bound, we will use directly the full strength of Theorem~\ref{T:beattreewidth} and not go through the \textsc{4-Regular Graph Tiling} problem.

  Let $P$ be a four-regular graph cellularly embedded on a surface~$\surf$ of genus at most~$g$.  We will make use of Theorem~\ref{T:beattreewidth} on CSPs whose primal graph is~$P$.  Given a binary CSP instance $I=(V,D,C)$ whose primal graph is~$P$, we can transform it in polynomial time into an equivalent instance $(G,T,\lambda)$ of \textsc{Multiway Cut} where $|T|=12|V(P)|$ and $G$ is cellularly embedded on~$\surf$ and has $|V(P)|\cdot\poly(|D|)$ vertices and edges, as follows.
\begin{figure}
\centering
\def\svgwidth{\textwidth}
\input{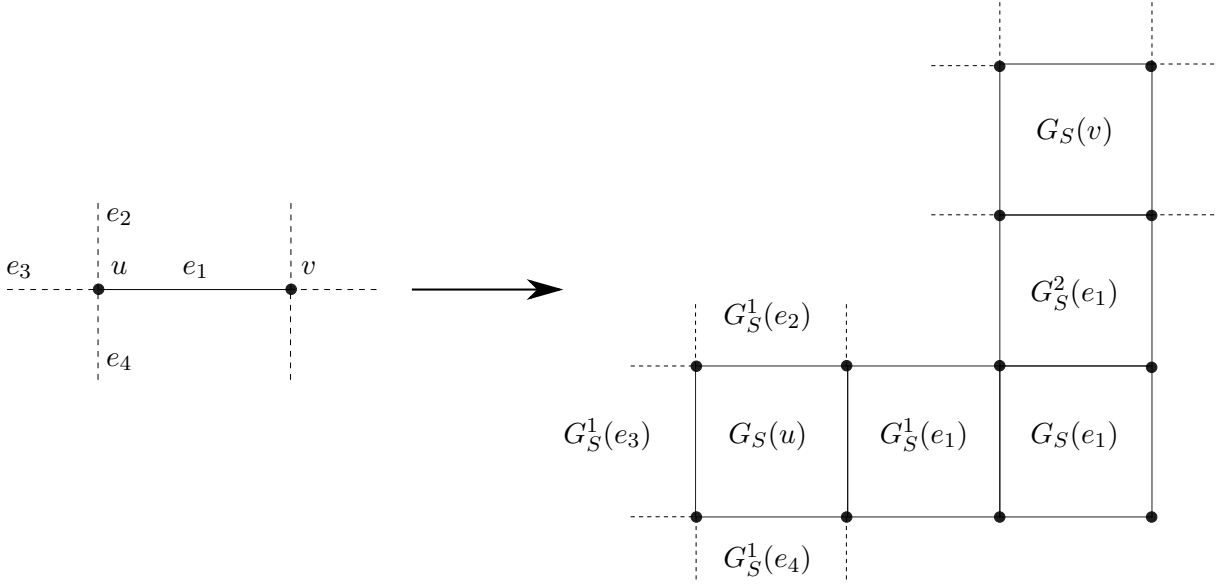}
\caption{Reduction from a binary CSP instance to a \textsc{Multiway Cut} instance.  The figure displays the gadgets corresponding to an edge $e_1=uv$ of the binary CSP instance.  Note that $G_S^1(e_1)$ is attached to an arbitrary side of~$G_S(u)$, and similarly $G_S^2(e_1)$ is attached to an arbitrary side of~$G_S(v)$.  What matters is that the gadgets adjacent to $G_S^1(e_1)$ (resp.\ $G_S^2(e_1)$) are on opposite sides of this gadget; and similarly that the gadgets adjacent to $G_S(e_1)$ are on non-opposite sides of this gadget.}
\label{F:chains}
\end{figure}
\begin{enumerate}
\item For each vertex~$v$ of~$P$, we create a cross gadget $G_S(v)$ where $\Delta=|D|$ and the subset~$S$ is $\Diag([\Delta]):=\{(x,x)\mid x\in[\Delta]\}$.
\item\label{edges} For each edge $e=uv$ of~$P$, we create a cross gadget $G_S(e)$ where $\Delta=|D|$, as well as two cross gadgets $G^1_S(e)$ and $G^2_S(e)$.  The cross gadgets $G_S(u)$ and $G_S(v)$ are connected via the three gadgets $G^1_S(e)$, $G_S(e)$ and $G^2_S(e)$ as pictured in Figure~\ref{F:chains}, i.e., $G_S(u)$ has an arbitrary side identified to the $L$ side of $G^1_S(e)$, of which the $R$ side is identified to the $L$ side of $G_S(e)$, of which the $U$ side is identified to the $D$ side of $G^2_S(e)$, which is finally identified via its $U$ side to an arbitrary side of~$G_S(v)$.  This is done in such a way that the cyclic ordering of the four cross gadgets around a cross gadget $G_S(u)$ corresponds to the cyclic ordering of the four edges around $u$ specified by the embedding of $P$. This way, the new graph $G$ is cellularly embedded on the same surface~$\surf$. 
\item For each gadget $G^1_S(e)$ or $G^2_S(e)$, the corresponding set $S$ is the unconstrained relation $[\Delta]\times[\Delta]$.  (These gadgets are here only to separate the gadgets corresponding to the vertices of~$P$, see below.)  For $G_S(e)$, the corresponding set $S$ corresponds to the relation on~$e$, but we need to take care of the orientations.  Formally, let $\iota$ be the identity permutation on~$[\Delta]$ and~$\rho$ be the reversing permutation on~$[\Delta]$ (namely, $\rho(i)=\Delta+1-i$).  If $G^1_S(e)$ is attached to $G_S(u)$ via the $R$ or $U$ side of~$G_S(u)$, then let $\sigma_e:=\iota$, otherwise, let $\sigma_e:=\rho$.  Thus, $\sigma_e$ encodes the permutation between the indices of the side of $G_S(u)$ and the side of $G_S(e)$ which are connected through $G_S^1(e)$.  Similarly, if $G^2_S(e)$ is attached to $G_S(v)$ via the $D$ or $L$ side of~$G_S(v)$, then let $\tau_e:=\iota$; otherwise, let $\tau_e:=\rho$.  Finally, the set~$S$ corresponding to gadget $G_S(e)$ contains $(i,j)$ if and only if $(\sigma_e(i),\tau_e(j))$ is in the relation corresponding to~$e$.
\item\label{terminals} We place terminals at every corner of every cross gadget.  Let $T$ be this set of terminals; by construction, $|T|=4(|V(P)|+|E(P)|)=12|V(P)|$.
\item We let $\lambda:=7|V(P)|D_1$.
\end{enumerate}

Note that at the end of Step~\ref{edges}, each cross gadget for a vertex has its four sides identified to another gadget, while each of the three cross gadgets for an edge has two sides left unidentified. Therefore, after Step~\ref{terminals}, a corner of a cross gadget lies in either two or three distinct cross gadgets, so every cross gadget has at its corners four distinct terminals.

Let us first assume that the CSP instance is satisfiable.  For each vertex~$v$ of~$P$, one can use the value $s_v$ to choose, using Lemma~\ref{L:marx:gadgets}(1), a multiway cut in $G_S(v)$ representing $(s_v,s_v)$.  Similarly, in the gadgets adjacent to~$G_S(v)$, we choose a multiway cut representing $(s_v,s_v)$.  Finally, in the gadget $G_S(e)$ where $e=uv$, we choose a multiway cut representing $(\sigma_e(s_u),\tau_e(s_v))$.  By construction, taking the union of all these sets of edges forms a multiway cut separating all the terminals in~$G$. Indeed, after removing the multiway cuts, the four terminals of each cross gadget lie in four different components. This remains the case after identifying the sides: The binary constraints of the CSP force the multiway cuts to match along two consecutive sides.  The multiway cut has weight exactly~$\lambda$ since it is the union of $7|V(P)|$ edge sets of weight at most~$D_1$.

Let us prove the other direction.  First, if $(G,T)$ admits a multiway cut of weight at most $\lambda=7|V(P)|D_1$, then each of the cross gadgets $G_S$ must admit a multiway cut (otherwise the four terminals would not be disconnected). Since there are $7|V(P)|$ of those, and by Lemma~\ref{L:marx:gadgets}(2), each of these multiway cuts has weight exactly $D_1$. Therefore, by Lemma~\ref{L:marx:gadgets}(3), each of them has to represent some $(i,j) \in S$.  The value on the gadgets $G_S(v)$ are of the form $(s_v,s_v)$; we let $s_v$ be the value for the binary CSP instance for variable~$v$. Furthermore, the multiway cuts need to match along consecutive sides, otherwise, the four terminals are not separated in one of the cross gadgets, as pictured in Figure~\ref{F:Newclaim}. Therefore, the binary constraints of the CSP instance are satisfied.

To summarize: For some universal constant $d\ge1$, given a binary CSP instance $I=(V,D,C)$ whose primal graph is a four-regular graph~$P$ cellularly embedded on a surface of genus at most~$g$, we obtain in time $O((|V||D|)^d)$ an equivalent instance $(G,T,\lambda)$ of \textsc{Multiway Cut} where $|T|\le12|V|$ and $G$ has $O((|V||D|)^d)$ vertices and edges and is embeddable on a surface of genus~$g$.

Assume that, for some fixed~$g$, and for some fixed but large enough $t\ge48(g+1)$, there is an algorithm~$\mathbb A$ that decides all the \textsc{Multiway Cut} instances $(G,T,\lambda)$ for which $G$ is embeddable on a surface of genus~$g$ and such that $|T|\le t$ in time $O(n^{(\alpha_\CSP c_\tw/d)\cdot \sqrt{gt+t}/\log(g+t)})$.  Let $P$ be obtained as in Lemma~\ref{L:graphp}, and consider a binary CSP instance $I=(V,D,C)$ whose primal graph is~$P$.  Since $|V(P)|\le t/12$ and $t$ is fixed, we can in constant time compute a cellular embedding of~$P$ on a surface of genus at most~$g$.  Using algorithm~$\mathbb A$ and the above reduction, we can decide this instance in time $O(|D|^d)+O(|D|^d)^{(\alpha_\CSP c_\tw/d)\cdot\sqrt{gt+t}/\log(g+t)}$ (for fixed $g$ and~$t$).  Since also $\tw(P)\ge c_\tw\sqrt{gt+t}$ and $\log(g+t) \geq \log \tw(P)$ (because $P$ has at most $t/12$ vertices), this is $O(|D|^{\alpha_{\CSP} \cdot \tw(P)/\log \tw(P)})$ provided $t$ is large enough. Finally, by construction, $P$ has treewidth at least two.  Hence by Theorem~\ref{T:beattreewidth}, the Exponential Time Hypothesis does not hold.

So, there exists an integer~$\bar t$ such that the following holds: If for some fixed~$g$, and for some fixed but large enough $t\ge\bar t$ and $t\ge48(g+1)$, there is an algorithm that decides all the \textsc{Multiway Cut} instances $(G,T,\lambda)$ for which $G$ is embeddable on a surface of genus~$g$ and such that $|T|\le t$ in time $O(n^{(\alpha_\CSP c_\tw/d)\cdot \sqrt{gt+t}/\log(g+t)})$, then the ETH does not hold.

Let $\alpha_\MCtwo>0$ be at most~$\alpha_{\CSP}c_\textup{tw}/d$ and such that $\alpha_\MCtwo\cdot\sqrt{gt+t}/\log(g+t)<1$ for each $g$ and~$t$ such that $\bar t>t\ge48(g+1)$.  There is a trivial, unconditional $\Omega(n)$ lower bound for \textsc{Multiway Cut}.  Thus, for each $g$ and~$t$ such that $\bar t>t\ge48(g+1)$, there is no algorithm deciding the \textsc{Multiway Cut} instances on a surface of genus~$g$ with $t$ terminals in time $O(n^{(\alpha_\MCtwo \cdot \sqrt{gt+t}/\log(g+t))})$.  Finally, if for some $g$ and~$t$ such that $t\ge48(g+1)$, there is an algorithm deciding the \textsc{Multiway Cut} instances on a surface of genus~$g$ with $t$ terminals in time $O(n^{(\alpha_\MCtwo \cdot \sqrt{gt+t}/\log(g+t))})$, then the ETH does not hold.
\end{proof}

\section{Proof of Theorem~\ref{T:mainmulticut}}\label{S:finish}

Finally, the proof of Theorem~\ref{T:mainmulticut} is obtained by using Proposition~\ref{P:multicut1} or Proposition~\ref{P:multicut2}, depending on the tradeoff between $g$ and $t$:

\begin{proof}[Proof of Theorem~\ref{T:mainmulticut}]
  Let $\alpha_\MC>0$.  Let us assume that for some fixed choice of integers $g\geq0$ and $t\geq 4$, there is an algorithm~$\mathbb A$ that solves the \textsc{Multiway Cut} instances $(G,T,\lambda)$ for which $G$ is embeddable on the orientable surface of genus~$g$ and $|T| \leq t$ in time $O(n^{\alpha_{\MC} \sqrt{gt + g^2+t}/\log(g+t)})$.  We distinguish according to two cases for the choice of $(g,t)$.
  \begin{itemize}
  \item If $t<48(g+1)$, then $\sqrt{gt+g^2+t} \leq \sqrt{97(g+1)^2}$ and thus $\mathbb A$ solves, in particular, the \textsc{Multiway Cut} instances where $G$ is embeddable on the surface of genus~$g$ and $|T| = 4$, in time $O(n^{\alpha_{\MC}\cdot10(g+1)/\log (g+2)})$, so Proposition~\ref{P:multicut1} implies that the ETH does not hold if $\alpha_\MC\le\alpha_\MCone/10$.
  \item If $t\ge48(g+1)$, then $\sqrt{gt+g^2+t}/\log (g+t)\leq \sqrt{2(gt+t)}/\log(g+t)$, and thus Proposition~\ref{P:multicut2} implies that the ETH does not hold if $\alpha_{\MC}\leq \alpha_{\MCtwo}/\sqrt{2}$.
  \end{itemize}
  Thus, assuming that $\alpha_\MC>0$ is small enough, the existence of~$\mathbb A$ implies that the ETH does not hold, as desired.
\end{proof}

\paragraph*{Acknowledgements} We are grateful to the anonymous reviewers for a careful reading of our manuscript and helpful remarks. The first and the fourth authors are partially supported by the French ANR project ANR-18-CE40-0004-01 (FOCAL). The second and the fourth  authors are partially supported by the French
    ANR project ANR-17-CE40-0033 (SoS)
and the French ANR project ANR-19-CE40-0014 (MIN-MAX). The third author is supported by ERC Consolidator Grant
  SYSTEMATICGRAPH (No.~725978). The fourth author is
  partially supported by the ANR project {ANR-16-CE40-0009-01}
  (GATO) and the CNRS PEPS project COMP3D. Parts of this work were realized
  when he was working at GIPSA-lab in Grenoble.

\bibliographystyle{plainurl}
\bibliography{biblio}

\end{document}